\let\@fnsymbol\@arabic
\newcommand{\id}{{\boldsymbol{\mathbbm{1}}}}
\newcommand{\tr}{{\rm tr}}
\newcommand{\dev}{{\rm dev}}
\newcommand{\sym}{{\rm sym}}
\newcommand{\skw}{{\rm skew}}
\newcommand{\Curl}{{\rm Curl}}
\newcommand{\Mprod}[2]{ { \bigl\langle  #1 ,#2\bigr\rangle } }
\newcommand{\norm}[1]{\|#1\|}
\def\dd{\displaystyle}
\newtheorem{theorem}{Theorem}[section]
\newtheorem{lemma}[theorem]{Lemma}
\newtheorem{remark}[theorem]{Remark}
\newtheorem{proposition}[theorem]{Proposition}
\def\barr{\begin{array}}
	\def\sk{\text{skew}}
	\def\earr{\end{array}}
\def\bec#1{\begin{equation}\label{#1}}
\def\becn{\begin{equation*}}
\def\endec{\end{equation}}
\def\endecn{\end{equation*}}
\def\dd{\displaystyle}
\def\bfm#1{\mbox{\boldmat}}
\begin{document}
	\title{The isotropic  Cosserat  shell model  including terms up to  $O(h^5)$. \\ Part I: Derivation in matrix notation}
	\author{  Ionel-Dumitrel Ghiba\thanks{Corresponding author: Ionel-Dumitrel Ghiba, \ \  Department of Mathematics, Alexandru Ioan Cuza University of Ia\c si,  Blvd.
			Carol I, no. 11, 700506 Ia\c si,
			Romania; and  Octav Mayer Institute of Mathematics of the
			Romanian Academy, Ia\c si Branch,  700505 Ia\c si, email:  dumitrel.ghiba@uaic.ro} \quad and \quad Mircea B\^irsan\thanks{Mircea B\^irsan, \ \  Lehrstuhl f\"{u}r Nichtlineare Analysis und Modellierung, Fakult\"{a}t f\"{u}r Mathematik,
	Universit\"{a}t Duisburg-Essen, Thea-Leymann Str. 9, 45127 Essen, Germany; and Department of Mathematics, Alexandru Ioan Cuza University of Ia\c si,   Blvd.
	Carol I, no. 11, 700506 Ia\c si,
	Romania;  email: mircea.birsan@uni-due.de}   \quad 
		and  \quad     Peter Lewintan\!\,\thanks{Peter Lewintan,  \ \  Lehrstuhl f\"{u}r Nichtlineare Analysis und Modellierung, Fakult\"{a}t f\"{u}r
		Mathematik, Universit\"{a}t Duisburg-Essen,  Thea-Leymann Str. 9, 45127 Essen, Germany, email: peter.lewintan@uni-due.de}\\  
		and   \quad   Patrizio Neff\,\thanks{Patrizio Neff,  \ \ Head of Lehrstuhl f\"{u}r Nichtlineare Analysis und Modellierung, Fakult\"{a}t f\"{u}r
			Mathematik, Universit\"{a}t Duisburg-Essen,  Thea-Leymann Str. 9, 45127 Essen, Germany, email: patrizio.neff@uni-due.de}
	}

\maketitle
\begin{abstract}
We present a new geometrically nonlinear Cosserat shell model incorporating effects up to order $O(h^5)$ in the  shell thickness $h$. The method that we follow  is an educated 8-parameter ansatz for the three-dimensional elastic shell deformation with attendant analytical thickness integration, which leads us to obtain completely two-dimensional sets of equations in variational form. We give an explicit form of the curvature energy using the orthogonal Cartan-decomposition  of the wryness tensor. Moreover,  we consider the  matrix representation of all tensors in the  derivation of the variational formulation, because this is   convenient when the problem of existence is considered, and it is also preferential for numerical simulations. The step by step construction allows us  to give a transparent approximation of the three-dimensional parental problem. The resulting 6-parameter isotropic shell model  combines membrane, bending and curvature effects at the same time. The Cosserat shell model naturally includes a frame of orthogonal directors, the last of which does not necessarily coincide with the normal of the surface. This rotation-field is coupled to the shell-deformation and augments the well-known Reissner-Mindlin kinematics (one independent director) with so-called in-plane drill rotations, the inclusion of which is a decisive for subsequent numerical treatment and existence proofs.   As a major novelty, we determine the constitutive coefficients of the Cosserat shell model in dependence on the geometry of the shell which are otherwise difficult to guess.

\medskip

\noindent\textbf{Keywords:}
	geometrically nonlinear Cosserat shell, 6-parameter resultant shell, in-plane drill
	rotations, thin structures, dimensional reduction, Cosserat elasticity, wryness
	tensor, dislocation density tensor, isotropy
\end{abstract}

\tableofcontents

\section{Introduction}

The theory of shells is an important branch of the theory of deformable solids. Its importance resides in the multitude of applications that can be investigated using shell models. In general, the shell and plate theories are intended for the study of \textit{thin bodies}, i.e.,\ bodies in which the thickness in one direction is much smaller than the dimensions in the other two orthogonal directions. As typical examples for shells we mention: roofs of buildings in civil engineering, vehicle bodies in automotive industry, components of wings and propellers  in aerospace industry, cell walls and biological membranes.

The large variety of shell-type structures, such as laminated or functionally graded shells made of advanced materials, like polymer foams or cellular materials, as well as the need to fabricate three-dimensional micro- and nanostructures of various shapes leads to
the necessity of elaborating new adequate models to describe their mechanical behavior.
The process of development of various shell theories is far from  being finalized,
due to the continuous emergence of new technologies in connection with shell manufacturing. For instance, the emerging need to simulate the mechanical response of highly flexible ultra thin structures (allowing easily for finite rotations) and nano-scale thin structures excludes the use of classical infinitesimal--displacement models, either of Reissner--Mindlin or Kichhoff--Love type. In particular, graphene sheets consisting of monolayer atomic arrangement have non-vanishing  ``bending stiffness''. This is at odds with classical thin shell theory, in which, for $h\rightarrow 0$, the bending stiffness should be absent (since the bending terms scale with $h^3$ against $h$ for the membrane strains). Instead, in an extended continuum model like the \textit{Cosserat-type shell model}, there is a ``curvature stiffness'' surviving for $h\rightarrow 0$ related to a characteristic size ${L}_{\rm c}>0$ (internal length parameter, which is related to the microstructure and in principle independent of the thickness).

\subsection{Different approaches to shell modelling}

There are several alternative ways to describe the mechanical behavior of shells and to derive the two-dimensional field equations. The method used in our paper is the so-called \textit{derivation approach}. It starts from a given three-dimensional model of the body and reduces it via physically reasonable constitutive assumptions on the kinematics to a two-dimensional model (i.e.,\ \emph{dimensional reduction}). The philosophy behind the derivation approach is expressed clearly by the grandmaster W.T.\ Koiter as follows \cite[p.~93]{Koiter69}: ``Any two-dimensional theory of thin shells is necessarily of an
{\it approximate character}. An exact two-dimensional theory of shells cannot exist, because the actual body we have to deal with, thin as it may be, is always three-dimensional. [\ldots] Since the theory we have to deal with is approximate in character, we feel that extreme
rigour in its development is hardly desirable. [\ldots] {\it Flexible bodies} like thin shells require a {\it flexible approach}.'' We mention also that the  ``rationale'' of descend from three to two dimensions should always be complemented by an investigation of the intrinsic mathematical properties of the lower dimensional models.

This procedure is opposed to both the \textit{intrinsic approach} and the \textit{asymptotic methods}.  The intrinsic approach is  related to the \textit{derivation approach} but it takes the shell a priori to be a two-dimensional surface (appropriate for modelling graphene sheets, with virtually zero thickness) with additional extrinsic directors in the sense of a deformable Cosserat surface \cite{Cosserat08,Cosserat09,Altenbach-Erem12}. There, two-dimensional equilibrium in appropriate resultant stress and strain variables is postulated ab-initio more or less independently of three-dimensional considerations.
For instance, in the shell model elaborated by Zhilin and Altenbach \cite{Zhilin76,Altenbach04}  each point of the surface is endowed with a triad of orthonormal vectors (called \emph{directors}), which specify the orientation of the material points and describe the \emph{rotations} of shell filaments. Several interesting applications of this approach (also called \emph{directed surfaces}) have been presented in  \cite{Birsan-Alten-2011}. In this approach, the  constitutive parameters appearing in the a priori chosen two-dimensional energy are obtained by fitting the solutions of some specific problems with the solutions obtained by considering the shell  as a three-dimensional body. However, this fitting is usually done only for linear problems, since one reason of being of the nonlinear shell model is that even the classical three-dimensional problem is difficult to be solved for nonlinear problems.  Therefore, it is still missing a complete description of the dependence of the constitutive parameters of the reduced nonlinear two-dimensional problem on the constitutive parameters of the  parental nonlinear three-dimensional problem and on the  mean-curvature ${\rm H}$ and  Gau\ss-curvature ${\rm K}$  of the shell's initially stress-free midsurface. It is possible that using the intrinsic  approach, certain three-dimensional effects may be missing in the derived shell model. Another approach which is also related to the derivation approach is the {\it uniform-approximation technique}, mostly motivated by engineering intuition \cite{wang2019uniformly,taylor2019finite,bose2006material,kienzler2002consistent,kienzler2013theories,birsan2020derivation,shirani2019cosserat}. It   uses polynominal expansions in the
thickness direction both for the displacements and for the stresses  and then it truncates the series expansions.

The procedure of the \textit{asymptotic methods} is  to establish two-dimensional equations by formal expansion of the three-dimensional solutions in power series in terms of a small thickness parameter.
Using the asymptotic methods, a thorough mathematical analysis of linear, infinitesimal displacement shell theory  is presented in \cite{Ciarlet00}
and the extensive references therein. Properly invariant, geometrically nonlinear elastic shell theories are derived by formal asymptotic methods in \cite{Miara98a,Miara98b}, see also \cite{Chapelle-JE-14}.
 The various shell models based on linearized three-dimensional elasticity proposed in the literature have been rigorously justified in those cases, where some normality assumption is introduced, either a priori or as a result of an
asymptotic analysis, see notably the extensive work of Ciarlet and his collaborators \cite{Ciarlet00,Ciarlet96b}, see also \cite{Tambaca-14,Tambaca-16,Tambaca-19}. 
However, even  in the infinitesimal-displacement case it becomes apparent that a model involving membrane and bending simultaneously, cannot be obtained by formal asymptotic methods.  
 In some landmark contributions  \cite{Raoult95a,Mueller02}, see also \cite{Fox93,Marsden96} based on the $ \Gamma$-limit of the three-dimensional model for vanishing thickness, the ``membrane-dominated model" and the ``flexural-dominated model", respectively, are obtained. The difference between asymptotic methods and $\Gamma$-convergence is that the methods based on $\Gamma$-convergence \cite{Raoult95a} lead to another finite-strain membrane term, indicating a nonresistance of the membrane shell in compression. In some  examples of careful modelling, the (derivation) Koiter model \cite{Koiter60,Steigmann13},  is simply the sum of the correctly identified membrane and bending contributions, properly scaled with the thickness (the membrane terms scale with $h$ and bending terms with $h^3$). It is shown in \cite{Ciarlet96c} that the Koiter model is asymptotically at least as good as either the membrane model or the bending model in the respective deformation regimes.
Regarding the bending term, agreement has been reached that the term which is consistent with the three-dimensional  isotropic Saint-Venant--Kirchhoff energy
\begin{align}\label{SVKe}W_{\rm SVK}(F)=\frac{\mu}{4}\, \lVert U^2-\id_3\rVert^2+\frac{\lambda}{8}[\tr(U^2-\id_3)]^2=\frac{\mu}{4}\, \lVert C-\id_3\rVert^2+\frac{\lambda}{8}[\tr(C-\id_3)]^2,\end{align} 
where $C=U^2=F^TF$, is a quadratic expression containing  the second fundamental form ${\rm II}_{y_0}$ of the surface.
Nevertheless,  the model obtained by energy projection in \cite{Marsden96} differs from the results obtained by formal asymptotic analysis in \cite{Fox93}. 
For the finite-strain membrane model, no rigorous justification of the formal asymptotic approach has been given, because of the lack of a
theory which guarantees the well-posedness of the nonlinear three-dimensional problem based on \eqref{SVKe}. Since the membrane terms in a finite-strain properly invariant Kirchhoff-Love shell or finite-strain Reissner-Mindlin model are non-elliptic, the remaining minimization problem is not well-posed even if classical bending is present. By contrast, this is not the case, when a  nonlinear three-dimensional problem in the Cosserat theory is considered \cite{neff2014existence,Tambaca10,tambavca2010semicontinuity}. By ignoring the Cosserat effects, we will be forced to consider a polyconvex  energy \cite{Ball77} in the three-dimensional formulation of the initial problem. These energies do not allow an easy manipulation with respect to  the approaches described in this subsection. In this direction, an example is the article \cite{ciarlet2018existence}, see also \cite{ciarlet2013orientation,bunoiu2015existence}, where the Ciarlet-Geymonat energy \cite{ciarlet1982lois} is used. In these articles,  no reduced completely two-dimensional minimization problem is presented and {no through the thickness integration is performed analytically}. The obtained problems are ``two-dimensional" only in the sense that the final problem is to find three vector fields defined on a bounded open subset of $\mathbb{R}^2$, but all three-dimensional coordinates remain present in the minimization problem. Moreover, the obtained minimization problem is compared with the Koiter model only for small  strain-tensors, situation in which the considered energy is actually the isotropic Saint-Venant--Kirchhoff energy.

In applications there are usually regions of a shell where membrane effects dominate, while bending is dominant in others, but both have to be present in the general model. A fully three-dimensional resolution of a thin shell problem remains elusive at present, notwithstanding the increase in computing power. Hence, there is still a need to come
up with a sound finite-strain shell model, combining both effects of membrane and bending in one system of equations, as the Koiter model does successfully in the infinitesimal-displacement case.   

There are numerous proposals in the engineering literature for such a finite-strain, geometrically nonlinear shell formulation. In many cases, the need has been felt to devote attention to the rotation field $\overline{R} \in {\rm SO}(3)$, since rotations are locally the dominant deformation mode of a thin flexible structure. We also mention that considering the new unknown field $\overline{R}\in {\rm SO}(3)$, we are able to keep more three-dimensional effects into our two-dimensional variational formulation  (local independent rotations in various directions, i.e., each material point of the body has the degrees of freedom of a rigid body), in contrast to the shell models based on classical elasticity.  In fact, this is one of the first raison d'\^etre of the Cosserat theory \cite{Cosserat08,Cosserat09}. This has led to shell models which include the so-called \emph{drilling rotations} \cite{wisniewski2010finite}, meaning that in-plane rotations about the shell filament are also taken into account.
In \cite{Hughes92}   it is shown  that the inclusion of drilling rotations in the model has a beneficial influence on the numerical implementation. However, a mathematical analysis for such a family of finite-strain curved shell models is, as of yet, still missing.

One of the most general and effective approaches to shells is the so called geometrically nonlinear 6-parameter  resultant shell theory, which was originally proposed by Reissner \cite{Reissner74} and has subsequently been extended considerably. An account of these developments and main achievements have been presented in the works of Libai and Simmonds \cite{Libai98}
and Pietraszkiewicz et al.\ \cite{Pietraszkiewicz14,Eremeyev06}. This model involves two independent kinematic fields:
the translation vector field and the rotation tensor field (in total six independent scalar kinematic variables). The two-dimensional equilibrium equations and static boundary conditions of the shell are derived exactly by direct through-the-thickness integration of the stresses in the three-dimensional balance laws of linear and angular momentum. The kinematic fields are then constructed directly on the two-dimensional level using the integral identity of the virtual work principle. Following this procedure, the two-dimensional model is expressed in terms of stress resultants and work-averaged deformation fields defined on the shell base surface. The kinematical structure of this 6-parameter model (involving the translation vector and rotation tensor) is identical to the kinematical structure of Cosserat shells (defined as material surfaces endowed with a triad of rigid directors describing the orientation of points). Several developments of this model and applications to complex shell problems, including phase transition and multifold shells, together with the finite element implementation, have been presented by Pietraszkiewicz, Eremeyev and Chr\'o\'scielewski with their co-workers in a number of papers \cite{Eremeyev06,Eremeyev11,Chroscielewski11,Pietraszkiewicz14,Chroscielewski15}, see also \cite{pimenta2010fully}.
\medskip

\subsection{The new shell model presented in this article}

In this paper, we extend the Cosserat plate model established by Neff in his habilitation thesis in 2003  \cite{Neff_plate04_cmt,Neff_plate07_m3as,Neff_membrane_Weinberg07,Neff_membrane_plate03,Neff_membrane_existence03} to the general case of curved initial shell configurations. The results have been previously announced (using a succinct tensor notation) in \cite{birsan2019refined,neff2019higher} and the aim of the current article is to explain the derivation of the model in more details, with added transparency, and using only the matrix representation. Moreover, all our calculations do not use curvilinear coordinates and do not explicitly use an a priori parametrization of the mid-surface, since at the very beginning we are starting by considering the general form of the three-dimensional deformation energy of a fully three-dimensional body, without involving the informations about its mid-surface in the variational formulation. To be more precise, our initial constitutive  assumptions (the form of the energy) do not, of course, depend on the shape  of the three-dimensional body.

We start with a suitable bulk three-dimensional isotropic Cosserat model written in Cartesian coordinates, we  make an appropriate ansatz for the deformation and rotation functions and we perform the integration over the thickness.   The form we take for the three-dimensional Cosserat energy is already a strong point of our new approach, since it will help us to give an explicit analytic form of the entire energy of the shell model, and therefore it will be useful 
in analytical and numerical studies.  More precisely, we  give an explicit form of the curvature energy using the orthogonal Cartan-decomposition  of the wryness tensor (the used curvature tensor of the Cosserat bulk model). In the modelling process we follow the derivation approach as described for planar configurations in \cite{Neff_plate04_cmt}, but we need to additionally incorporate the curvature effects by using known results from the differential geometry of surfaces in $\mathbb{R}^3$.
Thus, we  obtain a geometrically nonlinear formulation for Cosserat-type shells with 6 independent kinematical variables: 3 for displacements and 3 for finite rotations (including drilling rotations).

The new model will resolve some shortcomings of classical approaches, which we have mentioned in the previous subsections. In particular, it  satisfies the following requirements which we deem necessary for an effective general shell model\vspace{-5pt}:
\begin{itemize}
\item a geometrically nonlinear formulation that allows for finite rotations.\vspace{-5pt}
\item the description of transverse shear, drilling rotations, thickness stretch and asymmetric shift of the midsurface.\vspace{-5pt}
\item a hyperelastic, variational formulation with second-order Euler-Lagrange equations in view of an efficient finite element implementation.\vspace{-5pt}
\item a dimensionally reduced energy density which is entirely defined in terms of two-dimensional quantities with a clear physical meaning and by a step by step construction.\vspace{-5pt}
\item well-posedness: existence of solutions, but not unqualified uniqueness in order to be able to describe buckling due to membrane forces (e.g., under lateral compression).\vspace{-5pt}
\item the consistency with classical shell models for infinitesimal deformations.\vspace{-5pt}
\item the incorporation of non-classical size effects, such that graphene sheets have bending/curvature resistance.
\end{itemize}

Since we begin with a Cosserat bulk model which already contains in its formulation the so-called \emph{Cosserat couple modulus} $\,\mu_{\rm c}\ge 0$\, and an internal length $\,{L}_{\rm c}>0\,$ (which is characteristic for the material, e.g.,\ related to the grain size in a polycrystal), the reduced energy density for shells will also include the material parameters $\,\mu_{\rm c}$\, and  $\,{L}_{\rm c}\,$, in conjunction with specific terms having a clear physical meaning, expressed as functions of two-dimensional quantities. The internal length parameter $\,{L}_{\rm c}\,$ allows for the incorporation of non-classical size effects in the shell model (in the sense that smaller samples are relatively stiffer than larger samples \cite{Cirak03}).

For very irregular and curved initial shell configurations it is not at all clear which terms get small in a thin shell approximation. Moreover, when another dimension (beside the thickness) of the parental three-dimensional body is very small or  when the deformations are large compared to the thickness, terms of order $O(h^3)$ may not be  sufficient to capture important three-dimensional behaviour. Therefore, we aim to elaborate a complete and consistent model up to the order $O(h^5)$, i.e.,\ we determine all the terms up to the order $O(h^5)$ in both the  membrane part   and the bending-curvature part of the energy density. Indeed, in a shell model $h$ is very small, but the reason of being of a shell model is to obtain an approximation of the deformation of a three-dimensional body. By considering  terms up to the order $O(h^5)$, some additional three-dimensional effects are not omitted in the obtained two-dimensional model. The used method  allows this construction in a very transparent way without  considering at the very beginning a two-dimensional problem, approach in which terms of order  $O(h^5)$ cannot be  guessed a priori. Moreover, the coefficients of the terms in the energy density  depend on the mean-curvature ${\rm H}$ and  Gau\ss-curvature ${\rm K}$ of the shell's midsurface, the calculations showing us that these coefficients have unforeseeable expressions.  Thus, we come up with an improved model which should generalize most of the known variants of shell theory (since they consider only terms of order $O(h^3)$, see \cite{Steigmann13}). In this respect we will deliver more accurate qualitative and numerical results in forthcoming papers.

 We regard also other shell models from the literature as particular cases of our formulation. For instance, the 6-parameter resultant shell theory \cite{Eremeyev06} can be viewed as a special case, since  it is a theory of order $O(h^3)$, it omits all mixed terms, it is not elaborated starting from a three-dimensional parental problem, and their constitutive coefficients are not expressed in terms of the mean-curvature and Gau\ss-curvature and not in terms of the constitutive coefficients of the  three-dimensional  internal energy.

The present paper is completely self-contained and can be read also by researchers not yet accustomed to the specific notation and usages of shell-theory. We arrive at this point at the expense of working, as far as possible, with concepts from 3D-elasticity theory as well as consequently  utilizing ``reconstructed'' 3D-matrix  objects. Thus the paper is ideally suited for researchers in need of quickly understanding the basic   ingredients  of a geometrically nonlinear shell theory. 
In  forthcoming papers we will compare a suitable restriction of our modellling framework with the geometrically nonlinear $O(h^3)$-Koiter model. Preliminary observations suggests that our model (restricted to the same order $O(h^3)$) includes terms not present in the standard Koiter model (isotropic Kirchhoff-Love shell). For developable surfaces (Gau\ss-curvature ${\rm K}=0$) and after linearisation, both approaches seem to coincide. We will also investigate a corresponding $\Gamma$-convergence result, similar to \cite{Neff_Danzig09,Neff_Hong_Reissner08,Neff_Chelminski_ifb07}.

The considered  matrix representation, in the entire derivation of the model,   is more convenient when the problem of existence is considered, it is also preferred for numerical simulations in the engineering community and it offers some details about how the  elastic shell model obtained in the present article may be extended to an elasto-plastic shell model  \cite{HutterSFB02,Neff_Habil04} or to a model which includes  residual (initial) stresses \cite{marohnic2015model}. These subjects will be considered in future works based on our model. {For instance, in Part II \cite{GhibaNeffPartII} we show that the expression of the energy allows us to have a decent control on each term of the energy density, in order to show the coercivity and the convexity of the energy, and finally to show the existence of minimisers.} These types of thin bodies are of great importance nowadays, in view of the new shell manufacturing procedures and we believe that the terms of order $O(h^5)$ included here  will play  important roles  in increasing the accuracy of analytical and numerical predictions  
in these industrial processes. In forthcoming papers  we will prove the existence of the solution of the obtained minimization problem \cite{GhibaNeffPartII} (at order $O(h^3)$ and $O(h^5)$), we will offer some numerical simulations similar to \cite{Sander-Neff-Birsan-15} in order to compare our model with some other previous models from literature. Moreover, the pure elastic nonlinear Cosserat shell model will also be extended to viscoelasticity and multiplicative plasticity \cite{Neff_membrane_plate03,sauer2019multiplicative,roychowdhury2019growth} and it will allow us to discuss residual stress effects in applications to design-control problems of nano-three-dimensional objects \cite{danescu2020shell}, situations in which a model up to order $O(h^5)$ is useful since  another dimension (beside thickness) may be very small or the deformations are large compared to thickness. Therefore, a model up to order $O(h^5)$ may be very useful in the study of thin bodies with a relative not ``so small'' thickness  compared to the other two dimensions, e.g., in the study of nano-structures.
\section{The three-dimensional formulation }\setcounter{equation}{0}
In \cite{Neff_plate04_cmt} a physically linear, fully frame-invariant isotropic Cosserat model is introduced. The problem has been posed
in a variational setting. We consider a three-dimensional {\it shell-like thin domain} $\Omega_\xi\subset\mathbb{R}^3.$ A generic point of $\Omega_\xi$ will be denoted by $(\xi_1,\xi_2,\xi_3)$ in a fixed  standard base vector  $e_1, e_2, e_3$   of $
\mathbb{R}^3$. The elastic material constituting the shell is assumed to be homogeneous and isotropic and the reference configuration $\Omega_\xi$ is assumed to be a natural (stress-free) state. 
In the rest of the present article we use the notation given in Appendix \ref{Sectnotation}.

The deformation of the body occupying the domain $\Omega_\xi$ is described by a map $\varphi_\xi$ (\textit{called deformation}) and by a \textit{microrotation} $\overline{R}_\xi$,
\begin{equation}
\varphi_\xi:\Omega_\xi\subset\mathbb{R}^3\rightarrow\mathbb{R}^3, \qquad  \ \overline{R}_\xi:\Omega_\xi\subset\mathbb{R}^3\rightarrow {\rm SO}(3)\, .
\end{equation}
We denote the current configuration by $\Omega_c:=\varphi_\xi(\Omega_\xi)\subset\mathbb{R}^3$. 
The deformation and the microrotation is solution of the following \textit{geometrically nonlinear minimization problem} posed on $\Omega_\xi$:
\begin{equation}\label{minprob}
I(\varphi_\xi,F_\xi,\overline{R}_\xi, \alpha_\xi)=\dd\int_{\Omega_\xi}\left[W_{\rm{mp}}(\overline U _\xi)+
W_{\rm{curv}}(\alpha_\xi)\right]dV(\xi)
- \Pi(\varphi_\xi,\overline{R}_\xi)\quad 
{\to}
\textrm{\ \ min.} \quad  {\rm   w.r.t. }\quad (\varphi_\xi,\overline{R}_\xi)\, ,
\end{equation}
where
\begin{align}
 F_\xi:\,=\,&\nabla_\xi\varphi_\xi\in\mathbb{R}^{3\times3}\, \qquad \qquad \qquad \qquad \qquad \textrm{(the deformation gradient)},  \notag\\
\overline U _\xi:\,=\,&\dd\overline{R}^T_\xi F_\xi\in\mathbb{R}^{3\times3} \ \ \  \ \, \, \  \qquad \qquad \qquad \qquad \textrm{(the non-symmetric Biot-type stretch tensor)},  \notag\\
 \alpha_\xi:\,=\,&\overline{R}_\xi^T\, \Curl_\xi \,\overline{R}_\xi\in\mathbb{R}^{3\times 3} \,  \qquad \qquad\qquad  \quad  \textrm{(the second order  dislocation density tensor \cite{Neff_curl08})}\, ,  \\
\dd W_{\rm{mp}}(\overline U _\xi):\,=\,&\dd\mu\,\lVert \dev\,\text{sym}(\overline U _\xi-\id_3)\rVert^2+\mu_{\rm c}\,\lVert \text{skew}(\overline U _\xi-\id_3)\rVert^2+
\dd\frac{\kappa}{2}\,[{\rm tr}(\text{sym}(\overline U _\xi-\id_3))]^2\ \ \, \textrm{(physically linear)}\, ,  \notag\\
\dd W_{\rm{curv}}( \alpha_\xi):\,=\,&\mu\,{L}_{\rm c}^2\left( b_1\,\lVert \dev\,\text{sym}\, \alpha_\xi\rVert^2+b_2\,\lVert \text{skew}\, \alpha_\xi\rVert^2+  b_3\,
[{\rm tr}(\alpha_\xi)]^2\right)\qquad \qquad\qquad\qquad\quad \textrm{(curvature energy)}, \notag
\end{align}
and $dV(\xi)$ denotes the  volume element in the $\Omega_\xi$-configuration. 

The total elastically stored energy $W=W_{\rm mp}+W_{\rm curv}$ depends on
the deformation gradient $F_\xi$ and microrotations $\overline{R}_\xi$ together with their spatial
derivatives. In general, the \textit{{Biot-type stretch tensor}} $\overline U _\xi$ is not symmetric (the first Cosserat deformation tensor \cite{Cosserat09}). The parameters $\mu$ and $\lambda$ are the \textit{Lam\'e constants}
of classical isotropic elasticity, $\kappa=\frac{2\mu+3\lambda}{3}$ is the \textit{infinitesimal bulk modulus}, $b_1, b_2, b_3$ are \textit{non-dimensional constitutive curvature coefficients (weights)}, $\mu_{\rm c}\geq 0$ is called the \textit{{Cosserat couple modulus}} and ${L}_{\rm c}>0$ introduces an \textit{{internal length} } which is {characteristic} for the material, e.g., related to the grain size in a polycrystal. The
internal length ${L}_{\rm c}>0$ is responsible for \textit{size effects} in the sense that smaller samples are relatively stiffer than
larger samples. If not stated otherwise, we assume that $\mu>0$, $\kappa>0$, $\mu_{\rm c}>0$, $b_1>0$, $b_2>0$, $b_3> 0$.  

The form of the curvature energy  $W_{\rm curv}$ is not that originally considered in \cite{Neff_plate04_cmt}.  Indeed, Neff \cite{Neff_plate04_cmt} uses a curvature energy expressed in terms of the \textit{third order curvature tensor} $\mathfrak{K}=(\overline{R}_\xi^T\nabla (\overline{R}_\xi. e_1)\,|\,\overline{R}_\xi^T\nabla (\overline{R}_\xi. e_2)\,|\,\overline{R}_\xi^T\nabla (\overline{R}_\xi. e_3))$. As we will remark in Section 3, the new form of the energy based on the \textit{second order dislocation density tensor} $\alpha_\xi$ simplifies considerably
the representation by admitting to use the orthogonal decomposition
\begin{align}\overline{R}_\xi^T\, \Curl_\xi \,\overline{R}_\xi=\alpha_\xi=\dev\, \sym \,\alpha_\xi+\skw\, \alpha_\xi+ \frac{1}{3}\,\tr(\alpha_\xi)\id_3.\end{align} Moreover, it yields an equivalent control of spatial derivatives of rotations \cite{Neff_curl08} and allows us to write the curvature energy   in a fictitious Cartesian configuration in terms of the so-called \textit{wryness tensor}.  This fact has some further implications, e.g., the coupling between the membrane part, the membrane-bending part, the bending-curvature part and the curvature part of the energy of the shell model is transparent and will coincide with shell-bending curvature tensors elsewhere considered \cite{Eremeyev06}.

In \eqref{minprob}, $ \Pi(\varphi_\xi,\overline{R}_\xi) $ is the external loading potential, which admits the following additive decomposition:
\begin{equation}\label{loadpot1}
\Pi(\varphi_\xi,\overline{R}_\xi)
= \Pi_f(\varphi_\xi)+ \Pi_t(\varphi_\xi)+ \Pi_\Omega(\overline{R}_\xi)+ \Pi_{\partial\Omega_t}(\overline{R}_\xi)
\, ,
\end{equation}
where 
\begin{align}
\Pi_f(\varphi_\xi):\,=\,&\dd\int_{\Omega_\xi} \bigl\langle  f,  u \bigr\rangle   \, dV(\xi)\,= \textrm{potential of external applied body forces $ f $},  \notag\\
\Pi_t(\varphi_\xi):\,=\,&\dd\int_{\partial\Omega_t} \bigl\langle  t,  u \bigr\rangle   \, dS(\xi)\,= \textrm{potential of external applied boundary forces $ t $}\, ,  \label{loadpot2}\\
\Pi_{\Omega}(\overline{R}_\xi):\,=\,&\textrm{ potential of external applied body couples}\, ,  \notag\\
\Pi_{\partial\Omega_t}(\overline{R}_\xi):\,=\,&\textrm{ potential of external applied boundary couples}\, ,  \notag
\end{align}
and $ u= \varphi_\xi - \xi$ is the displacement vector, $ \partial\Omega_t $ is a subset of the boundary of $ \Omega_\xi $\,, and $ dS(\xi) $ is the area element.

\subsection{Relation to the Biot nonlinear elasticity model}
The used three-dimensional Cosserat model can be seen as an extension of the geometrically nonlinear isotropic Biot-model. Indeed, letting formally\footnote{In order to arrive at the limit Biot model for $\lambda=0$, it is sufficient to consider $L_{\rm c}\to 0$ and $\mu_{\rm c}\geq \mu$, see \cite{fischle2017geometricallyI,fischle2017geometricallyII,neff2019explicit,borisov2019optimality}\,.} $\mu_{\rm c}\to +\infty$ and $L_{\rm c}\to 0$, the independent rotation field $\overline{R}\to {\rm polar}(F)$ must coincide with the continuum rotation in the polar decomposition of 
$
	F=R\, U={\rm polar}(F)\, \sqrt{F^T F}.
$
Since for $L_{\rm c}\to 0$  curvature is absent, the resulting minimization problem is
\begin{equation}
\dd\int_{\Omega_\xi}\left[W_{\rm{Biot}}(F)- \bigl\langle  f, \varphi\bigr\rangle  \right]dV(\xi)\quad 
{\to}
\textrm{\ \ min.} \quad  {\rm   w.r.t. }\quad (\varphi_\xi)\, ,
\end{equation}
where
\begin{align}\label{Bioten}
W_{\rm{Biot}}(F)\,=\, \mu\,\lVert U-\id_3\rVert^2+\frac{\lambda}{2}\,[\tr(U-\id_3)]^2=\dd\mu\,\lVert \dev\,\text{sym}( U -\id_3)\rVert^2+
\dd\frac{\kappa}{2}\,[{\rm tr}(\text{sym}( U-\id_3))]^2.
\end{align}
Recall that typically, the Koiter shell-model is obtained based on the dimension reduction from the isotropic Saint-Venant--Kirchhoff energy
\begin{align}\label{SVKen}W_{\rm SVK}(F)=\frac{\mu}{4}\, \lVert U^2-\id_3\rVert^2+\frac{\lambda}{8}[\tr(U^2-\id_3)]^2=\frac{\mu}{4}\, \lVert C-\id_3\rVert^2+\frac{\lambda}{8}[\tr(C-\id_3)]^2,\end{align} 
where $C=U^2=F^TF$. Both energies \eqref{Bioten} and \eqref{SVKen} are linearisation-equivalent and meant to well-capture the small strain regime expected for the response of a thin shell. However, $W_{\rm SVK}(F)$ introduces physically unacceptable behaviour under the slightest compression (compression would be softer than tension). Since $W_{\rm Biot}$ does not have this feature, we believe that arriving with our model at $W_{\rm Biot}$ is advantageous. 

Preliminary calculations show us that, in some particular cases, the total energy of the Cosserat-shell model constructed by using the Biot energy reduces to quadratic and bilinear forms in terms of the   difference of the squares of the first fundamental forms (of the initial configuration and of the current configuration) and/or in terms of the   difference of the second fundamental forms. This is consistent with  new estimates of the distance between two surfaces obtained in 
\cite{ciarlet2015nonlinear,ciarlet2019new} which anticipate the need for  this type of energies. 
\section{Transformed variational problem in   the fictitious  configuration $\Omega_h$}\setcounter{equation}{0}

In what follows, we assume that the parameter domain $\Omega_h\subset\mathbb{R}^3$ is a right cylinder of the form
$$\Omega_h=\left\{ (x_1,x_2,x_3) \,\Big| \, (x_1,x_2)\in\omega,  \, -\dfrac{h}{2}\,< x_3<\, \dfrac{h}{2}\, \right\} = \,\dd\omega\,\times\left(-\frac{h}{2}, \frac{h}{2}\right),$$
where  $\omega\subset\mathbb{R}^2$ is a bounded domain with Lipschitz boundary
$\partial \omega$ and the constant length $h>0$ is the \textit{thickness of the shell}.
For shell--like bodies we consider   the  domain $\Omega_h $ to be {thin}, i.e., the thickness $h$ is {small}. Thus, the domain $\Omega_h $ can be viewed as a \textit{fictitious Cartesian configuration} of the body.

We assume furthermore that there exists a given $C^1$-diffeomorphism $\Theta:\mathbb{R}^3\rightarrow\mathbb{R}^3$, which maps the fictitious Cartesian parameter space $\Omega_h$ with coordinates $(x_1,x_2,x_3)\in\mathbb{R}^3$ onto $\Theta(x_1,x_2,x_3)=(\xi_1,\xi_2,\xi_3)$ such that  the initially curved reference configuration of the shell is $\Theta(\Omega_h)=\Omega_\xi$ (see Figure \ref{Fig1}).
\begin{figure}
	\begin{center}
	\includegraphics[scale=1.6]{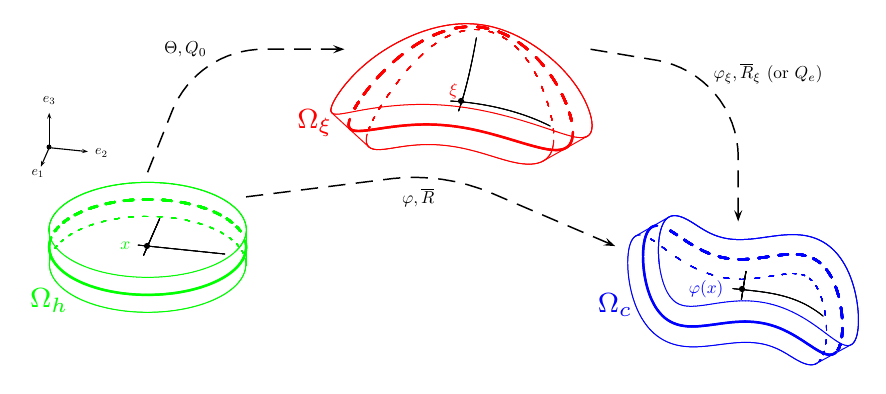}
 		\caption{\footnotesize The shell in its   initial configuration $\Omega_\xi$, the shell in the deformed configuration $\Omega_c$, and the fictitious planar cartesian reference configuration $\Omega_h$. Here,  $\overline{R}_{\xi}  $ is the elastic rotation field, ${Q}_{0}$ is the  initial rotation from the fictitious planar cartesian reference configuration to the initial  configuration $\Omega_\xi$, and $\overline{R}$ is the total rotation field from the fictitious planar cartesian reference configuration to the deformed configuration $\Omega_c$.}
		\label{Fig1}       
	\end{center}
\end{figure}

Now, let us  define the map
\begin{equation}
\varphi:\Omega_h\rightarrow \Omega_c,\ \ \qquad\quad  \varphi(x_1,x_2,x_3)=\varphi_\xi( \Theta(x_1,x_2,x_3)).
\end{equation}
We view $\varphi$ as a function which maps the fictitious  planar reference configuration $\Omega_h$ into the deformed (current) configuration $\Omega_c$. Hence, the guiding question is: how  can we construct the map $\varphi$ and a total rotation tensor $\overline{R}$ in order to reduce suitably the three-dimensional problem to a two-dimensional problem? To answer  this question (see Figure \ref{Fig1}) we reformulate the minimization principle  (\ref{minprob}) in the fictitious, Cartesian  configuration $\Omega_h$.
If we construct such  mappings, since the diffeomorphism $\Theta$ is considered known, then we also know the map which describes the deformation of the initial curved reference configuration $\Omega_\xi$ into the current configuration $\Omega_c$ of the body.

Assume an underlying three-dimensional deformation of the shell-like body $\varphi_\xi$ is known and differentiable. Consider a point $\beta=(x_1,x_2,0)\in \omega\times\{0\}$ and $\Theta(\beta)$. For the moment, we do not assume that $\Theta(\beta)$ is mapped to the midsurface of $\Omega_\xi=\Theta(\Omega_h)$. Consider also the point $\beta_{x_3}=(x_1,x_2,x_3)$, i.e., the line $\beta\beta_{x_3}$ is normal to $\omega$. Then, we have the expansion
\begin{align}
\varphi_\xi(\Theta(\beta_{x_3}))&=\varphi_\xi(\Theta(\beta))+x_3 \nabla_\xi \varphi_\xi (\Theta(\beta))\nabla_x\Theta(\beta)\, e_3+o(x_3)\, \qquad\qquad  \text{or} \\\nonumber
\varphi_\xi(\Theta(x_1,x_2,x_3))&=\underbrace{\varphi_\xi(\Theta(x_1,x_2,0))}_{:=\,m_\xi}+x_3 \nabla_\xi \varphi_\xi (\Theta(x_1,x_2,0))\nabla_x\Theta(x_1,x_2,0)+o(x_3)\, ,
\end{align}
where $\Theta(x_1,x_2,0)$ does not belong to  the midsurface of $\Omega_\xi$, but to  the transformed midsurface $ \omega_\xi= \Theta(\omega\times\{0\})$, as long as $\Theta$ has not a specific expression.

\subsection{Transformation of the minimization problem}

Consider the \textit{elastic microrotation}
\begin{equation}
\overline{Q}_e:\Omega_h\rightarrow{\rm SO}(3),\ \ \ \ \ \overline{Q}_e(x_1,x_2,x_3):=\overline{R}_\xi(\Theta(x_1,x_2,x_3))\, 
\end{equation}
and \textit{the elastic (non-symmetric) Biot-type stretch tensor} (the elastic first Cosserat deformation tensor)
\begin{equation}
\overline U _e:\Omega_h\rightarrow{\rm Sym}(3),\ \ \ \ \ \overline U _e(x_1,x_2,x_3):=\overline U _\xi(\Theta(x_1,x_2,x_3))\,. 
\end{equation}
We use the polar decomposition \cite{neff2013grioli} of $\nabla_x \Theta$  and write 
\begin{equation}\label{dec}
\nabla_x \Theta\,=\,{Q}_0\, U_0\, ,\qquad 
{Q}_0\,=\,{\rm polar}{(\nabla_x \Theta)}\,=\,{\rm polar}{([\nabla_x \Theta]^{-T})}\in {\rm SO}(3 ),\qquad   U _0\in \rm{Sym}^+(3).
\end{equation}
Corresponding to the elastic  deformation process, we have the total microrotation
\begin{align}
\overline{R}&:\Omega_h\rightarrow{\rm SO}(3),\qquad \overline{R}(x_1,x_2,x_3)\,=\,\overline{Q}_e(x_1,x_2,x_3)\,{Q}_0(x_1,x_2,x_3).
\end{align}
Obviously,  if we know the total microrotation $\overline{R}$, then we know  the microrotation $\overline{R}_\xi$.
Using  the chain rule
\begin{align}
\partial_{x_k}\varphi\,=\,\sum_{i=1}^3 \partial _{\xi_i}\varphi_\xi\,\partial_{x_k}\xi_i,
\qquad \qquad 
\nabla_x\varphi(x_1,x_2,x_3)&\,=\,\nabla_\xi\varphi_\xi (\Theta(x_1,x_2,x_3))\nabla_x \Theta(x_1,x_2,x_3)\, ,
\end{align}
we deduce (the multiplicative decomposition)
\begin{align}
F(x_1,x_2,x_3)&\,=\,F_\xi (\Theta(x_1,x_2,x_3))\,\nabla_x \Theta(x_1,x_2,x_3),
\qquad \text{where}\quad 
F\,=\,\nabla_x \varphi,\\
F_\xi (\Theta(x_1,x_2,x_3))&\,=\,F(x_1,x_2,x_3)\,[\nabla_x \Theta(x_1,x_2,x_3)]^{-1}.\notag
\end{align}
Therefore, the elastic non-symmetric stretch tensor is given by
\begin{align}
\overline U _e\,=\,\overline{Q}_e^T\, F\,[\nabla_x\Theta]^{-1}\,=\,{Q}_0\,\overline{R}^T\, F\,[\nabla_x\Theta]^{-1}.
\end{align}

As a  Lagrangian strain measure for curvature (orientation change) one can also employ the
so-called \textit{{wryness tensor}} (second order tensor) \cite{Neff_curl08,Pietraszkiewicz04}
\begin{align}
\Gamma_\xi&:= \Big(\mathrm{axl}(\overline{R}_\xi^T\,\partial_{\xi_1} \overline{R}_\xi)\,|\, \mathrm{axl}(\overline{R}_\xi^T\,\partial_{\xi_2} \overline{R}_\xi)\,|\,\mathrm{axl}(\overline{R}_\xi^T\,\partial_{\xi_3} \overline{R}_\xi)\,\Big)\in \mathbb{R}^{3\times 3},
\end{align}
since (see \cite{Neff_curl08})  the following close relationship between the wryness tensor
and the dislocation density tensor holds
\begin{align}\label{Nye1}
\alpha_\xi\,=\,-\Gamma_\xi^T+\tr(\Gamma_\xi)\, \id_3\,, \qquad\textrm{or equivalently},\qquad \Gamma_\xi\,=\,-\alpha_\xi^T+\frac{1}{2}\tr(\alpha_
\xi)\, \id_3\,.
\end{align}
For infinitesimal strains this formula is well-known under
the name Nye's formula, and $-\Gamma$ is also called Nye's curvature tensor. We will use this terminology further on \cite{Neff_curl08}.

Note that 
$
\partial_{x_k}\overline{Q}_e=\sum_{i=1}^3 \partial_{\xi_i}\overline{R}_\xi\,\partial_{x_k}\xi_i,$  $\partial_{\xi_k}\overline{R}_\xi=\sum_{i=1}^3 \partial_{x_i}\overline{Q}_e\,\partial_{\xi_k}x_i
$
and
\begin{align}
\overline{R}_\xi^T\,\partial_{\xi_k} \overline{R}_\xi\,=\,&\sum_{i=1}^3(\overline{Q}_e^T \partial_{x_i}\overline{Q}_e)\,\partial_{\xi_k}x_i=\sum_{i=1}^3\
\big(\overline{Q}_e^T\,\partial_{x_i} \overline{Q}_e\big)([\nabla_x \Theta]^{-1})_{ik}\\
\mathrm{axl}\big(\overline{R}_\xi^T\,\partial_{\xi_k} \overline{R}_\xi\big)\,=\,&\sum_{i=1}^3\
\mathrm{axl}\big(\overline{Q}_e^T\,\partial_{x_i} \overline{Q}_e\big)([\nabla_x \Theta]^{-1})_{ik}.\notag
\end{align}
Thus, we have the chain rule
\begin{align}\label{crg}
\Gamma_\xi\,=\,&
\Big( \sum_{i=1}^3{\rm axl}\
\Big(\overline{Q}_e^T\,\partial_{x_i} \overline{Q}_e\Big)([\nabla_x \Theta]^{-1})_{i1}\,\Big|\,  \sum_{i=1}^3{\rm axl}\
\big(\overline{Q}_e^T\,\partial_{x_i} \overline{Q}_e\big)([\nabla_x \Theta]^{-1})_{i2}\,\Big|\, \sum_{i=1}^3{\rm axl}\
\big(\overline{Q}_e^T\,\partial_{x_i} \overline{Q}_e\big)([\nabla_x \Theta]^{-1})_{i3}\Big)\vspace{1.5mm}\notag\\
\,=\,&\Big(\mathrm{axl}(\overline{Q}_e^T\,\partial_{x_1} \overline{Q}_e)\,|\, \mathrm{axl}(\overline{Q}_e^T\,\partial_{x_2} \overline{Q}_e)\,|\,\mathrm{axl}(\overline{Q}_e^T\,\partial_{x_3} \overline{Q}_e)\,\Big)[\nabla_x \Theta]^{-1}.
\end{align}
Define 
$
\Gamma_e:=\Big(\mathrm{axl}(\overline{Q}_e^T\,\partial_{x_1} \overline{Q}_e)\,|\, \mathrm{axl}(\overline{Q}_e^T\,\partial_{x_2} \overline{Q}_e)\,|\,\mathrm{axl}(\overline{Q}_e^T\,\partial_{x_3} \overline{Q}_e)\,\Big), \ 
\alpha_e:=\overline{Q}_e^T\Curl_x\,\overline{Q}_e.\notag
$
Using   Nye's formula for $\alpha_e$ and $\Gamma_e$, we have the correspondence
\begin{align}
\alpha_e\,=\,-\Gamma_e^T+\tr(\Gamma_e)\, \id_3, \qquad\textrm{or equivalently,}\qquad \Gamma_e\,=\,-\alpha_e^T+\frac{1}{2}\tr(\alpha_
e)\, \id_3.
\end{align}
In view of \eqref{Nye1} and \eqref{crg}, we can write
\begin{align}
\alpha_\xi\,=\,&-\Gamma_\xi^T+\tr(\Gamma_\xi)\, \id_3\,=\,-(\Gamma_e\, [\nabla_x \Theta]^{-1})^T+\tr(\Gamma_e\, [\nabla_x \Theta]^{-1})\, \id_3\notag\vspace{1.5mm}\\
\,=\,&
-[\nabla_x \Theta]^{-T}\Gamma_e^T+\tr(\Gamma_e)\, [\nabla_x \Theta]^{-T}-\tr(\Gamma_e)\, [\nabla_x \Theta]^{-T}+\tr(\Gamma_e\, [\nabla_x \Theta]^{-1})\, \id_3
\notag\vspace{1.5mm}\\
\,=\,&\,[\nabla_x \Theta]^{-T}
\alpha_e-\tr(\Gamma_e)\, [\nabla_x \Theta]^{-T}+\tr(\Gamma_e\, [\nabla_x \Theta]^{-1})\, \id_3.
\end{align}
Moreover
\begin{align}
\tr(\Gamma_e)&\,=\,-\tr(\alpha_e)+\frac{3}{2}\tr(\alpha_
e)\,=\,\frac{1}{2}\tr(\alpha_
e),\vspace{1.5mm}\notag\\
\Gamma_e\, [\nabla_x \Theta]^{-1}&\,=\,-\alpha_e^T\, [\nabla_x \Theta]^{-1}+\frac{1}{2}\tr(\alpha_
e)\, [\nabla_x \Theta]^{-1},\vspace{1.5mm}\\
\tr(\Gamma_e\, [\nabla_x \Theta]^{-1})&\,=\,-\tr(\alpha_e^T\, [\nabla_x \Theta]^{-1})+\frac{1}{2}\tr(\alpha_
e)\, \tr([\nabla_x \Theta]^{-1})\notag.
\end{align}
We deduce
\begin{align}
\alpha_\xi\,=\,&\,[\nabla_x \Theta]^{-T}
\alpha_e-\frac{1}{2}\tr(\alpha_
e)\, [\nabla_x \Theta]^{-T}-\tr(\, [\nabla_x \Theta]^{-T}\alpha_e)\, \id_3+\frac{1}{2}\tr(\alpha_
e)\, \tr([\nabla_x \Theta]^{-1})\, \id_3\vspace{1.5mm}\notag\\
\,=\,&\,[\nabla_x \Theta]^{-T}
\alpha_e-\tr(\alpha_e^T\,[\nabla_x \Theta]^{-1})\, \id_3-\frac{1}{2}\tr(\alpha_
e)\, \Big\{[\nabla_x \Theta]^{-T}-\, \tr([\nabla_x \Theta]^{-1})\, \id_3\Big\}.
\end{align}
However, we will not use this formula to rewrite the curvature energy in the fictitious Cartesian configuration $\Omega_h$, since it is easier to use (from \eqref{Nye1})
\begin{align}
\sym \,\alpha_\xi\,=\,&-\sym\,\Gamma_\xi+ \tr(\Gamma_\xi)\, \id_3\,=\,-\sym(\Gamma_e\,[\nabla_x \Theta]^{-1})+ \tr(\Gamma_e\,[\nabla_x \Theta]^{-1})\, \id_3, \vspace{1.5mm}\notag\\
{\rm dev}\,\sym \,\alpha_\xi\,=\,&-{\rm dev}\,\sym\, \Gamma_\xi\,=\,-{\rm dev}\,\sym (\Gamma_e\,[\nabla_x \Theta]^{-1}),\vspace{1.5mm}\\
{\skw}
\,\alpha_\xi\,=\,&-\skw\,\Gamma_\xi\,=\,-\skw(\Gamma_e\,[\nabla_x \Theta]^{-1}),\vspace{1.5mm}\notag\\
\tr(\alpha_\xi)\,=\,&-\tr(\Gamma_\xi)+ 3\,\tr(\Gamma_\xi)\,=\,2\,\tr(\Gamma_\xi)\,=\,2\,\tr(\Gamma_e\,[\nabla_x \Theta]^{-1}),\notag
\end{align}
and to express the curvature energy in terms of $\Gamma_e\,[\nabla_x \Theta]^{-1}$ as
\begin{align}
W_{\rm{curv}}( \alpha_\xi)\,=\,&\mu\, {L}_{\rm c}^2\left( b_1\,\lVert \dev\,\text{sym} (\Gamma_e\,[\nabla_x \Theta]^{-1})\rVert^2+b_2\,\lVert \text{skew}(\Gamma_e\,[\nabla_x \Theta]^{-1})\rVert^2+4\,b_3\,
[{\rm tr}(\Gamma_e\,[\nabla_x \Theta]^{-1})]^2\right).
\end{align}
Note that using \
\begin{align}\label{curvfuraxl}
\overline{Q}_e^T\,\partial_{x_i} \overline{Q}_e\,=\,Q_0\,\overline{R}^T\,\partial_{x_i} (\overline{R}\,Q_0^T)\,=\,Q_0 (\overline{R}^T\,\partial_{x_i} \overline{R})\,Q_0^T-Q_0(Q_0^T\partial_{x_i} Q_0)\,Q_0^T,\quad i\,=\,1,2,3
\end{align}
and the identity
\begin{align}
 \mathrm{axl}(Q\, A\, Q^T)\,=\,Q\,\mathrm{axl}( A)\qquad  \forall\, Q\in {\rm SO}(3) \quad \text{and}\quad \forall\, A\in \mathfrak{so}(3),
\end{align}
we obtain  the following form of the {wryness tensor} 
\begin{align}\label{qiese}
\Gamma(x_1,x_2,x_3) :\,=\,&\,\Gamma_\xi(\Theta(x_1,x_2,x_3))\,=\,\Gamma_e\,[\nabla_x \Theta]^{-1}\notag\\\,=\,&\,Q_0\Big[ \Big(\mathrm{axl}(\overline{R}^T\,\partial_{x_1} \overline{R})\,|\, \mathrm{axl}(\overline{R}^T\,\partial_{x_2} \overline{R})\,|\,\mathrm{axl}(\overline{R}^T\,\partial_{x_3} \overline{R})\,\Big)\\&\ \ \quad - \Big(\mathrm{axl}(Q_0^T\,\partial_{x_1} Q_0)\,|\, \mathrm{axl}(Q_0^T\,\partial_{x_2} Q_0)\,|\,\mathrm{axl}(Q_0^T\,\partial_{x_3} Q_0)\,\Big)\Big] \,[\nabla_x \Theta]^{-1}.\notag
\end{align}

Applying the change of variables formula we obtain now a new form of the energy functional $I$ which suggests  to seek the unknown functions $\varphi$ and $\overline{R}$ as solutions of the following minimization problem
\begin{equation}\label{minprmod}
\begin{array}{crl}
{I}\,=\,\dd\int_{\Omega_h}\bigg[\widetilde{W}_{\rm{mp}}(F,\overline{R})+\widetilde{W}_{\rm{curv}}(\Gamma)\bigg]\, {\rm det}{(\nabla_x\Theta)} \,dV  - \widetilde{\Pi}(\varphi,\overline{R})\quad
{\to}
\quad\text{ min.\quad w.r.t.  }\quad (\varphi,\overline{R})\ ,
\end{array}
\end{equation}
where $dV$ denotes the volume element $dx_1dx_2dx_3$ and
\begin{align}
\widetilde{W}_{\rm{mp}}(F,\overline{R})&\,=\,W_{\rm{mp}}(\overline U _e)\notag\,=\,\mu\,\lVert \text{sym}(\overline U _e-\id_3)\rVert^2+\mu_{\rm c}\,\lVert \text{skew}(\overline U _e-\id_3)\rVert^2+
\dd\frac{\lambda}{2}\,[{\rm tr}(\text{sym}(\overline U _e-\id_3))]^2\\
&\,=\,\mu\,\lVert   {\rm dev}  \,\text{sym}(\overline U _e-\id_3)\rVert^2+\mu_{\rm c}\,\lVert \text{skew}(\overline U _e-\id_3)\rVert^2+
\dd\frac{\kappa}{2}\,[{\rm tr}(\text{sym}(\overline U _e-\id_3))]^2,\nonumber
 \\ 
\widetilde{W}_{\rm{curv}}(\Gamma)&\,=\,\mu\, {L}_{\rm c}^2 \left( b_1\,\lVert \dev \,\text{sym} \,\Gamma\rVert^2+b_2\,\lVert \text{skew} \,\Gamma\rVert^2+4\,b_3\,
[{\rm tr}(\Gamma)]^2\right).\notag
\end{align}
The external loading potential can be written as
\begin{equation}\label{loadpot3}
\widetilde{\Pi}(\varphi,\overline{R})
\,=\, \widetilde{\Pi}_f(\varphi) + \widetilde{\Pi}_t(\varphi) + \widetilde{\Pi}_{\Omega_h}(\overline{R}) + \widetilde{\Pi}_{\Gamma_t}(\overline{R}) 
\, ,
\end{equation}
with
\begin{align}\label{loadpot4}
\widetilde{\Pi}_f(\varphi):= & \;\Pi_f(\varphi_\xi) \,=\,\dd\int_{\Omega_\xi} \bigl\langle  f,  u\bigr\rangle \, dV(\xi)\, \,=\,\dd\int_{\Omega_h} \bigl\langle  \tilde{f}, \tilde{v} \bigr\rangle   \, dV\,,  \\
\widetilde{\Pi}_t(\varphi) :\,=\,&\;  \Pi_t(\varphi_\xi)= \dd\int_{\partial\Omega_t} \bigl\langle  t, u \bigr\rangle \, dS(\xi)\,= \int_{\Gamma_t} \bigl\langle \tilde t, \tilde{v} \bigr\rangle   \, dS\, , \qquad 
\widetilde{\Pi}_{\Omega_h}(\overline{R}) :\,=\; \Pi_{\Omega}(\overline{R}_\xi)\, , \qquad
\widetilde{\Pi}_{\Gamma_t}(\overline{R}):=\Pi_{\partial\Omega_t}(\overline{R}_\xi)\, ,  \notag
\end{align}
where $ \tilde{v} (x_i)= \varphi(x_i) - \Theta(x_i) $ is the displacement vector and the vector fields $ \tilde f $ and $ \tilde t $ can be determined in terms of $   f $ and $   t $, respectively, for instance
{$ \;(\tilde f(x))_i=(f(\Theta(x)))_i\det(\nabla_x \Theta).$}
 Here, $ \Gamma_t $ and $ \Gamma_d $ are nonempty subsets of the boundary of $ \Omega_h $ such that $   \Gamma_t \cup \Gamma_d= \partial\Omega_h $ and $ \Gamma_t \cap \Gamma_d= \emptyset $\,. On $ \Gamma_t $ we consider traction boundary conditions, while on $ \Gamma_d $ we have Dirichlet-type boundary conditions (i.e., $ \varphi $ and $ \overline{R} $ are prescribed on $ \Gamma_d $). We assume that  $ \Gamma_d  $ has the form $ \Gamma_d = \gamma_d\times (-\frac{h}{2} , \frac{h}{2})$, where the curve $\gamma_d $ is a subset of $ \partial \omega $ with length $(\gamma_d)>0 $. Accordingly, the boundary subset $ \Gamma_t $ has the form  $ \Gamma_t = \Big(\gamma_t\times (-\frac{h}{2} , \frac{h}{2})\Big) \cup \Big(\omega\times \Big\{\frac{h}{2}\Big\} \Big) \cup \Big(\omega\times \Big\{-\frac{h}{2}\Big\} \Big) $
and $ \Theta(\Gamma_t) = \partial\Omega_t\, $.

\subsection{Useful tensors defined through the diffeomorphism $\Theta$}

For our purpose, the diffeomorphism $\Theta:\mathbb{R}^3\rightarrow\mathbb{R}^3$ describing the reference configuration (i.e., the curved surface of the shell),  will be chosen in the specific form
\begin{equation}\label{defTheta}
\Theta(x_1,x_2,x_3)\,=\,y_0(x_1,x_2)+x_3\ n_0(x_1,x_2), \ \ \ \ \ \ \ \ \quad  n_0\,=\,\dd\frac{\partial_{x_1}y_0\times \partial_{x_2}y_0}{\lVert \partial_{x_1}y_0\times \partial_{x_2}y_0\rVert}\, ,
\end{equation}
where $y_0:\omega\to \mathbb{R}^3$ is a function of class $C^2(\omega)$. This specific form of the diffeomorphism 
$\Theta$ maps the midsurface $\omega$ of the fictitious Cartesian configuration  parameter space $\Omega_h$ onto the midsurface $\omega_\xi=y_0(\omega)$ of $\Omega_\xi$ and $n_0$ is the unit normal vector to $\omega_\xi$. For simplicity and  where no confusions may arise, further on we will omit  to write explicitly   the arguments $(x_1,x_2, x_3)$ of the diffeomorphism  $\Theta$ or we will specify only its dependence  on $x_3$.
 Remark that
\begin{align}
\nabla_x \Theta(x_3)&\,=\,(\nabla y_0|n_0)+x_3(\nabla n_0|0) \, \  \forall\, x_3\in \left(-\frac{h}{2},\frac{h}{2}\right),
\ \ 
\nabla_x \Theta(0)\,=\,(\nabla y_0|\,n_0),\ \ [\nabla_x \Theta(0)]^{-T}\, e_3\,=n_0,
\end{align}
and  $\det\nabla_x \Theta(0)=\det(\nabla y_0|n_0)=\sqrt{\det[ (\nabla y_0)^T\nabla y_0]}$ represents the surface element.

In the following we {identify} the  {\it Weingarten map (or shape operator)} {on $y_0(\omega)$} {with its associated matrix} ${\rm L}_{y_0}\in\mathbb{R}^{2\times 2}$ defined in Appendix \ref{sectGeo} by 
$
{\rm L}_{y_0}\,=\, {\rm I}_{y_0}^{-1} {\rm II}_{y_0},
$
where ${\rm I}_{y_0}$ and ${\rm II}_{y_0}$ are  the matrix representations of the {\it first fundamental form (metric)} and the  {\it  second fundamental form} {on $y_0(\omega)$}, respectively. 
Then, the {\it Gau{\ss} curvature} ${\rm K}$ of the surface {$y_0(\omega)$} is determined by
$
{\rm K} \,=\,{\rm det}{({\rm L}_{y_0})}\, 
$
and the {\it mean curvature} $\,{\rm H}\,$ through
$
2\,{\rm H}\, :={\rm tr}({{\rm L}_{y_0}}) \, .
$ We denote  the principal curvatures of the surface by  ${\kappa_1}$ and ${\kappa_2}$.

For our purpose we will write the expressions of $\nabla_x \Theta$, ${\rm det} (\nabla_x \Theta)$, $[\nabla_x \Theta]^{-1}$ 
corresponding to the special form of the map $\Theta$ given by \eqref{defTheta}, as well as some other of its  properties, see  Appendix \ref{proofLemmaMircea}.
We  have
$
[	\nabla_x \Theta(x_3)]\, e_3\,= \, n_0\,.
$
Let us recall that $X\in \mathrm{GL}^+(3)$ satisfies the \textit{\textbf{G}eneralized \textbf{K}irchhoff \textbf{C}onstraint} ({\rm GKC}) \cite{Neff_Habil04} if 
$
X\in {\rm GKC }:=\{X\in \mathrm{GL}^+(3)\ |\ X^T \, X\, e_3\,=\,\varrho^2 e_3,\ \varrho\in \mathbb{R}^+\}\, .
$
For all $X\in {\rm GKC }$ with the polar decomposition $X=R\, U _0$, if follows that  $ U _0\in  {\rm GKC }$.
In view of this property and $\nabla \Theta(x_3)=Q_0(x_3)U_0(x_3)$, it follows\footnote{In the rest of the paper $*$ denotes quantities having expressions which are not relevant for our calculations.}
$
U _0(x_3)\,=\,\begin{footnotesize}
\begin{pmatrix}
* &* &0  \\
* &* &0 \\
0 &0 &1
\end{pmatrix}
\end{footnotesize}\,, 
$
which implies  that 
\begin{align}
\label{Q0n}
d_3^0(x_3):\,=\,Q_0(x_3).\, e_3\,=\,Q_0(x_3)\,U_0(x_3).\, e_3\,=\,\nabla_x \Theta(
x_3).\, e_3\,=\,n_0.
\end{align}
This means that the initial director $ d_3^0$ is chosen along the normal to the reference midsurface (the ``\textit{material filament}'' of the shell, see Figure \ref{Fig2}), while $ d_\alpha^0:=Q_0.\, e_\alpha${, for $\alpha=1,2$,}
is an orthonormal basis in the tangent plane of $\omega_\xi\,$. 
In  the current configuration $\Omega_c$ the director $ d_3:=\overline{Q}_e.\, d_3^0$ is no longer orthogonal to the deformed surface $\omega_c:= \varphi_\xi(\omega_\xi)$ and the directors  $d_\alpha:= \overline{Q}_e.\, d_\alpha^0$, for $\alpha=1,2$, are not tangent to this surface. The deviation of the director $ d_3$ from the normal vector to $\omega_c$ describes the \textit{{transverse shear deformation}} of shells. Moreover, the rotations of $ d_1,\, d_2\,$ about the  director $ d_3$ describe the so-called \textit{{drilling rotations}} in shells (see \cite{Birsan-Neff-L54-2014,wisniewski2010finite}).

Let us  introduce the  tensors\footnote{These tensors are usually called the first fundamental form and  the second fundamental form, respectively. However, we will not use this terminology since it may lead to some confusions. The relation between these tensors are explained in Proposition \ref{propAB}.} defined by:
\begin{align}\label{AB}
{\rm A}_{y_0}&:=(\nabla y_0|0) \,[\nabla_x \Theta(0)]^{-1}\in\mathbb{R}^{3\times 3}, \qquad \qquad 
{\rm B}_{y_0}:=-(\nabla n_0|0) \,[\nabla_x \Theta(0)]^{-1}\in\mathbb{R}^{3\times 3},
\end{align}
and the so-called \textit{{alternator tensor}} ${\rm C}_{y_0}$ of the surface \cite{Zhilin06}
\begin{align}
{\rm C}_{y_0}:=\det(\nabla_x \Theta(0))\, [	\nabla_x \Theta(0)]^{-T}\,\begin{footnotesize}\begin{pmatrix}
0&1&0 \\
-1&0&0 \\
0&0&0
\end{pmatrix}\end{footnotesize}\,  [	\nabla_x \Theta(0)]^{-1}.
\end{align}

The introduced tensors have the  properties given by Proposition \ref{propAB} from Appendix \ref{AppendixpropAB}, which are essential in the derivation of the model entirely in matrix representation. 

\subsection{Plane stress conditions in the curved (reference) configuration}

The \textit{first Piola-Kirchhoff stress tensor} in the reference (curved) configuration $\Omega_\xi$ is given by
$
S_1(F_\xi,\overline{R}_\xi) =D_{F_\xi}\widetilde{W}_{\rm{mp}}(F_\xi,\overline{R}_\xi).
$ 
We also consider the \textit{Biot-type stress tensor}  from   classical elasticity theory
$
{T}_{\rm Biot}\big(\overline U _\xi\big):=D_{\overline U _\xi}{W}_{\rm mp}(\overline U _\xi)\, .
$
Since
$
D_{F
{_\xi}
}
{\overline U _\xi}.X\,=\,\overline{R}_\xi^T \, X 
$ \text{
	and} $
 \bigl\langle  D_{F
{_\xi}
}\widetilde{W}_{\rm{mp}}(F_\xi,\overline{R}_\xi),X\bigr\rangle \,=\, \bigl\langle  D_{\overline U _\xi}{W}_{\rm{mp}}({\overline U _\xi}),D_{F
{_\xi}
}{\overline U _\xi}.X\bigr\rangle ,
$ 
for all $X\in \mathbb{R}^{3\times3}$,
we note that
$
D_{F_\xi}\widetilde{W}_{\rm{mp}}(F_\xi,\overline{R}_\xi)\,=\,\overline{R}_\xi D_{\overline U _\xi}{W}_{\rm mp}(\overline U _\xi)\, .
$
Hence, we have
\begin{align}\label{pk0}
S_1(F_\xi,\overline{R}_\xi)& \,=\,\overline{R}_\xi\,{T}_{\rm Biot}( \overline U _\xi),\qquad\qquad {T}_{\rm Biot}( \overline U _\xi)\,=\,\overline{R}_\xi^T\,S_1(F_\xi,\overline{R}_\xi)\,.
\end{align}
The Biot-type stress  tensor ${T}_{\rm Biot}$ is given by
\begin{align}\label{pk2}
{T}_{\rm Biot}(\overline U _\xi)\,=\,&
2\,\mu \,\sym(\overline U _\xi-\id_3)+2\,\mu_{\rm c} \, \ {\rm skew} (\overline U _\xi-\id_3)
+\dd{\lambda}\,{\rm tr}(\sym(\overline U _\xi-\id_3))\id_3\, ,
\end{align}
while, using \eqref{pk0}, we obtain that the first Piola-Kirchhoff tensor $S_1$ has the following form
\begin{align}\label{pk}
S_1(F_\xi,\overline{R}_\xi)\,=\,\overline{R}_\xi\,\bigg[&
2\,\mu\, \sym(\overline{R}_\xi^T\, F_\xi-\id_3)+2\,\mu_{\rm c}  \ {\rm skew} (\overline{R}_\xi^T\, F_\xi- \id_3) +\dd{\lambda}\,{\rm tr}(\sym(\overline{R}_\xi^T\, F_\xi-\id_3))\id_3\bigg].
\end{align}

As usual in  the development of shell theories, we assume that the normal stress (Piola-Kirchhoff stress tensor in the normal direction $n_0$) on the transverse boundaries (\textit{upper and lower faces} $\omega_\xi^+$ and $\omega_\xi^-$, respectively, of the curved reference configuration $\Omega_\xi$) are vanishing, i.e.,
\begin{equation}\label{bc001}
S_1( F_\xi,\overline{R}_\xi)\Big|_{\omega_\xi^\pm}.\,(\pm n_0)\,=\,0 \qquad \text{``zero normal stresses'' on upper and lower faces}.
\end{equation}

\begin{figure}
	\begin{center}	\includegraphics[scale=0.43]{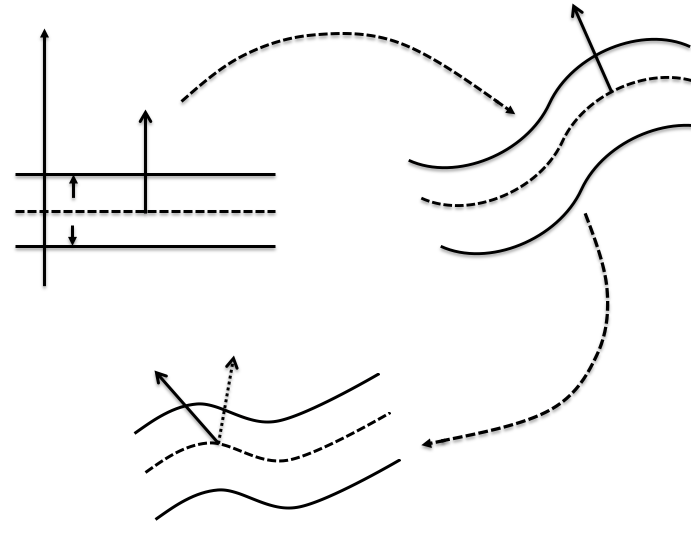} 
		\put(-240,188){\footnotesize $e_3$}
		\put(-200,128){\footnotesize $\Omega_h$}
		\put(-220,142){\footnotesize $\omega$}
		\put(-289,131){\footnotesize $O$}
		\put(-269,138){\footnotesize $h$}
		\put(-300,220){\footnotesize $x_3$}
		\put(-220,162){\footnotesize $\omega^+$}
		\put(-220,118){\footnotesize $\omega^-$}
		\put(-55,235){\footnotesize $n_0=\nabla_x\Theta\, e_3=Q_0\, e_3=d_3^0$}
		\put(-20,182){\footnotesize $\Omega_\xi$}
		\put(-90,150){\footnotesize $\omega_\xi$}
		\put(-90,170){\footnotesize $\omega_\xi^+$}
		\put(-90,111){\footnotesize $\omega_\xi^-$}
		\put(-170,225){\footnotesize $\Theta,\ \nabla_x\Theta\in {\rm GKC}$}
		\put(-170,205){\footnotesize $Q_0$}
		\put(-265,75){\footnotesize $d_3=\overline{Q}_e\, n_0$}
		\put(-200,80){\footnotesize $n$}
		\put(-240,12){\footnotesize $\Omega_c$}
		\put(-150,55 ){\footnotesize $\omega_c=\varphi_\xi(\omega_\xi)$}
		\put(-150,73){\footnotesize $\omega_c^+=\varphi_\xi(\omega_\xi^+)$}
		\put(-150,14){\footnotesize $\omega_c^-=\varphi_\xi(\omega_\xi^-)$}
		\put(-35,77){\footnotesize $\varphi_\xi, \overline{R}_\xi$  (or $\overline{Q}_e$)}
		\put(-170,205){\footnotesize $Q_0$}
		\caption{\footnotesize Transverse section in the shell. The shell is stress free at the upper and lower surface in the current configuration $\Omega_c$. With regard to the first Piola-Kirchhoff tensor $S_1$ this is equivalent to condition \eqref{bc001}. Transverse shear is automatically included in the model allowing the unit vector $d_3=\overline{Q}_e. n_0$ not to coincide with unit vector $n$, the normal to the deformed midsurface.}
		\label{Fig2}       
	\end{center}
\end{figure}

In the limit case $h\to 0$, these conditions imply $
S_1( F_\xi,\overline{R}_\xi)\Big|_{\omega_\xi}.\,(\pm n_0)\,=\,0$, i.e.,``zero normal stresses'' on the midsurface $\omega_\xi=\Theta(\omega\times \{0\})$, but the reverse of this implication is not valid. 

In fact, \eqref{bc001} is  equivalent to the assumption that the  Biot-stress tensor  in the normal direction $n_0$ is vanishing,  since
$
 {T}_{\rm Biot}( \overline U _\xi)\Big|_{\omega_\xi^\pm}.\,(\pm n_0)=\,\overline{R}_\xi^T\Big|_{\omega_\xi^\pm}\,S_1( F_\xi,\overline{R}_\xi))\Big|_{\omega_\xi^\pm}.(\pm n_0)\,=\,0,
$ and this implies, after scalar multiplication with $n_0$
\begin{equation}\label{bcn1nou0}
\bigl\langle  {T}_{\rm Biot}( \overline U _\xi)\Big|_{\omega_\xi^\pm}.\, n_0,n_0\bigr\rangle\,=\,0\, \qquad \text{``zero normal tractions'' on upper and lower faces} .
\end{equation}

\subsection{Neumann boundary conditions in the fictitious Cartesian configuration}

Using the coordinates of the fictitious Cartesian configuration, the plane stress conditions \eqref{bcn1nou0} are written in the form
\begin{equation}\label{bcn1nou}
\bigl\langle  {T}_{\rm Biot}\bigg( \overline U _e\big(x_1,x_2,\pm \frac{h}{2}\big)\bigg)\, n_0,n_0\bigr\rangle\,=\,0\, .
\end{equation}

A simplified approximated form of \eqref{bcn1nou} can be written in the limit case $h\to 0$ as in the following. Let us define the function
\begin{align}
f(x_3):=\bigl\langle  {T}_{\rm Biot}(\overline U _e(x_1,x_2,x_3))\, n_0,n_0\bigr\rangle,\,\qquad \forall \, x_3\in \left[-\frac{h}{2},\frac{h}{2}\right].
\end{align}
The Taylor expansion of $f(x_3)$ in $x_3=0$ leads to
\begin{align}\label{taylorS2}
f\left(\frac{h}{2}\right)+f\left(-\frac{h}{2}\right)\,=\,2\, f(0)+O(h^2),\qquad f\left(\frac{h}{2}\right)-f\left(-\frac{h}{2}\right)\,=\,h\, f'(0)+O(h^3),
	\end{align}
where
\begin{align}
f'(0)\,=\,\bigl\langle {\partial_{x_3}}{T}_{\rm Biot}(\overline U _e(x_1,x_2,x_3))\Big|_{x_3=0}\, n_0,n_0\bigr\rangle.
\end{align}
In view of the boundary conditions \eqref{bcn1nou} we have $f\left(\frac{h}{2}\right)=0=f\left(-\frac{h}{2}\right)$ and the relations \eqref{taylorS2} yield $f(0)=O(h^2)$ and $f'(0)=O(h^2)$. In the limit case $h\to 0$ one obtains the following approximated form of the conditions \eqref{bcn1nou}
\begin{align}\label{conditii}
\bigl\langle {T}_{\rm Biot}(\overline U _e(x_1,x_2,0))\, n_0, n_0\bigr\rangle&=\,0\  \ \ \text{``zero normal tractions'' on the midsurface} \ \omega_\xi=\Theta(\omega),\\
\bigl\langle \! {\partial_{x_3}}{T}_{\rm Biot}(\overline U _e(x_1,x_2,x_3))\Big|_{x_3=0}\, n_0,n_0\bigr\rangle\,&=\,0\ \ \  \text{``zero variations of normal tractions'' on the midsurface} \ \omega_\xi=\Theta(\omega).\notag
\end{align}
These relations represent a first approximation  in the dimensional reduction procedure and they will be used further on. In addition, in Appendix \ref{AppendixNeumann} we prove that, for our method, this first approximation leads to the same results (but in a simpler way) as is the case when the complete Neumann condition \eqref{bc001} are used, and then the limit $h\to 0$ is considered.

\section{The two-dimensional approximation}\setcounter{equation}{0}
\subsection{The 8-parameter ansatz for the two-dimensional approximation}

In the following, we want to find a \textit{reasonable approximation} of $(\varphi,\overline{R})$ involving only two-dimensional quantities.
Following the formal dimensional reduction procedure for the Cosserat elastic plates given in \cite{Neff_plate04_cmt}, we consider that the  rotation $\overline{R}:\Omega_h\rightarrow \text{\rm{SO}}(3)$
in the thin shell does  not depend on the thickness variable $x_3$
\begin{equation}\label{ansatzR}
\overline{R}(x_1,x_2,x_3)\,=\,\overline{R}_s(x_1,x_2)\, ,
\end{equation}
in line with the assumed thinness and material homogeneity of the structure. 
Moreover, an approximation of the elastic rotation $\overline{Q}_{e}:\Omega_h\rightarrow \text{\rm{SO}}(3)$ will be given by $\overline{Q}_{e,s}$
\begin{equation}
\overline{Q}_{e,s}(x_1,x_2)\,=\,\overline{R}_s(x_1,x_2)\,Q_0^T(x_1,x_2,0)\, .
\end{equation}

Taking into account\footnote{The definition  of the set {\rm GKC} and its properties are presented in Appendix \ref{AppendixpropAB}} that $\nabla_x \Theta\in {\rm GKC}$, with $\varrho\,=\,1$, we have
$
[\nabla_x \Theta]^{-1}\,[\nabla_x \Theta]^{-T}\, e_3\,=\,e_3\, .
$ 
In view of the properties of GKC, it follows
\begin{align}\label{eq:id_Qthetae3}
\overline{R}_{s}(x_1,x_2)\, e_3\,=\,&\overline{R}_{s}(x_1,x_2)\,U_0(x_1,x_2,0)\, e_3\,=\,\overline{Q}_{e,s}(x_1,x_2)\,Q_0(x_1,x_2,0)\,U_0(x_1,x_2,0)\, e_3\, .
\notag \\\,=\,&\overline{Q}_{e,s}(x_1,x_2)\,\nabla_x \Theta(x_1,x_2,0)\, e_3\,=\,\overline{Q}_{e,s}(x_1,x_2)\,(\nabla y_0|n_0)\, e_3\,=\,\overline{Q}_{e,s}(x_1,x_2)\, n_0\, .
\end{align}
Since $\overline{Q}_{e,s}\,{Q}_0\in \rm{SO}(3 )$ and with \eqref{Q0n} we have
\begin{align}\label{idp}
{Q}_0(x_1,x_2,x_3)\, e_3\,=\,n_0\,=\,{Q}_0(x_1,x_2,0)\, e_3\, .
\end{align}

In the engineering shell community it is well known \cite{Chernykh80,Schmidt85,Pietraszkiewicz85} that
the ansatz for the deformation over the thickness should be at least  quadratic%
\index{hierarchic plate models}
in order to avoid the so called
{\it  Poisson thickness locking} %
and to fully capture the three-dimensional kinematics without artificial
modification of the material laws\footnote{Let us quote from \cite{Schmidt85}: ``Due to
bending this change of length is generally {asymmetric} about (the midsurface) and leads to a shift of the original midsurfaces.... This asymmetry requires at least a {quadratic} representation of the (deformation in thickness direction).''}, see the detailed discussion of this point in \cite{Ramm00}
and compare with \cite{Ramm92,Ramm94,Ramm96,Ramm97,Sansour98c}.

We consider therefore the following \textit{8-parameter quadratic ansatz} in the thickness direction for the reconstructed total deformation $\varphi_s:\Omega_h\subset \mathbb{R}^3\rightarrow \mathbb{R}^3$ of the shell-like structure
\begin{align}
\varphi_s(x_1,x_2,x_3)\,=\,&m(x_1,x_2)+\bigg(x_3\varrho_m(x_1,x_2)+\dd\frac{x_3^2}{2}\varrho_b(x_1,x_2)\bigg)\overline{Q}_{e,s}(x_1,x_2)\nabla_x\Theta(x_1,x_2,0)\, e_3\, ,
\end{align}
where $m: \omega\subset\mathbb{R}^2\rightarrow\mathbb{R}^3$ takes on the role of the deformation of the midsurface of
the shell viewed as a parametrized surface,   the yet indeterminate functions $\varrho_m,\,\varrho_b:\omega\subset\mathbb{R}^2\rightarrow  \mathbb{R}$ allow in principal for symmetric thickness stretch  ($\varrho_m\neq1$) and asymmetric thickness stretch ($\varrho_b\neq 0$) about the midsurface.

To construct the  new   geometrically nonlinear   Cosserat shell model we start with an 8-parameter ansatz  (8 `dof': 3 components of the membrane deformation $m$, 3 degrees of freedom for $\overline{R}\in\rm{SO}(3)$,  2 degrees of freedom over the thickness $\varrho_m$ and $\varrho_b$) but in the end we will arrive at a \textit{6-parameter model}.  This will be possible because in the isotropic case the two scalar parameters $\varrho_m$ and $\varrho_b$ (the degrees of freedom over the thickness) can  analytically  be condensed out.

In view of \eqref{idp}, the above 8-parameter quadratic ansatz in the thickness direction can be written as
\begin{align}\label{ansatz}
\varphi_s(x_1,x_2,x_3)
\,=\,&m(x_1,x_2)+\bigg(x_3\varrho_m(x_1,x_2)+\dd\frac{x_3^2}{2}\varrho_b(x_1,x_2)\bigg)\overline{Q}_{e,s}(x_1,x_2)\, n_0\, .
\end{align}

With regard to the total deformation, this is then a kind of plate  formulation since the midsurface of the fictitious Cartesian reference configuration $\omega\subset\mathbb{R}^2$ is assumed to lie in the plane. This implies for
the total (reconstructed) deformation gradient of the shell
\begin{align}
\label{quadratic_shell_gradient}
F_s\,=\,\nabla_x\varphi_s(x_1,x_2,x_3)\,=\,&(\nabla  m|\, \varrho_m\,\overline{Q}_{e,s}(x_1,x_2)\nabla_x\Theta(x_1,x_2,0)\, e_3) \\&+x_3\, (\nabla \left[\varrho_m\,\overline{Q}_{e,s}(x_1,x_2)\nabla_x\Theta(x_1,x_2,0)\, e_3\right]|\varrho_b\,
\overline{Q}_{e,s}(x_1,x_2)\nabla_x\Theta(x_1,x_2,0)\, e_3)\nonumber \\
&+\frac{x_3^2}{2}(\nabla \left[\varrho_b\,\overline{Q}_{e,s}(x_1,x_2)\nabla_x\Theta(x_1,x_2,0)\, e_3\right]|0) \, \nonumber
\end{align}
{and especially}
\begin{align}\label{quadratic_shell_gradient_3}
 F_s\, e_3&=\, \varrho_m\,\overline{Q}_{e,s}(x_1,x_2)\nabla_x\Theta(x_1,x_2,0)\, e_3 + x_3\, \varrho_b\,
\overline{Q}_{e,s}(x_1,x_2)\nabla_x\Theta(x_1,x_2,0)\, e_3 \notag\\
&\overset{\mathclap{\eqref{eq:id_Qthetae3}}}{=} (\varrho_m+ x_3\, \varrho_b)\,\overline{R}_{s}(x_1,x_2)\, e_3.
\end{align}
If
$\overline{Q}_e\,=\,\id_3,\  \,\varrho_m\,=\,1,\,\,\varrho_b\,=\,0,\,\,m\,=\,y_0$ (as in the reference configuration $\Omega_\xi$), then $F_s\,=\,\nabla_x \Theta.$ 

In the rest of the paper we do not write explicitly the dependence of these functions on $x_1$ and $x_2$.

\subsection{From an 8-parameter ansatz to a 6-parameter model via the fictitious boundary conditions}
We  intend to find $\varrho_m$ and $\varrho_b$ such that the  boundary conditions \eqref{conditii} in the fictitious configuration
\begin{align}
\bigl\langle {T}_{\rm Biot}(\overline{U}_{e,s})\, n_0,n_0\bigr\rangle&\,=\,0,
\qquad
\bigl\langle  {\partial_{x_3}}{T}_{\rm Biot}(\overline{U}_{e,s})\, n_0,n_0\bigr\rangle\,=\,0 \qquad \text{for}\qquad x_3\,=\,0\notag
\end{align}
are satisfied. These boundary conditions are equivalent to
\begin{align}\label{eq5.9}
\bigl\langle {T}_{\rm Biot}(\overline{U}_{e,s})\,Q_0\, e_3,Q_0\, e_3\bigr\rangle&\,=\,0 ,
\qquad
\bigl\langle  {\partial_{x_3}}{T}_{\rm Biot}(\overline{U}_{e,s})\,Q_0\, e_3,Q_0\, e_3\bigr\rangle\,=\,0 \qquad \text{for}\qquad x_3\,=\,0,
\end{align}
and further  to
\begin{align}\label{bcnb}
\bigl\langle  Q_0^T\,{T}_{\rm Biot}(\overline{U}_{e,s})\,Q_0\, e_3, e_3\bigr\rangle&\,=\,0,
\qquad
\bigl\langle  Q_0^T\,{\partial_{x_3}}{T}_{\rm Biot}(\overline{U}_{e,s})\,Q_0\, e_3, e_3\bigr\rangle\,=\,0 \qquad \text{for}\qquad x_3\,=\,0,
\end{align}
where
\begin{align}\label{noi}
\overline {U}_{e,s}&\,=\,\overline{Q}_{e,s}^T\, F_s\,[\nabla_x\Theta]^{-1}\,=\,{Q}_0\,\overline{R}^T\, F_s\,[\nabla_x\Theta]^{-1}, \\
{T}_{\rm Biot}(\overline {U}_{e,s})&\,=\,
2\,\mu \,\sym(\overline {U}_{e,s}-\id_3)+2\,\mu_{\rm c} \,  {\rm skew} (\overline {U}_{e,s}-\id_3)
+\dd{\lambda}\,{\rm tr}(\sym(\overline {U}_{e,s}-\id_3))\id_3\, .\notag
\end{align}
To this aim, we calculate
\begin{align}
2\,Q_0^T\,\sym(\overline {U}_{e,s}-\id_3)\,{Q}_0\,=\,&\,{Q}_0^T\Big(\overline{Q}_{e,s}^T F_s[\nabla_x \Theta]^{-1}+[\nabla_x \Theta]^{-T} F_s^T \overline{Q}_{e,s}-2\id_3\Big){Q}_0 \\\,=\,&\,{Q}_0^T\overline{Q}_{e,s}^T F_s  U _0^{-1}{Q}_0^{T}{Q}_0+{Q}_0^T{Q}_0 U _0^{-T}F_s^T \overline{Q}_{e,s}{Q}_0-2\id_3
\notag \\\,=\,&\,{Q}_0^T\overline{Q}_{e,s}^T F_s  U _0^{-1}+ U _0^{-T}F_s^T \overline{Q}_{e,s}{Q}_0-2\id_3\, ,\notag \\
2\,Q_0^T\,{\rm skew}(\overline {U}_{e,s}-\id_3)\,{Q}_0\,=\,&\,{Q}_0^T\Big(\overline{Q}_{e,s}^T F_s[\nabla_x \Theta]^{-1}-[\nabla_x \Theta]^{-T} F_s^T \overline{Q}_{e,s}\Big){Q}_0 \\\,=\,&\,{Q}_0^T\overline{Q}_{e,s}^T F_s  U _0^{-1}{Q}_0^{T}{Q}_0-{Q}_0^T{Q}_0 U _0^{-T}F_s^T \overline{Q}_{e,s}{Q}_0
\,=\,{Q}_0^T\overline{Q}_{e,s}^T F_s  U _0^{-1}- U _0^{-T}F_s^T \overline{Q}_{e,s}{Q}_0\, ,\notag
\end{align}
and
\begin{align}
 \bigl\langle  Q_0^T\,\sym(\overline {U}_{e,s}-\id_3)&\,{Q}_0\, e_3, e_3\bigr\rangle \,=\, \bigl\langle 
({Q}_0^T\overline{Q}_{e,s}^T F_s  U _0^{-1}+ U _0^{-T}F_s^T \overline{Q}_{e,s}{Q}_0-2\id_3)\, e_3, e_3\bigr\rangle \notag \\&{=\,}\ 
2\, \bigl\langle 
( U _0^{-1}F_s^T \overline{Q}_{e,s}{Q}_0-\id_3)\, e_3, e_3\bigr\rangle =\,2\, \bigl\langle 
U _0^{-1}F_s^T \overline{Q}_{e,s}{Q}_0\, e_3, e_3\bigr\rangle -2,\notag \\
&{=\,2\,\bigl\langle 
F_s^T \overline{Q}_{e,s}{Q}_0\, e_3, U _0^{-T}\, e_3\bigr\rangle -2
=\,2\,\bigl\langle 
F_s^T \overline{Q}_{e,s}{Q}_0\, e_3, e_3\bigr\rangle -2
}\notag\\
&{\overset{\mathclap{\eqref{eq:id_Qthetae3}}}{=}\,2\,\bigl\langle 
F_s^T \overline{R}_{s}\, e_3, e_3\bigr\rangle -2 = 2\,\bigl\langle 
\overline{R}_{s}\, e_3, F_s\, e_3\bigr\rangle -2} \notag
\\
& {\overset{\mathclap{\eqref{quadratic_shell_gradient_3}}}{=} \,2\,\bigl\langle 
\overline{R}_{s}\, e_3,(\varrho_m+ x_3\, \varrho_b)\,\overline{R}_{s}\, e_3\bigr\rangle -2 = 2\, (\varrho_m+ x_3\, \varrho_b-1)
} 
\end{align}
{where we have used the special structure of $U_0=\begin{pmatrix} * & * & 0 \\ * & * & 0 \\0 & 0 & 1\end{pmatrix} $ and $\overline{R}_{s}\in{\rm SO}(3)$. Furthermore, we have}

\begin{align}
 \bigl\langle  Q_0^T\,{\rm skew}(\overline {U}_{e,s}-\id_3)&\,{Q}_0\, e_3, e_3\bigr\rangle \,=\, \bigl\langle (
{Q}_0^T\overline{Q}_{e,s}^T F_s  U _0^{-1}- U _0^{-T}F_s^T \overline{Q}_{e,s}{Q}_0)\, e_3, e_3\bigr\rangle \,=\,0, 
\end{align}
and
\begin{align}
 \bigl\langle  Q_0^T\,{\rm tr}(\sym(\overline {U}_{e,s}-\id_3))&\,\id_3 \,{Q}_0\, e_3, e_3\bigr\rangle \,=\,{\rm tr}(\sym(\overline {U}_{e,s}-\id_3)) \bigl\langle  Q_0^T\,{Q}_0\, e_3, e_3\bigr\rangle \,=\,{\rm tr}(\sym(\overline {U}_{e,s}-\id_3))\notag \\
\,=\,& \bigl\langle \sym(\overline {U}_{e,s}-\id_3),\id_3\bigr\rangle \,=\, \bigl\langle {Q}_0^T\overline{Q}_{e,s}^T F_s  U _0^{-1}+ U _0^{-T}F_s^T \overline{Q}_{e,s}{Q}_0-2\id_3,\id_3\bigr\rangle \notag \\
 \,=\, &2\, \bigl\langle  U _0^{-1}F_s^T \overline{Q}_{e,s}{Q}_0-\id_3,\id_3\bigr\rangle \,=\,
2\, \bigl\langle  F_s^T \overline{Q}_{e,s}{Q}_0, U _0^{-1}\bigr\rangle  -6\notag \\
\,=\,&2[ \bigl\langle (\nabla  m|0)^T\overline{Q}_{e,s} {Q}_0(x_3), U _0^{-1}\bigr\rangle +\varrho_m +x_3\varrho_m \bigl\langle  (\nabla (\,\overline{Q}_{e,s} {Q}_0(x_3)\, e_3)|0)^T\overline{Q}_{e,s} {Q}_0(x_3), U _0^{-1}\bigr\rangle \nonumber \\&\ +x_3\varrho_b +\frac{x_3^2}{2}\varrho_b \bigl\langle  (\nabla (\,\overline{Q}_{e,s} {Q}_0(x_3)\, e_3)|0)^T\overline{Q}_{e,s} {Q}_0(x_3), U _0^{-1}\bigr\rangle -3] \nonumber \\
\,=\,&2[ \bigl\langle (\nabla  m|0)^T\overline{Q}_{e,s} ,[\nabla_x\Theta(x_3)]^{-1}\bigr\rangle +\varrho_m +x_3\varrho_m \bigl\langle  (\nabla (\,\overline{Q}_{e,s} {Q}_0(x_3)\, e_3)|0)^T\overline{Q}_{e,s} ,[\nabla_x\Theta(x_3)]^{-1}\bigr\rangle \nonumber \\&\  +x_3\varrho_b +\frac{x_3^2}{2}\varrho_b \bigl\langle  (\nabla (\,\overline{Q}_{e,s} {Q}_0(x_3)\, e_3)|0)^T\overline{Q}_{e,s} ,[\nabla_x\Theta(x_3)]^{-1}\bigr\rangle -3]\, .
\end{align}

We deduce
\begin{align}
\bigl\langle  Q_0^T\,{T}_{\rm Biot}(\overline{U}_{e,s})\,Q_0\, e_3, e_3\bigr\rangle\,=\,&\,\mu[2(\varrho_m-1)+2x_3\varrho_b]+\lambda\Big[ \bigl\langle (\nabla  m|0)^T\overline{Q}_{e,s} ,[\nabla_x\Theta(x_3)]^{-1}\bigr\rangle +\varrho_m\nonumber \\
&\ \ +x_3\varrho_m \bigl\langle  (\nabla (\,\overline{Q}_{e,s} {Q}_0(x_3)\, e_3)|0)^T\overline{Q}_{e,s} ,[\nabla_x\Theta(x_3)]^{-1}\bigr\rangle \nonumber \\&\  \ +x_3\varrho_b +\frac{x_3^2}{2}\varrho_b \bigl\langle  \overline{Q}_{e,s} ^T(\nabla (\,\overline{Q}_{e,s} {Q}_0(x_3)\, e_3)|0),[\nabla_x\Theta(x_3)]^{-T}\bigr\rangle -3\Big].
\end{align}
The requirement (\ref{bcnb})${}_1$ leads to the ``plane stress" requirement for $x_3\,=\,0$\,(zero normal tractions on the upper and lower surface)
\begin{align}
2\,\mu\,(\varrho_m-1)+\lambda\,\Big[ \bigl\langle (\nabla  m|0)^T\overline{Q}_{e,s} ,[\nabla_x\Theta(0)]^{-1}\bigr\rangle +\varrho_m-3\Big]\,=\,0,
\end{align}
which, considering
$
\nabla_x \Theta\,=\,(\nabla y_0|n_0)+x_3(\nabla n_0|0)\, ,
$
is equivalent to
\begin{align}
2\,\mu\,&(\varrho_m-1)+\lambda\,[ \bigl\langle \overline{Q}_{e,s} ^T(\nabla  m|0),(\nabla y_0|n_0)^{-T}\bigr\rangle +\varrho_m-3]\,=\,0
\end{align}
and we obtain 
\begin{equation}\label{finalrhom}
\varrho_m\,=\,1-\frac{\lambda}{\lambda+2\mu}[ \bigl\langle  \overline{Q}_{e,s}^T(\nabla m|0)[\nabla_x\Theta(0)]^{-1},\id_3\bigr\rangle -2].
\end{equation}
Now, let us consider the boundary conditions (\ref{bcnb})${}_2$ and 
 observe 
\begin{align}\label{bcnb1}
\bigl\langle  Q_0^T\,\partial_{ x_3}{T}_{\rm Biot}(\overline{U}_{e,s})\,Q_0\, e_3, e_3\bigr\rangle&\,=\,\bigl\langle \partial_{ x_3}{Y}_{\rm Biot}(\overline{U}_{e,s})\,Q_0\, e_3, Q_0\, e_3\bigr\rangle\,=\,
\partial_{ x_3}\bigl\langle  Q_0^T{T}_{\rm Biot}(\overline{U}_{e,s})\,Q_0\, e_3, e_3\bigr\rangle,
\end{align}
since $Q_0\, e_3\,=\,n_0$ and therefore it is independent of $x_3$. We deduce
\begin{align}
\partial_{ x_3}\bigl\langle  Q_0^T{T}_{\rm Biot}(\overline{U}_{e,s})\,Q_0\, e_3, e_3\bigr\rangle\,=\,&
\,2\,\mu\,\varrho_b+\lambda[ \bigl\langle \overline{Q}_{e,s} ^T(\nabla  m|0),\partial_{ x_3}[\nabla_x\Theta(x_3)]^{-T}\bigr\rangle \nonumber \\
&\ +\varrho_m \bigl\langle  \overline{Q}_{e,s} ^T(\nabla (\,\overline{Q}_{e,s} \nabla_x\Theta(0)\, e_3)|0),[\nabla_x\Theta(x_3)]^{-T}\bigr\rangle \notag \\&\ +
x_3\varrho_m \bigl\langle  \overline{Q}_{e,s} ^T(\nabla (\,\overline{Q}_{e,s} \nabla_x\Theta(0)\, e_3)|0),\partial_{ x_3}[\nabla_x\Theta(x_3)]^{-T}\bigr\rangle  \\&\ +\varrho_b +{x_3}\varrho_b \bigl\langle  \overline{Q}_{e,s} ^T(\nabla (\,\overline{Q}_{e,s} \nabla_x\Theta(0)\, e_3)|0),[\nabla_x\Theta(x_3)]^{-T}\bigr\rangle -3]\notag \\
&\ +\frac{x_3^2}{2}\varrho_b \bigl\langle  \overline{Q}_{e,s} ^T(\nabla (\,\overline{Q}_{e,s} \nabla_x\Theta(0)\, e_3)|0),\partial_{ x_3}[\nabla_x\Theta(x_3)]^{-T}\bigr\rangle ]\notag
\end{align}
and the boundary condition (\ref{bcnb1})${}_2$ becomes
\begin{align}\hspace{-0.25cm}
2\,\mu\,\varrho_b+\lambda[ \bigl\langle \overline{Q}_{e,s} ^T(\nabla  m|0),\partial_{ x_3}[\nabla_x\Theta(x_3)]^{-T}\bigr\rangle \Big|_{x_3\,=\,0}\!\!+\varrho_m \bigl\langle  \overline{Q}_{e,s} ^T(\nabla (\,\overline{Q}_{e,s} \nabla_x\Theta(0)\, e_3)|0),[\nabla_x\Theta(0)]^{-T}\bigr\rangle +\varrho_b]\,=\,0.
\end{align}
Here we have used that $\partial_{x_3}[\nabla_x\Theta]^{-1}\Big|_{x_3\,=\,0}$ is finite, since $\det(\nabla_x\Theta(x_3))$ has a third order polynomial expression, see  Proposition \ref{propnablatheta},  and  $\det(\nabla y_0|n_0)\neq 0$.
We also remark that
\begin{align}\partial_{x_3}[\nabla_x\Theta(x_3)]^{-1}\Big|_{x_3\,=\,0}&\,=\,-[\nabla_x\Theta(0)]^{-1}\partial_{x_3}[\nabla_x\Theta(x_3)]\Big|_{x_3\,=\,0}\, [\nabla_x\Theta(0)]^{-1}\,=\,
-(\nabla y_0|n_0)^{-1}\,(\nabla n_0|0)\, (\nabla y_0|n_0)^{-1}.\notag
\end{align}
Therefore,  equation (\ref{bcnb1})${}_2$  is equivalent to
\begin{align}
2\,\mu\,\varrho_b+\lambda[&- \bigl\langle \overline{Q}_{e,s} ^T(\nabla  m|0)\,[\nabla_x\Theta(0)]^{-1}\,(\nabla n_0|0)\, [\nabla_x\Theta(0)]^{-1},\id_3\bigr\rangle  \\&\qquad \qquad \qquad +\varrho_m \bigl\langle  \overline{Q}_{e,s} ^T(\nabla (\,\overline{Q}_{e,s} \nabla_x\Theta(0)\, e_3)|0),[\nabla_x\Theta(0)]^{-T}\bigr\rangle +\varrho_b]\,=\,0,\notag
\end{align}
which yields
\begin{align}
\varrho_b\,=\,& \;\frac{\lambda}{\lambda+2\,\mu}\, \bigl\langle \overline{Q}_{e,s} ^T(\nabla  m|0)\,[\nabla_x\Theta(0)]^{-1}\,(\nabla n_0|0)\, [\nabla_x\Theta(0)]^{-1},\id_3\bigr\rangle \notag \\&-\varrho_m\,\frac{\lambda}{\lambda+2\,\mu}\, \bigl\langle  \overline{Q}_{e,s} ^T(\nabla (\,\overline{Q}_{e,s} \nabla_x\Theta(0)\, e_3)|0)\,[\nabla_x\Theta(0)]^{-1},\id_3\bigr\rangle  \notag\\
\,=\,& \;\frac{\lambda}{\lambda+2\,\mu}\, \bigl\langle \overline{Q}_{e,s} ^T(\nabla  m|0)\,[\nabla_x\Theta(0)]^{-1}\,(\nabla n_0|0)\, [\nabla_x\Theta(0)]^{-1},\id_3\bigr\rangle  \\&-\frac{\lambda}{\lambda+2\,\mu}\, \bigl\langle  \overline{Q}_{e,s} ^T(\nabla (\,\overline{Q}_{e,s} \nabla_x\Theta(0)\, e_3)|0)\,[\nabla_x\Theta(0)]^{-1},\id_3\bigr\rangle  \notag\\
&+\frac{\lambda^2}{(\lambda+2\,\mu)^2}\,\Big[ \bigl\langle  \overline{Q}_{e,s} ^T(\nabla (\,\overline{Q}_{e,s} \nabla_x\Theta(0)\, e_3)|0)\,[\nabla_x\Theta(0)]^{-1},\id_3\bigr\rangle \Big]\Big[ \bigl\langle  \overline{Q}_{e,s}^T(\nabla m|0)[\nabla_x\Theta(0)]^{-1},\id_3\bigr\rangle -2\Big].\notag
\end{align}

\begin{remark}
The term $\frac{\lambda^2}{(\lambda+2\,\mu)^2} \bigl\langle  \overline{Q}_{e,s}^T (\nabla (\,\overline{Q}_{e,s}\nabla_x\Theta(0)\, e_3)|0)[\nabla_x\Theta(0)]^{-1},\id_3\bigr\rangle [ \bigl\langle  \overline{Q}_{e,s}^T(\nabla m|0)[\nabla_x\Theta(0)]^{-1},\id_3\bigr\rangle -2]$ represents a nonlinear coupling between midsurface in-plane (membrane) strain and normal curvature, a result of the derivation not present in the underlying three-dimensional theory where only products of deformation gradient and rotations occur\footnote{In addition, this term is not invariant under reflection across the midsurface, i.e $\overline{Q}_e\,=\,(\overline{Q}_{e,1},\overline{Q}_{e,2},\overline{Q}_{e,3})\mapsto (\overline{Q}_{e,1},\overline{Q}_{e,2},-\overline{Q}_{e,3})$ \cite{Naghdi72}.}. Since we have in mind a small strain situation, this product is one order smaller than $\frac{\lambda}{\lambda+2\,\mu} \bigl\langle  (\nabla (\,\overline{Q}_{e,s}\nabla_x\Theta(0)\, e_3)|0)^T$ $\overline{Q}_{e,s}[\nabla_x\Theta(0)]^{-1},\id_3\bigr\rangle $. Therefore, we neglect this term.
The presence of the term $-\frac{\lambda}{\lambda+2\,\mu} \bigl\langle  \overline{Q}_{e,s}^T(\nabla m|0)(\nabla y_0|n_0)^{-1}(\nabla n_0|0)(\nabla y_0|n_0)^{-1},\id_3\bigr\rangle \, $ is not in contradiction  with the Cosserat-plate model \cite{Neff_plate04_cmt} because in the plate case it is automatically zero, since $\nabla n_0=\nabla e_3\equiv 0$.
\end{remark}

Thus, our considered form for $\varrho_m$ and $\varrho_b$ will be
\begin{align}\label{final_rho}
\varrho_m^e\,=\,&1-\frac{\lambda}{\lambda+2\,\mu}[ \bigl\langle  \overline{Q}_{e,s}^T(\nabla m|0)[\nabla_x\Theta(0)]^{-1},\id_3\bigr\rangle -2] \;,\notag \\
\dd\varrho_b^e\,=\,&-\frac{\lambda}{\lambda+2\,\mu} \bigl\langle  \overline{Q}_{e,s}^T(\nabla (\,\overline{Q}_{e,s}\nabla_x\Theta(0)\, e_3)|0)[\nabla_x\Theta(0)]^{-1},\id_3\bigr\rangle   \\
&+\frac{\lambda}{\lambda+2\,\mu} \bigl\langle  \overline{Q}_{e,s}^T(\nabla m|0)[\nabla_x\Theta(0)]^{-1}(\nabla n_0|0)[\nabla_x\Theta(0)]^{-1},\id_3\bigr\rangle \, .\notag
\end{align}
Remark that the condition (\ref{bcn1nou0}) is not satisfied exactly.
However, the formula (\ref{final_rho})$_1$ has a clear physical significance: {in-plane stretch leads to thickness reduction}.

Now, our final aim in the determination of $\varrho_m^e$ and $\varrho_b^e$ is  to compute them for $\overline{Q}_{e,s}\,=\,\id_3$ and $m\,=\,y_0$, which means the elastic deformation is absent, i.e., we compute
\begin{align}\label{finalrho00}
\varrho_m^0&\,=\,\dd 1-\frac{\lambda}{\lambda+2\mu}[ \bigl\langle  (\nabla y_0|0)(\nabla y_0|n_0)^{-1},\id_3\bigr\rangle -2],  \\
\dd\varrho_b^0&\,=\,-\dd\frac{\lambda}{\lambda+2\mu} \bigl\langle  (\nabla (\,\nabla_x\Theta(0)\, e_3)|0)[\nabla_x\Theta(0)]^{-1},\id_3\bigr\rangle + \dd\frac{\lambda}{\lambda+2\mu} \bigl\langle  (\nabla y_0|0)[\nabla_x\Theta(0)]^{-1}(\nabla n_0|0)[\nabla_x\Theta(0)]^{-1},\id_3\bigr\rangle \notag \\
&\,=\,-\dd\frac{\lambda}{\lambda+2\mu} \bigl\langle  (\nabla n_0|0)[\nabla_x\Theta(0)]^{-1},\id_3\bigr\rangle + \dd\frac{\lambda}{\lambda+2\mu} \bigl\langle  (\nabla y_0|0)[\nabla_x\Theta(0)]^{-1}(\nabla n_0|0)[\nabla_x\Theta(0)]^{-1},\id_3\bigr\rangle \,.
\end{align}

The identity
$
\tr({\rm A}_{y_0})\,=\, \bigl\langle  (\nabla y_0|0)(\nabla y_0|n_0)^{-1},\id_3\bigr\rangle \,=\,2,
$ (see Proposition \ref{propAB})
implies that
$
\varrho_m^0\,=\,1.
$
Next, we compute $\varrho_b^0$. 
With the help of the  curvature tensors $
{\rm A}_{y_0}, {\rm B}_{y_0}$ (see Proposition \ref{propAB})
we have
\begin{align}
{\rm tr}[(\nabla y_0|0)&(\nabla y_0|n_0)^{-1}(\nabla n_0|0)(\nabla y_0|n_0)^{-1}] \,=\,-2\,{\rm H}\, .
\end{align}
Hence, we deduce
\begin{equation}\label{finalrho0}
\begin{array}{crl}
\dd\varrho_b^0&\,=\,&\dd\frac{\lambda}{(\lambda+2\mu)}{\rm tr}[{\rm B}_{y_0}]- \dd\frac{\lambda}{(\lambda+2\mu)}{\rm tr}[{\rm L}_{y_0}]\,=\,2\frac{\lambda}{(\lambda+2\mu)}\,{\rm H}\,- 2\dd\frac{\lambda}{(\lambda+2\mu)}\,{\rm H} \,=\,0.
\end{array}
\end{equation}

Thus,  the reference values $\varrho_m^0$ and $\varrho_b^0$ of the parameters $\varrho_m^e$ and $\varrho_b^e$ are given by
$
\varrho_m^0\,=\,1,\  \varrho_b^0\,=\,0,
$
which means that in the absence of elastic deformation the ansatz \eqref{ansatz}  
$
\varphi_s^0(x_1,x_2,x_3)\,=\,y_0(x_1,x_2)+x_3\, n_0(x_1,x_2)\,=\, \Theta(x_1,x_2,x_3)
$
is exact.

\subsection{The ansatz for the deformation gradient}

Having obtained a suitable  form of the relevant coefficients $\varrho_m^e,\,\varrho_b^e$, it is expedient to base the
expansion of the three-dimensional elastic Cosserat energy on a further simplified expression, please compare with \eqref{quadratic_shell_gradient}, namely 
\begin{align}
F_s\,=&\,\nabla_x\varphi_s(x_1,x_2,x_3)\notag\\
\,=\,&(\nabla  m|\, \varrho_m\,\overline{Q}_{e,s}(x_1,x_2)\nabla_x\Theta(x_1,x_2,0)\, e_3) \notag\\&+x_3\, (\nabla \big[\underbrace{\xcancel{\varrho_m}}_{\cong\varrho_m^0= 1}\,\overline{Q}_{e,s}(x_1,x_2)\nabla_x\Theta(x_1,x_2,0)\, e_3\big]|\varrho_b\,
\overline{Q}_{e,s}(x_1,x_2)\nabla_x\Theta(x_1,x_2,0)\, e_3)\nonumber \\
&+\frac{x_3^2}{2}(\nabla \big[\underbrace{\xcancel{\varrho_b}}_{\cong\varrho_b^0= 0}\,\overline{Q}_{e,s}(x_1,x_2)\nabla_x\Theta(x_1,x_2,0)\, e_3\big]|0) \, \nonumber.
\\
\label{reduced_expansion}
\cong&\,(\nabla  m|\, \varrho_m^e\,\overline{Q}_{e,s}(x_1,x_2)\nabla_x\Theta(x_1,x_2,0)\, e_3)
 \\\nonumber  &\qquad +x_3 (\nabla \left[\,\overline{Q}_{e,s}(x_1,x_2)\nabla_x\Theta(x_1,x_2,0)\, e_3\right]|\varrho_b^e\,
\overline{Q}_{e,s}(x_1,x_2)\nabla_x\Theta(x_1,x_2,0)\, e_3).
\end{align}
\begin{remark}{\rm (Raison d'\^{e}tre)}
\begin{itemize}
\item[1)] The reduced model should at no place contain space derivatives of the thickness stretch
$\varrho_m^e$, since in the underlying three-dimensional Cosserat model curvature is only present through
the dislocation density tensor $\alpha_\xi$ (or through
the wryness tensor $\Gamma_\xi$) related only to rotations $\overline{Q}_e$.

\item[2)]If we blithely use the fully reconstructed deformation gradient $F_s$ and integrate analytically through the thickness, we would obtain second order derivatives in the energy (through derivatives on $\varrho_m^e$ and $\varrho_b^e$) both for the midsurface $m$ and microrotation $\overline{Q}_e$, leading to a coupled fourth order problem, a situation which has to be avoided for simplicity and efficiency in a subsequent numerical implementation, taking also in consideration the second order Cosserat bulk problem.

\item[3)] Keeping the quadratic ansatz \eqref{quadratic_shell_gradient} but neglecting only the derivatives of $\varrho_m^e$ and $\varrho_b^e$, i.e., basing the integration through the thickness instead on \eqref{quadratic_shell_gradient}, the reduced ansatz \eqref{reduced_expansion}
would already lead to a second order equilibrium problem and entitle us to skip the quadratic term altogether, since either $h^5$-bending terms appear or $h^3$- product of membrane and bending appear, which can be dominated through Youngs-inequality by a sum of $h^2$-membrane and $h^4$-bending terms, which themselves are subordinate (for small $h$) to the already appearing $h$-membrane and $h^3$-bending terms.

\item[4)] The error induced by the modified ansatz \eqref{reduced_expansion}  in the energy density will be of higher order under the assumption of small elastic midsurface strain.

\item[5)] Finally, it should be observed that by using \eqref{reduced_expansion} we are consistent with  John's general result \cite{John61,John65} that the stress distribution through the thickness is approximately linear for a thin shell.

\end{itemize}
\end{remark}
Motivated by the above remarks on the ansatz for the (reconstructed) deformation gradient  \eqref{reduced_expansion}, the chain rule leads to the approximation 
\begin{align}\label{red2}
&F_{s,\xi}\,=\,\nabla_x\varphi_s(x_1,x_2,x_3)[\nabla_x \Theta(x_1,x_2,x_3)]^{-1}  \\\nonumber
&\cong \widetilde{F}_{e,s}:\,=\,(\nabla  m|\, \varrho_m^e\,\overline{Q}_{e,s}(x_1,x_2)\nabla_x\Theta(x_1,x_2,0)\, e_3)[\nabla_x \Theta(x_1,x_2,x_3)]^{-1}
 \\\nonumber  &\qquad \qquad\qquad +x_3 (\nabla \left[\,\overline{Q}_{e,s}(x_1,x_2)\nabla_x\Theta(x_1,x_2,0)\, e_3\right]|\varrho_b^e\,
\overline{Q}_{e,s}(x_1,x_2)\nabla_x\Theta(x_1,x_2,0)\, e_3)[\nabla_x \Theta(x_1,x_2,x_3)]^{-1}
.
\end{align}

Our model will be constructed under the following assumption  upon the thickness 
\begin{align}
 {h}\,|{\kappa_1}|<\frac{1}{2},\qquad  {h}\,|{\kappa_2}|<\frac{1}{2},
\end{align}
where  $\kappa_1,\kappa_2$ are the  principal curvatures.

In consequence,  using  (iii) from Proposition \ref{propnablatheta}, we find that the (reconstructed) deformation gradient  is given by
\begin{align}\label{e68}
\widetilde{F}_{e,s}\,=\,&\,\frac{1}{b(x_3)}\left[(\nabla  m| \,\overline{Q}_{e,s}\nabla_x\Theta(0)\, e_3)+x_3 (\nabla \left[\,\overline{Q}_{e,s}\nabla_x\Theta(0)\, e_3\right]|0)
+
(\varrho_m^e-1+x_3\varrho_b^e)(0|0| \,\overline{Q}_{e,s}\nabla_x\Theta(0)\, e_3)\right]\notag
 \\
&\qquad \quad \times \left[\id_3+x_3({\rm L}_{y_0}^\flat-2\,{\rm H}\, \id_3)+x_3^2\, {\rm K}\begin{footnotesize}\begin{pmatrix}
0&0&0 \\
0&0&0 \\
0&0&1
\end{pmatrix}\end{footnotesize}
\right] [	\nabla_x \Theta(0)]^{-1}{,}
\end{align}
{where we have set $b(x_3)\coloneqq 1-2\,{\rm H}\,x_3 +{\rm K}\,x_3^2$.}
Next, we need to express the  tensors
\begin{align}\label{defEes}
\widetilde{\mathcal{E}}_{s}&:\,=\,\overline{U}_{e,s}-\id
{_3}
\,=\,\overline{Q}_{e,s}^T\, \widetilde{F}_{e,s}-\id_3,\qquad
\Gamma_{s}  :\,=\,  (\mathrm{axl}(\overline{Q}_{e,s}^T\,\partial_{x_1} \overline{Q}_e)\,|\, \mathrm{axl}(\overline{Q}_{e,s}^T\,\partial_{x_2} \overline{Q}_e)\,|0)[\nabla_x \Theta(x_3)]^{-1}
\end{align}
with the help of  the usual strain measures  in the nonlinear 6-parameter shell theory \cite{Eremeyev06}, see Section \ref{comparisonP}.  Therefore, we introduce the following tensor fields on the surface $\omega_\xi\,$ \cite{Libai98,Pietraszkiewicz-book04,Eremeyev06,Birsan-Neff-MMS-2014,Birsan-Neff-L54-2014}
\begin{align}\label{e55}
\mathcal{E}_{m,s} & :\,=\,    \overline{Q}_{e,s}^T  (\nabla  m|\overline{Q}_{e,s}\nabla_x\Theta(0)\, e_3)[\nabla_x \Theta(0)]^{-1}-\id_3\qquad \qquad \ \ \text{(the {elastic shell strain tensor})} ,  \\
\mathcal{K}_{e,s} & :\,=\,  (\mathrm{axl}(\overline{Q}_{e,s}^T\,\partial_{x_1} \overline{Q}_e)\,|\, \mathrm{axl}(\overline{Q}_{e,s}^T\,\partial_{x_2} \overline{Q}_e)\,|0)[\nabla_x \Theta(0)]^{-1} \, \qquad \text{(elastic shell bending--curvature tensor)}.\notag
\end{align}

\begin{lemma}\label{LemaEsEe} The following identities are satisfied 
\begin{itemize}
\item[i)] $\mathcal{E}_{m,s} \,=\,     ( \overline{Q}_{e,s}^T\,\nabla m -\nabla y_0 |0)[\nabla_x \Theta(0)]^{-1}=Q_0( \overline{R}_s^T\,\nabla m -Q_0^T\nabla y_0 |0)[\nabla_x \Theta(0)]^{-1}$;
\qquad ii) $\mathcal{E}_{m,s} {\rm A}_{y_0} \,= \,\mathcal{E}_{m,s} $; 
\item[iii)] {$\mathcal{K}_{e,s} {\rm A}_{y_0} \,= \,\mathcal{K}_{e,s} $; }
\qquad iv) $\overline{Q}_{e,s}^T\,(\nabla [\overline{Q}_{e,s}\nabla_x\Theta (0)\, e_3]\,|\,0)\,[\nabla_x \Theta(0)]^{-1} \,= \,{\rm C}_{y_0} \mathcal{K}_{e,s}-{\rm B}_{y_0}$.
\end{itemize}
\end{lemma}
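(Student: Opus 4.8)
The whole lemma is column-block bookkeeping resting on a few elementary facts evaluated at $x_3=0$: $\nabla_x\Theta(0)=(\nabla y_0|n_0)$, whence $\nabla_x\Theta(0)\,e_3=n_0$ and, since $\nabla_x\Theta\in{\rm GKC}$ with $\varrho=1$, also $[\nabla_x\Theta(0)]^{-1}n_0=e_3$ and $\id_3=(\nabla y_0|n_0)[\nabla_x\Theta(0)]^{-1}$; the factorisation $\overline{Q}_{e,s}=\overline{R}_s\,Q_0^T(\cdot,\cdot,0)$; the orthogonality $\overline{Q}_{e,s}^T\overline{Q}_{e,s}=\id_3$; and the column-block rule $(M|0)\,\mathrm{diag}(1,1,0)=(M|0)$ for $M\in\mathbb{R}^{3\times 2}$ together with the factorisation ${\rm A}_{y_0}=\nabla_x\Theta(0)\,\mathrm{diag}(1,1,0)\,[\nabla_x\Theta(0)]^{-1}$ (which follows from $(\nabla y_0|0)=\nabla_x\Theta(0)\,\mathrm{diag}(1,1,0)$). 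I would record all of these at the outset, and in addition either quote from Proposition~\ref{propAB} or re-derive ${\rm C}_{y_0}=-\,\mathrm{anti}(n_0)$, where $\mathrm{anti}(v)$ denotes the skew matrix with $\mathrm{anti}(v)w=v\times w$; this last identity follows from the cofactor/cross-product formula $\mathrm{anti}(A v)=\det(A)\,A^{-T}\mathrm{anti}(v)A^{-1}$ applied with $A=\nabla_x\Theta(0)$ and $v=e_3$, using $\begin{pmatrix}0&1&0\\-1&0&0\\0&0&0\end{pmatrix}=-\,\mathrm{anti}(e_3)$ and $\nabla_x\Theta(0)\,e_3=n_0$.

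For \emph{i)}, I would push $\overline{Q}_{e,s}^T$ into the block, using $\overline{Q}_{e,s}^T\overline{Q}_{e,s}=\id_3$ and $\nabla_x\Theta(0)\,e_3=n_0$, to get $\overline{Q}_{e,s}^T(\nabla m\,|\,\overline{Q}_{e,s}\nabla_x\Theta(0)e_3)=(\overline{Q}_{e,s}^T\nabla m\,|\,n_0)$; then subtract $\id_3=(\nabla y_0|n_0)[\nabla_x\Theta(0)]^{-1}$ and merge the (equal) last columns to obtain $\mathcal{E}_{m,s}=(\overline{Q}_{e,s}^T\nabla m-\nabla y_0\,|\,0)[\nabla_x\Theta(0)]^{-1}$; the second form follows by inserting $\overline{Q}_{e,s}^T=Q_0\overline{R}_s^T$ and $\nabla y_0=Q_0Q_0^T\nabla y_0$ and factoring $Q_0$ out of the block. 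For \emph{ii)} and \emph{iii)}, note that both $\mathcal{E}_{m,s}$ (by~i)) and $\mathcal{K}_{e,s}$ (by definition \eqref{e55}) have the form $(M|0)[\nabla_x\Theta(0)]^{-1}$ with $M\in\mathbb{R}^{3\times 2}$; then $(M|0)[\nabla_x\Theta(0)]^{-1}\,{\rm A}_{y_0}=(M|0)\,\mathrm{diag}(1,1,0)\,[\nabla_x\Theta(0)]^{-1}=(M|0)[\nabla_x\Theta(0)]^{-1}$, which is precisely the assertion (equivalently, this is the projector property of ${\rm A}_{y_0}$ recorded in Proposition~\ref{propAB}).

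For \emph{iv)}, I would differentiate columnwise: for $\alpha=1,2$ one has $\overline{Q}_{e,s}^T\partial_{x_\alpha}(\overline{Q}_{e,s}n_0)=(\overline{Q}_{e,s}^T\partial_{x_\alpha}\overline{Q}_{e,s})n_0+\partial_{x_\alpha}n_0$, and since $\overline{Q}_{e,s}^T\partial_{x_\alpha}\overline{Q}_{e,s}\in\mathfrak{so}(3)$ we may rewrite $(\overline{Q}_{e,s}^T\partial_{x_\alpha}\overline{Q}_{e,s})n_0=\mathrm{axl}(\overline{Q}_{e,s}^T\partial_{x_\alpha}\overline{Q}_{e,s})\times n_0=-\,\mathrm{anti}(n_0)\,\mathrm{axl}(\overline{Q}_{e,s}^T\partial_{x_\alpha}\overline{Q}_{e,s})$. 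Collecting the two columns together with the zero third column gives $\overline{Q}_{e,s}^T(\nabla[\overline{Q}_{e,s}n_0]\,|\,0)=-\,\mathrm{anti}(n_0)\big(\mathrm{axl}(\overline{Q}_{e,s}^T\partial_{x_1}\overline{Q}_{e,s})\,|\,\mathrm{axl}(\overline{Q}_{e,s}^T\partial_{x_2}\overline{Q}_{e,s})\,|\,0\big)+(\nabla n_0|0)$; right-multiplying by $[\nabla_x\Theta(0)]^{-1}$ turns the first summand into $-\,\mathrm{anti}(n_0)\,\mathcal{K}_{e,s}$ and the second into $-\,{\rm B}_{y_0}$, and the identity ${\rm C}_{y_0}=-\,\mathrm{anti}(n_0)$ finishes the proof. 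I expect the only step beyond routine block algebra to be this last identification ${\rm C}_{y_0}=-\,\mathrm{anti}(n_0)$ (equivalently, recognising the cross-product block as the alternator tensor), which is where the cofactor identity enters; all the rest reduces to the geometric relations at $x_3=0$ and the bookkeeping with the $(\cdot|\cdot)$ notation.
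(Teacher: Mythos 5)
Your proof is correct, and parts i)--iii) follow essentially the same route as the paper (i) by expanding $\id_3=(\nabla y_0|n_0)[\nabla_x\Theta(0)]^{-1}$ and merging the last columns; ii)--iii) by the ${\rm A}_{y_0}$-projector property, which is exactly Proposition~\ref{propAB}~v)). Part iv), however, is a genuinely different argument from the one in the paper. The paper's proof of iv) works in the \emph{total} frame: it writes $\mathcal{K}_{e,s}$ via \eqref{keRQ} as the difference of an axl-term built from $\overline{R}_s=\overline{Q}_{e,s}Q_0$ and one built from $Q_0$, computes $\overline{R}_s^T\partial_{x_\alpha}\overline{R}_s$ explicitly in the director basis $d_i$, applies ${\rm C}_{y_0}Q_0$ to identify the first piece with the target, and then invokes Proposition~\ref{propAB}~vii) to absorb the second piece as $-{\rm B}_{y_0}$. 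You instead work directly in the \emph{elastic} frame: a single Leibniz step on $\overline{Q}_{e,s}n_0$, the cross-product representation $Aw=\mathrm{axl}(A)\times w=-\mathrm{anti}(w)\,\mathrm{axl}(A)$ for $A\in\mathfrak{so}(3)$, and the identification ${\rm C}_{y_0}=-\,\mathrm{anti}(n_0)$ (equivalent to Proposition~\ref{propAB}~vi), but which you derive more elementarily from $\mathrm{anti}(Av)=\det(A)\,A^{-T}\mathrm{anti}(v)A^{-1}$ rather than from the polar decomposition $\nabla_x\Theta=Q_0U_0$). Your route is shorter and does not need the director computations or item vii) of Proposition~\ref{propAB}; the paper's route has the virtue of re-using the decomposition \eqref{keRQ}, which appears elsewhere in the derivation, and it gives item vii) independent content. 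Both are valid; your version is arguably the cleaner standalone proof of iv).
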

\begin{proof}
{Starting with i), we observe}
\begin{align}
\mathcal{E}_{m,s} =    (\overline{Q}_{e,s}^T \nabla  m|n_0)[\nabla_x \Theta(0)]^{-1}-\id_3= (\overline{Q}_{e,s}^T \nabla  m|n_0)[\nabla_x \Theta(0)]^{-1}-(\nabla y_0|n_0)\,[\nabla_x \Theta(0)]^{-1}.
\end{align}
{Hence, we obtain  ii) and iii)  with Proposition  \ref{propAB} (v).}

{The last item follows from the same procedure we used to establish \eqref{qiese}. We have}
\begin{align}\label{keRQ}
\mathcal{K}_{e,s} &\  \,=\,Q_0\Big[ \Big(\mathrm{axl}( \overline{R}_s^T\,\partial_{x_1}  \overline{R}_s)\,|\, \mathrm{axl}( \overline{R}_s^T\,\partial_{x_2}  \overline{R}_s)\,|\,0\Big)  \notag\\&\qquad\qquad - \Big(\mathrm{axl}(Q_0^T(0)\,\partial_{x_1} Q_0(0))\,|\, \mathrm{axl}(Q_0^T(0)\,\partial_{x_2} Q_0(0))\,|\,0\,\Big)\Big] \,[\nabla_x \Theta(0)]^{-1}.
\end{align}
We compute
\begin{align}
\overline{R}_s^T\,\partial_{x_\alpha} \overline{R}_s&\,=\,(d_1\,|\, d_2\,|\,d_3)^T(\partial_{x_\alpha}d_1\,|\, \partial_{x_\alpha}d_2\,|\,\partial_{x_\alpha}d_3)\,=\,\begin{footnotesize}\begin{pmatrix}
0& \bigl\langle  d_1, \partial_{x_\alpha}d_{2}\bigr\rangle  & \bigl\langle  d_1, \partial_{x_\alpha}d_{3}\bigr\rangle   \\
 \bigl\langle  d_2, \partial_{x_\alpha}d_{1}\bigr\rangle &0& \bigl\langle  d_2, \partial_{x_\alpha}d_{3}\bigr\rangle  \\
 \bigl\langle  d_3, \partial_{x_\alpha}d_{1}\bigr\rangle & \bigl\langle  d_3, \partial_{x_\alpha}d_{2}\bigr\rangle & 0
\end{pmatrix}\end{footnotesize},
 \\
{\rm axl}(\overline{R}_s^T\,\partial_{x_\alpha} \overline{R}_s)&\,=\,\Big(- \bigl\langle  d_2, \partial_{x_\alpha} d_{3}\bigr\rangle  \,| \, \bigl\langle  d_1, \partial_{x_\alpha}d_{3}\bigr\rangle  \,| \,- \bigl\langle  d_1, \partial_{x_\alpha}d_{2}\bigr\rangle \Big)^T,\qquad \qquad \alpha\,=\,1,2.\notag
\end{align}
Thus, we deduce
\begin{align}
{\rm C}_{y_0}\,Q_0(0)\,& \Big(\mathrm{axl}(\overline{R}_s^T(\,\partial_{x_1} \overline{R}_s)\,|\, \mathrm{axl}(\overline{R}_s^T\,\partial_{x_2} \overline{R}_s)\,|\,0\,\Big) \,[\nabla_x \Theta(0)]^{-1}\notag \\
&\,=\,Q_0(0)\, \begin{footnotesize}\begin{pmatrix}
0&1&0 \\
-1&0&0 \\
0&0&0
\end{pmatrix}\end{footnotesize}\,
\begin{footnotesize}\begin{pmatrix}
- \bigl\langle  d_2, \partial_{x_1} d_{3}\bigr\rangle  &- \bigl\langle  d_2, \partial_{x_2} d_{3}\bigr\rangle  &0 \\
 \bigl\langle  d_1, \partial_{x_1}d_{3}\bigr\rangle & \bigl\langle  d_1, \partial_{x_2}d_{3}\bigr\rangle &0 \\
- \bigl\langle  d_1, \partial_{x_1}d_{2}\bigr\rangle &- \bigl\langle  d_1, \partial_{x_2}d_{2}\bigr\rangle &0
\end{pmatrix}\end{footnotesize}\,[\nabla_x \Theta(0)]^{-1}
 \\
&\,=\,Q_0(0)\,\overline{R}_s^T\, (\partial_{x_1}d_{3}\,|\, \partial_{x_2}d_{3}\,|\,0)\,[\nabla_x \Theta(0)]^{-1}\,=\,\overline{Q}_{e,s}^T\,(\nabla [\overline{Q}_{e,s}\nabla_x\Theta (0)\, e_3]\,|\,0)\,[\nabla_x \Theta(0)]^{-1}.\notag
\end{align}
Using Proposition \ref{propAB} and \eqref{keRQ} we {conclude} 
\begin{align*}
{\rm C}_{y_0}\mathcal{K}_{e,s} &=\overline{Q}_{e,s}^T\,(\nabla [\overline{Q}_{e,s}\nabla_x\Theta (0)\, e_3]\,|0)\,[\nabla_x \Theta(0)]^{-1}
{+}
{\rm B}_{y_0}\\&=[\overline{Q}_{e,s}^T\,(\nabla [\overline{Q}_{e,s}\nabla_x\Theta (0)\, e_3]\,|0)-\nabla_x \Theta(0) {\rm L}_{y_0}^\flat]\, [\nabla_x \Theta(0)]^{-1}.\qedhere
\end{align*}
\end{proof}

Accordingly, for the strain tensor corresponding to $\, \widetilde{\mathcal{E}}_{s}\,$ and using Lemma \ref{LemaEsEe} we find the following expression of the tensor $\widetilde{\mathcal{E}}_{s}$ defined by \eqref{defEes} 
\begin{align}
\widetilde{\mathcal{E}}_{s}  :\,=\,  
\dfrac{1}{b(x_3)}\,\Big\{ &  \mathcal{E}_{m,s} +(\varrho_m^e-1)\overline{Q}_{e,s}^T(0|0| \,\overline{Q}_{e,s}\nabla_x\Theta(0)\, e_3)[	\nabla_x \Theta(0)]^{-1}\notag \\&+x_3\Big[2\,{\rm H}\, \id_3+\overline{Q}_{e,s}^T(\nabla  m| \,\overline{Q}_{e,s}\nabla_x\Theta(0)\, e_3)({\rm L}_{y_0}^\flat-2\,{\rm H}\, \id_3)[	\nabla_x \Theta(0)]^{-1}\notag \\&\qquad \ +\,{\rm C}_{y_0} \, \mathcal{K}_{e,s} -{\rm B}_{y_0} +(\varrho_m^e-1)\overline{Q}_{e,s}^T(0|0| \,\overline{Q}_{e,s}\nabla_x\Theta(0)\, e_3)\, ({\rm L}_{y_0}^\flat-2\,{\rm H}\, \id_3)[	\nabla_x \Theta(0)]^{-1}\notag \\
&\qquad\ +\,\varrho_b^e\,\overline{Q}_{e,s}^T(0|0| \,\overline{Q}_{e,s}\nabla_x\Theta(0)\, e_3)[	\nabla_x \Theta(0)]^{-1} \Big] \\&+x_3^2\Big[\overline{Q}_{e,s}^T(\nabla  m| \,\overline{Q}_{e,s}\nabla_x\Theta(0)\, e_3) {\rm K}\begin{footnotesize}\begin{pmatrix}
0&0&0 \\
0&0&0 \\
0&0&1
\end{pmatrix}\end{footnotesize}[	\nabla_x \Theta(0)]^{-1}\notag \\&\qquad \ -{\rm K}\,\id_3+\,\overline{Q}_{e,s}^T (\nabla \left[\,\overline{Q}_{e,s}\nabla_x\Theta(0)\, e_3\right]|0)({\rm L}_{y_0}^\flat-2\,{\rm H}\, \id_3)[	\nabla_x \Theta(0)]^{-1}\notag \\
&\qquad \ +(\varrho_m^e-1)\overline{Q}_{e,s}^T(0|0| \,\overline{Q}_{e,s}\nabla_x\Theta(0)\, e_3)\, {\rm K}\begin{footnotesize}\begin{pmatrix}
0&0&0 \\
0&0&0 \\
0&0&1
\end{pmatrix}\end{footnotesize}[	\nabla_x \Theta(0)]^{-1}\notag \\&\qquad \ +\,\varrho_b^e\,\overline{Q}_{e,s}^T(0|0| \,\overline{Q}_{e,s}\nabla_x\Theta(0)\, e_3)\, ({\rm L}_{y_0}^\flat-2\,{\rm H}\, \id_3)[	\nabla_x \Theta(0)]^{-1}\Big]\notag \\
&+x_3^3\Big[\,\overline{Q}_{e,s}^T (\nabla \left[\,\overline{Q}_{e,s}\nabla_x\Theta(0)\, e_3\right]|0){\rm K}\begin{footnotesize}\begin{pmatrix}
0&0&0 \\
0&0&0 \\
0&0&1
\end{pmatrix}\end{footnotesize}[	\nabla_x \Theta(0)]^{-1}\notag
 \\
&\qquad \ \ + \varrho_b^e\,\overline{Q}_{e,s}^T(0|0| \,\overline{Q}_{e,s}\nabla_x\Theta(0)\, e_3)\, {\rm K}\begin{footnotesize}\begin{pmatrix}
0&0&0 \\
0&0&0 \\
0&0&1
\end{pmatrix}\end{footnotesize}[	\nabla_x \Theta(0)]^{-1}\Big]\Big\}.\notag
\end{align}
From Proposition \ref{propAB} we have
\begin{align}
\overline{Q}_{e,s}^T(\nabla  m| \,\overline{Q}_{e,s}\nabla_x\Theta(0)\, e_3) \begin{footnotesize}\begin{pmatrix}
0&0&0 \\
0&0&0 \\
0&0&1
\end{pmatrix}\end{footnotesize}[	\nabla_x \Theta(0)]^{-1}\,=\,(0|0| \,
{\nabla_x\Theta(0)}
\, e_3) \,[	\nabla_x \Theta(0)]^{-1}\,=\,\id_3-{\rm A}_{y_0},
\end{align}
which, together with Lemma \ref{LemaEsEe} and
$
{\rm B}_{y_0}\,=\,[\nabla_x \Theta(0)]\,{\rm L}_{y_0}^\flat\,[	\nabla_x \Theta(0)]^{-1}
$
leads to 
\begin{align}
\widetilde{\mathcal{E}}_{s}  \,=\, 
\dfrac{1}{b(x_3)}\,\Big\{ &  \mathcal{E}_{m,s} +(\varrho_m^e-1)(0|0| \,\nabla_x\Theta(0)\, e_3)[	\nabla_x \Theta(0)]^{-1}\notag \\&+x_3\Big[2\,{\rm H}\, \id_3+(\mathcal{E}_{m,s} +\id_3)\, 	\,({\rm B}_{y_0}-2\,{\rm H}\, \id_3)\notag \\&\qquad \quad +\,{\rm C}_{y_0} \, \mathcal{K}_{e,s} -{\rm B}_{y_0} +[2\,{\rm H}\,(1-\varrho_m^e)+\varrho_b^e](0|0| \,\nabla_x\Theta(0)\, e_3)[	\nabla_x \Theta(0)]^{-1} \Big] \\&+x_3^2\Big[{\rm K}\,\id_3-{\rm K}\,{\rm A}_{y_0}-{\rm K}\,\id_3+\,[{\rm C}_{y_0} \, \mathcal{K}_{e,s} -{\rm B}_{y_0}]({\rm B}_{y_0}-2\,{\rm H}\, \id_3)\notag \\
&\qquad \quad +(\varrho_m^e-1)(0|0|\,\nabla_x\Theta(0)\, e_3)\, {\rm K}\begin{footnotesize}\begin{pmatrix}
0&0&0 \\
0&0&0 \\
0&0&1
\end{pmatrix}\end{footnotesize}[	\nabla_x \Theta(0)]^{-1}\notag \\&\qquad \quad +\,\varrho_b^e\,(0|0|\,\nabla_x\Theta(0)\, e_3)\, ({\rm L}_{y_0}^\flat-2\,{\rm H}\, \id_3)[	\nabla_x \Theta(0)]^{-1}\Big]\notag \\
&+x_3^3\Big[0_3+ \varrho_b^e\,(0|0| \,\nabla_x\Theta(0)\, e_3)\, {\rm K}\begin{footnotesize}\begin{pmatrix}
0&0&0 \\
0&0&0 \\
0&0&1
\end{pmatrix}\end{footnotesize}[	\nabla_x \Theta(0)]^{-1}\Big]\Big\}.\notag
\end{align}
Further, using the Cayley-Hamilton type equation and item i) from Proposition \ref{propAB},  item ii) of Lemma  \ref{LemaEsEe} and 
$
\mathcal{E}_{m,s} \, (0|0| \,\overline{Q}_{e,s}\nabla_x\Theta(0)\, e_3)\,=\,0_3
$ we deduce
\begin{align}\label{e69Dumi}
\widetilde{\mathcal{E}}_{s}   \,=\,
\dfrac{1}{b(x_3)}\,\Big\{ &  \mathcal{E}_{m,s} +(\varrho_m^e-1)(0|0| \,\nabla_x\Theta(0)\, e_3)[	\nabla_x \Theta(0)]^{-1}\notag \\&+x_3\Big[\mathcal{E}_{m,s} ( {\rm B}_{y_0}-2\,{\rm H}\, {\rm A}_{y_0})\ +\,{\rm C}_{y_0} \, \mathcal{K}_{e,s} +A_1\,(0|0| \,\nabla_x\Theta(0)\, e_3)[	\nabla_x \Theta(0)]^{-1} \Big] \\&+x_3^2\Big[\,{\rm C}_{y_0} \, \mathcal{K}_{e,s} ({\rm B}_{y_0}-2\,{\rm H}\,{\rm A}_{y_0}) +A_2\,(0|0| \,\nabla_x\Theta(0)\, e_3)[	\nabla_x \Theta(0)]^{-1}\Big]\notag \\
&+x_3^3\, {\rm K}\,\varrho_b^e(0|0| \,\nabla_x\Theta(0)\, e_3)[	\nabla_x \Theta(0)]^{-1}\Big\},\notag
\end{align}
where
\begin{equation}\label{e70Dumi}
\begin{array}{l}
A_1:\,=\, 2\,{\rm H}\,(1-\varrho_m^e)+ \varrho_b^e ,
\qquad 
A_2:\,=\,  {\rm K}\,(\varrho_m^e-1)- 2\,{\rm H}\,\varrho_b^e.
\end{array}
\end{equation}

Using \eqref{final_rho}, we are able to express $\varrho_m^e$ and $\varrho_b^e$  also in terms of $\mathcal{E}_{m,s} , {\rm B}_{y_0}, {\rm A}_{y_0}, {\rm C}_{y_0}$ and $\mathcal{K}_{e,s} $ 
\begin{align}\label{final_rho-Dumi}
\varrho_m^e
\,=\,&1-\frac{\lambda}{\lambda+2\mu}[ \bigl\langle  \mathcal{E}_{m,s} +\id_3-  (0|0|\nabla_x\Theta(0)\, e_3)[\nabla_x \Theta(0)]^{-1}  ,\id_3\bigr\rangle -2] \,=\,1-\frac{\lambda}{\lambda+2\mu}\tr( \mathcal{E}_{m,s} ), \notag \\
\dd\varrho_b^e\,=\,&-\frac{\lambda}{\lambda+2\mu} \bigl\langle  {\rm C}_{y_0}\mathcal{K}_{e,s} -{\rm B}_{y_0},\id_3\bigr\rangle  -\frac{\lambda}{\lambda+2\mu} \bigl\langle  (\mathcal{E}_{m,s} +\id_3-  (0|0|\nabla_x\Theta(0)\, e_3)[\nabla_x \Theta(0)]^{-1})\,{\rm B}_{y_0},\id_3\bigr\rangle \, \\
\,=\,&-\frac{\lambda}{\lambda+2\mu} \bigl\langle  {\rm C}_{y_0}\mathcal{K}_{e,s} +\mathcal{E}_{m,s} {\rm B}_{y_0},\id_3\bigr\rangle  +\frac{\lambda}{\lambda+2\mu} \bigl\langle   (0|0|\nabla_x\Theta(0)\, e_3)[\nabla_x \Theta(0)]^{-1}\,{\rm B}_{y_0},\id_3\bigr\rangle \, \notag \\
\,=\,&-\frac{\lambda}{\lambda+2\mu}{\rm tr} [{\rm C}_{y_0}\mathcal{K}_{e,s} +\mathcal{E}_{m,s} {\rm B}_{y_0}]\,.\notag
\end{align}
Therefore, in view of \eqref{final_rho-Dumi} and item ii) of Lemma \ref{LemaEsEe}, the coefficients $A_1$ and $A_2$ are given by
\begin{align}\label{e70Dumi-N}
A_1\,&=\,  -\dfrac{\lambda}{\lambda+2\mu}\,\mathrm{tr}\,\Big(  \mathcal{E}_{m,s} \big( {\rm B}_{y_0} - 2\,{\rm H}\,  {\rm A}_{y_0}\big)+  {\rm C}_{y_0}  \mathcal{K}_{e,s}       \Big),\\
A_2\,&=\,  \dfrac{\lambda}{\lambda+2\mu}\,\Big[  \, 2\,{\rm H}\, \mathrm{tr}\,\big(  \mathcal{E}_{m,s} {\rm B}_{y_0}  +  {\rm C}_{y_0}  \mathcal{K}_{e,s} \big) - {\rm K}   \,\mathrm{tr}\,  \mathcal{E}_{m,s}     \Big].\notag
\end{align}
We also note that
\begin{align}\label{e5.60}
(0|0| \,\nabla_x\Theta(0)\, e_3)[	\nabla_x \Theta(0)]^{-1} \,=\,(0|0|n_0)(0|0|n_0)^T\,=\,n_0\otimes n_0.
\end{align} 
In conclusion, the tensor $\widetilde{\mathcal{E}}_{s}$ defined by \eqref{defEes} is completely  expressed in terms of $\mathcal{E}_{m,s} , {\rm B}_{y_0}, {\rm A}_{y_0}, {\rm C}_{y_0}$, $\mathcal{K}_{e,s} $ and $n_0\otimes n_0$.
Similarly, we write $\Gamma_{s} $ defined by \eqref{defEes} in terms of $ {\rm B}_{y_0}$ and $\mathcal{K}_{e,s}$
\begin{align}\label{gammaDumi}
\Gamma_{s} & 
\,= \,\frac{1}{b(x_3)}\,(\mathrm{axl}(\overline{Q}_{e,s}^T\,\partial_{x_1} \overline{Q}_e)\,|\, \mathrm{axl}(\overline{Q}_{e,s}^T\,\partial_{x_2} \overline{Q}_{e,s})\,|0)\,\left[\id_3+x_3({\rm L}_{y_0}^\flat-2\,{\rm H}\, \id_3)+x_3^2\, {\rm K}\begin{footnotesize}\begin{pmatrix}
0&0&0 \\
0&0&0 \\
0&0&1
\end{pmatrix}\end{footnotesize}
\right] [	\nabla_x \Theta(0)]^{-1}\notag\\
&\,= \,\frac{1}{b(x_3)}\,\mathcal{K}_{e,s}\,[	\nabla_x \Theta(0)]\,\Big[\id_3+x_3({\rm L}_{y_0}^\flat-2\,{\rm H}\, \id_3)\Big]\, [	\nabla_x \Theta(0)]^{-1}\\&
\qquad\qquad+x_3^2\, {\rm K}\, \,\frac{1}{b(x_3)}\,(\mathrm{axl}(\overline{Q}_{e,s}^T\,\partial_{x_1} \overline{Q}_{e,s})\,|\, \mathrm{axl}(\overline{Q}_{e,s}^T\,\partial_{x_2} \overline{Q}_{e,s})\,|0)\begin{footnotesize}\begin{pmatrix}
0&0&0 \\
0&0&0 \\
0&0&1
\end{pmatrix}\end{footnotesize}
[	\nabla_x \Theta(0)]^{-1}\notag\\
&\,=\, \,\frac{1}{b(x_3)}\,\Big[\mathcal{K}_{e,s} + x_3 \,(    \mathcal{K}_{e,s} {\rm B}_{y_0}   - 2\,{\rm H}\,    \mathcal{K}_{e,s} )\Big].\notag
\end{align}

\subsection{Dimensionally reduced energy: analytical integration through the thickness}

In what follows, we  find the expression of the strain energy density $ \, W \,=\,W_{\mathrm{mp}}( \widetilde{\mathcal{E}}_{s})+ W_{\mathrm{curv}}(    \Gamma_s)\,$ and   integrate it over the thickness. To this aim, we introduce the following bilinear forms
\begin{align}\label{e71}
\mathcal{W}_{\mathrm{mp}}(  X,  Y)& \,=\,   \mu\, \bigl\langle  \mathrm{sym}\,X,\mathrm{sym}\,   \,Y\bigr\rangle  +  \mu_{\rm c} \bigl\langle  \,\mathrm{skew} \,X,\mathrm{skew}\,   \,Y\bigr\rangle  +\,\dfrac{\lambda}{2}\,\mathrm{tr}  (X)\,\mathrm{tr}  (Y)  \notag \\
& \,=\,    \mu\, \bigl\langle  \mathrm{  dev \,sym}\,X,\mathrm{  dev \,sym}\,   \,Y\bigr\rangle  +  \mu_{\rm c} \bigl\langle  \mathrm{skew}   \,X,\mathrm{skew}\,   \,Y\bigr\rangle  +\,\dfrac{\kappa}{2}\,\mathrm{tr}  (X)\,\mathrm{tr}  (Y),   \\
\mathcal{W}_{\mathrm{curv}}(  X,  Y)& \,=\,\mu\,{L}_{\rm c}^2\,\bigg( b_1\, \bigl\langle  \dev\,\text{sym} \,X , \dev\,\text{sym} \,Y\bigr\rangle +b_2\, \bigl\langle \text{skew}\,X , \skw\,Y\bigr\rangle +\,4\,b_3\,
\tr(X)\,\tr\,(Y)\,\bigg)\notag
\end{align}
for any  $\,   X,\,    Y\in \mathbb{R}^{3\times 3}$. We remark the identities
$
W_{\mathrm{mp}}(  X)\,=\, \mathcal{W}_{\mathrm{mp}}(  X,  X),\ 
W_{\mathrm{curv}}(  X )\,=\, \mathcal{W}_{\mathrm{curv}}(  X,  X).
$ 
Thus, using \eqref{e69Dumi}, Lemma \ref{LemaEsEe} and the notations \eqref{e70Dumi-N}, we obtain
\begin{align}\label{e73}
W_{\mathrm{mp}}( \widetilde{\mathcal{E}}_{s}) & \,=\,  \dfrac{1}{b^2(x_3)} \,W_{\mathrm{mp}} \Big(  \big[  \mathcal{E}_{m,s} + (\varrho_m^e-1)   n_0\otimes    n_0\big] +x_3\big[ \big(  \mathcal{E}_{m,s} \, {\rm B}_{y_0} +  {\rm C}_{y_0} \mathcal{K}_{e,s} \big) -   2\,{\rm H}\,   \mathcal{E}_{m,s} + A_1   n_0\otimes    n_0 \big]  \notag\\
&    \qquad\qquad\qquad \quad
+ x_3^2 \,\big[  {\rm C}_{y_0} \mathcal{K}_{e,s} {\rm B}_{y_0}   - 2\,{\rm H}\,   {\rm C}_{y_0} \mathcal{K}_{e,s} + A_2   n_0\otimes    n_0   \big]  + x_3^3\, {\rm K}\,\varrho_b^e \,  n_0\otimes    n_0\Big).
\end{align}
In order to perform the analytical integration over the thickness, we write $W_{\mathrm{mp}}( \widetilde{\mathcal{E}}_{s})$  as a polynomial in $x_3$ with the coefficients $C_k$, i.e.,
\begin{equation}\label{e74}
W_{\mathrm{mp}}( \widetilde{\mathcal{E}}_{s}) \,=\, \dfrac{1}{b^2(x_3)}\,  \Big( \, \sum_{k\,=\,0}^6\,C_k(x_1,x_2)\,x_3^k\,     \Big),
\end{equation}
where
\begin{align}
C_0(x_1,x_2) &\,=\, W_{\mathrm{mp}} \Big(     \mathcal{E}_{m,s} + (\varrho_m^e-1)   n_0\otimes    n_0\Big),  \notag \\
C_1(x_1,x_2) &\,=\,  2 \, \mathcal{W}_{\mathrm{mp}}   \Big(\,  \mathcal{E}_{m,s} + (\varrho_m^e-1)   n_0\otimes    n_0,\,  \big(  \mathcal{E}_{m,s} \, {\rm B}_{y_0} +  {\rm C}_{y_0} \mathcal{K}_{e,s} \big) -   2\,{\rm H}\,   \mathcal{E}_{m,s} + A_1   n_0\otimes    n_0 \Big),   \notag \\
C_2(x_1,x_2) & \,=\,  W_{\mathrm{mp}} \Big(  \big(  \mathcal{E}_{m,s} \, {\rm B}_{y_0} +  {\rm C}_{y_0} \mathcal{K}_{e,s} \big) -   2\,{\rm H}\,   \mathcal{E}_{m,s} + A_1   n_0\otimes    n_0\Big)  \notag \\
&   
\qquad +  2 \, \mathcal{W}_{\mathrm{mp}}   \Big(\,  \mathcal{E}_{m,s} + (\varrho_m^e-1)   n_0\otimes    n_0,\,  {\rm C}_{y_0} \mathcal{K}_{e,s} {\rm B}_{y_0}   - 2\,{\rm H}\,   {\rm C}_{y_0} \mathcal{K}_{e,s} + A_2   n_0\otimes    n_0 \Big),\label{e75}
\\
C_3(x_1,x_2) & \,=\,   2\,\mathcal{W}_{\mathrm{mp}} \Big(    \mathcal{E}_{m,s} + (\varrho_m^e-1)   n_0\otimes    n_0,\,   {\rm K}\,\varrho_b^e \,  n_0\otimes    n_0  \Big) \notag  \\
&   
\qquad +  2 \, \mathcal{W}_{\mathrm{mp}}   \Big(\,\big(  \mathcal{E}_{m,s} \, {\rm B}_{y_0} +  {\rm C}_{y_0} \mathcal{K}_{e,s} \big) -   2\,{\rm H}\,   \mathcal{E}_{m,s} + A_1   n_0\otimes    n_0,\,  {\rm C}_{y_0} \mathcal{K}_{e,s} {\rm B}_{y_0}   - 2\,{\rm H}\,   {\rm C}_{y_0} \mathcal{K}_{e,s} + A_2   n_0\otimes    n_0 \Big),\notag
\\
C_4(x_1,x_2)  & \,=\,  W_{\mathrm{mp}} \Big(  {\rm C}_{y_0} \mathcal{K}_{e,s} {\rm B}_{y_0}   - 2\,{\rm H}\,   {\rm C}_{y_0} \mathcal{K}_{e,s} + A_2   n_0\otimes    n_0 \Big) \notag \\
&   
\qquad +  2 \, \mathcal{W}_{\mathrm{mp}}   \Big(\,\big(  \mathcal{E}_{m,s} \, {\rm B}_{y_0} +  {\rm C}_{y_0} \mathcal{K}_{e,s} \big) -   2\,{\rm H}\,   \mathcal{E}_{m,s} + A_1   n_0\otimes    n_0,\,  {\rm K}\,\varrho_b^e \,  n_0\otimes    n_0 \Big), \notag \end{align}
\begin{align}
\hspace{-3.5cm}C_5(x_1,x_2) & \,=\,   2 \, \mathcal{W}_{\mathrm{mp}}   \Big(  {\rm C}_{y_0} \mathcal{K}_{e,s} {\rm B}_{y_0}   - 2\,{\rm H}\,   {\rm C}_{y_0} \mathcal{K}_{e,s} + A_2   n_0\otimes    n_0,\,  {\rm K}\,\varrho_b^e \,  n_0\otimes    n_0 \Big), \notag \\
\hspace{-3.5cm} C_6(x_1,x_2) & \,=\,  \, W_{\mathrm{mp}}   \Big(  {\rm K}\,\varrho_b^e \,  n_0\otimes    n_0 \Big).\notag
\end{align}
Making use of the expansion
(since $x_3 \in \big(-\frac{h}{2}, \frac{h}{2}\,\big)$ and $\,h\,$ is small)
\begin{align}\label{e76}
\dfrac{1}{b(x_3)} &\,=\,   \dfrac{1}{1-2\,{\rm H}\,x_3+{\rm K}\,x_3^2} \notag\\ & \,=\,  1+ 2\,{\rm H}\,x_3+ (4\,{\rm H}^2\,-{\rm K})\,x_3^2+
(8\,{\rm H}^3\,-4\,{\rm H\, K})\,x_3^3     +({\rm K}^2-12 \,{\rm H}^2 \, {\rm K}+16\,{\rm H}^4)\,x_3^4+O(x_3^5),
\end{align}
and of the relations \eqref{minprmod} and Proposition \refeq{propnablatheta} i), the integration can be pursued as follows
\begin{align}\label{e77}
\dd \int_{\Omega_h}   \!\!W_{\mathrm{mp}}( \widetilde{\mathcal{E}}_{s})  \,\mathrm{det} \big[  \nabla_x   \Theta(  x) \big]\, \mathrm d V \,=\,
\int_{\Omega_h}   &\!\Big( \, \sum_{k\,=\,0}^6\,C_k(x_1,x_2)\,x_3^k  \Big)\!\Big[ 1+ 2\,{\rm H}\,x_3+ (4\,{\rm H}^2\,-{\rm K})\,x_3^2+    (8\,{\rm H}^3\,-4\,{\rm H\, K})x_3^3   \notag    \\
&\qquad\qquad\qquad  +({\rm K}^2-12 \,{\rm H}^2 \, {\rm K}+16\,{\rm H}^4)\,x_3^4+O(x_3^5)
\Big] \,{\rm det}(\nabla y_0|n_0)\,\mathrm dV   \notag\\
\,=\, \dd\int_\omega &\Big\{   h\,C_0+\,\dfrac{h^3}{12}\,\Big[ (4\,{\rm H}^2\,-{\rm K})C_0+ 2\,{\rm H}\,C_1+C_2 \Big]\notag \\&\quad+ \,\dfrac{h^5}{80}\,\Big[ ({\rm K}^2-12 \,{\rm H}^2 \, {\rm K}+16\,{\rm H}^4)\,C_0    + (8\,{\rm H}^3\,-4\,{\rm H\, K})\,C_1  \\
&\qquad\qquad +  (4\,{\rm H}^2\,-{\rm K})\,C_2+ 2\,{\rm H}\,C_3+C_4 \Big]
\Big\}\,{\rm det}(\nabla y_0|n_0) \,\mathrm da +O(h^7),\notag
\end{align}
where ${\rm d} a\,=\,{\rm d} x_1{\rm d}x_2$. 

In view of \eqref{e77}, we need to find appropriate expressions for the coefficients $\,C_0,\,C_1,\,C_2,\,C_3,\,C_4\,$ defined by \eqref{e75}. In this line, we designate by $\, \mathcal{W}_{\mathrm{shell}}(  X,  Y)\,$ the bilinear form
\begin{align}\label{e78}
\mathcal{W}_{\mathrm{shell}}(  X,  Y)& \,=\,   \mu\, \bigl\langle  \mathrm{sym}\,  \,X,\mathrm{sym}\,   \,Y\bigr\rangle  +  \mu_{\rm c} \bigl\langle  \,\mathrm{skew} \,X,\mathrm{skew}\,   \,Y\bigr\rangle  +\,\dfrac{\lambda\,\mu}{\lambda+2\mu}\,\mathrm{tr}  (X)\,\mathrm{tr}  (Y)  \notag \\
& \,=\,   \mu\, \bigl\langle  \mathrm{  dev \,sym} \,X,\mathrm{  dev \,sym}\,   \,Y\bigr\rangle  +  \mu_{\rm c} \bigl\langle  \mathrm{skew}   \,X,\mathrm{skew}\,   \,Y\bigr\rangle  +\,\dfrac{2\,\mu\,(2\,\lambda+\mu)}{3(\lambda+2\,\mu)}\,\mathrm{tr}  (X)\,\mathrm{tr}  (Y),  \\
W_{\mathrm{shell}}(  X) & \,=\,   \mathcal{W}_{\mathrm{shell}}(  X,  X)\notag
\end{align}
and we observe that
\begin{equation}\label{e79}
\mathcal{W}_{\mathrm{shell}}(  X,  Y)+ \,\dfrac{\lambda^2}{2\,(\lambda+2\mu)}\,\mathrm{tr}  (X)\,\mathrm{tr}  (Y)\,=\, \mathcal{W}_{\mathrm{mp}}(  X,  Y) ,
\end{equation}
since $  \,\dfrac{\kappa}{2 }\, - \,\dfrac{\lambda^2}{2\,(\lambda+2\mu)} \,=\, \,\dfrac{2\,\mu\,(2\lambda+\mu)}{3\,(\lambda+2\mu)}\,$ . Using the notations \eqref{e71}, \eqref{e78}, we obtain
\begin{lemma} The following identities 
\begin{align}\label{e80}
\mathcal{W}_{\mathrm{mp}}\big(  X+\alpha\,  n_0\otimes    n_0,\,  Y+\beta\,  n_0\otimes    n_0\big)&\,=\, \mathcal{W}_{\mathrm{mp}}(  X,  Y)+ \dfrac{\lambda}{2}\,\big( \alpha\,\mathrm{tr}  (Y)+\beta\, \mathrm{tr}  (X)\big)+\dfrac{\lambda+2\mu}{2} \,\alpha\,\beta,
 \notag\\
\mathcal{W}_{\mathrm{mp}}\Big(  X-\,\dfrac{\lambda}{\lambda+2\mu}\,\big( \mathrm{tr}  (X)\big)  n_0\otimes    n_0,\,  Y+\beta\,  n_0\otimes    n_0\Big)&\,=\,  \mathcal{W}_{\mathrm{shell}}(  X,  Y),
\end{align}
 hold true for all tensors $\,   X,\,    Y\in \mathbb{R}^{3\times 3}$ of the form $(*|*|0)\cdot[	\nabla_x \Theta(0)]^{-1}$ and all $\alpha,\beta\in \mathbb{R}$.
\end{lemma}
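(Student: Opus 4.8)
The plan is to reduce both identities to elementary bilinear‑algebra bookkeeping, once one key orthogonality fact is isolated. Write $N:=n_0\otimes n_0$. Since $n_0$ is a unit vector, $N$ is symmetric, $\tr(N)=\lVert n_0\rVert^2=1$, and $N^2=N$, hence $\bigl\langle N,N\bigr\rangle=\tr(N)=1$ and $\skw(N)=0$, $\sym(N)=N$. The crucial observation is that every $X=M\,[\nabla_x\Theta(0)]^{-1}$ with $M=(*|*|0)$ annihilates $n_0$: using $n_0=[\nabla_x\Theta(0)]^{-T}e_3$ together with the GKC identity $[\nabla_x\Theta(0)]^{-1}[\nabla_x\Theta(0)]^{-T}e_3=e_3$ (with $\varrho=1$, cf.\ the line preceding \eqref{eq:id_Qthetae3}), one gets $X\,n_0=M\,e_3=0$, because the third column of $M$ vanishes. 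Consequently $\bigl\langle X,N\bigr\rangle=\bigl\langle X\,n_0,n_0\bigr\rangle=0$, and since $N$ is symmetric also $\bigl\langle \sym X,N\bigr\rangle=0$ and $\bigl\langle \skw X,N\bigr\rangle=0$; the same holds for $Y$.

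For \eqref{e80}$_1$ I would just expand. From $\sym(X+\alpha N)=\sym X+\alpha N$, $\skw(X+\alpha N)=\skw X$, $\tr(X+\alpha N)=\tr(X)+\alpha$ (and the analogous relations in $Y,\beta$), the bilinearity and symmetry of $\mathcal{W}_{\mathrm{mp}}$ (see \eqref{e71}) give $\mathcal{W}_{\mathrm{mp}}(X+\alpha N,Y+\beta N)=\mu\bigl\langle \sym X,\sym Y\bigr\rangle+\mu\bigl(\beta\bigl\langle \sym X,N\bigr\rangle+\alpha\bigl\langle N,\sym Y\bigr\rangle+\alpha\beta\bigl\langle N,N\bigr\rangle\bigr)+\mu_{\rm c}\bigl\langle \skw X,\skw Y\bigr\rangle+\frac{\lambda}{2}(\tr(X)+\alpha)(\tr(Y)+\beta)$. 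By the orthogonality above the three middle terms collapse to $\mu\,\alpha\beta$; expanding the last product and collecting yields $\mathcal{W}_{\mathrm{mp}}(X,Y)+\frac{\lambda}{2}\bigl(\alpha\,\tr(Y)+\beta\,\tr(X)\bigr)+\bigl(\mu+\frac{\lambda}{2}\bigr)\alpha\beta$, and $\mu+\frac{\lambda}{2}=\frac{\lambda+2\mu}{2}$, which is \eqref{e80}$_1$.

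For \eqref{e80}$_2$ I would apply the identity just obtained with the specific choice $\alpha=-\frac{\lambda}{\lambda+2\mu}\tr(X)$. Then $\frac{\lambda+2\mu}{2}\alpha\beta=-\frac{\lambda}{2}\beta\,\tr(X)$ exactly cancels the term $\frac{\lambda}{2}\beta\,\tr(X)$, so $\beta$ disappears entirely (as the statement implicitly permits), while $\frac{\lambda}{2}\alpha\,\tr(Y)=-\frac{\lambda^2}{2(\lambda+2\mu)}\tr(X)\,\tr(Y)$. The right‑hand side therefore reduces to $\mathcal{W}_{\mathrm{mp}}(X,Y)-\frac{\lambda^2}{2(\lambda+2\mu)}\tr(X)\,\tr(Y)$, which is precisely $\mathcal{W}_{\mathrm{shell}}(X,Y)$ by \eqref{e79}.

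I do not anticipate any genuine obstacle: the only step that is not pure book‑keeping is the vanishing $X\,n_0=0$ for tensors of the admissible form, and that is immediate from the GKC structure of $\nabla_x\Theta(0)$ and the zero third column; everything else is linearity of $\mathcal{W}_{\mathrm{mp}}$ plus the algebraic identities already recorded in \eqref{e71}, \eqref{e78} and \eqref{e79}.
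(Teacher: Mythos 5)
Your proof is correct and follows essentially the same route as the paper: expand $\mathcal{W}_{\mathrm{mp}}$ bilinearly, observe that $\bigl\langle\mathrm{sym}\,X,\,n_0\otimes n_0\bigr\rangle=0$ for admissible $X$, and then specialize $\alpha$ to obtain the second identity via \eqref{e79}. The only (minor) difference is that you establish the orthogonality from $X\,n_0=M\,e_3=0$ directly (using $[\nabla_x\Theta(0)]^{-1}n_0=e_3$), whereas the paper reaches the same conclusion by rewriting the Frobenius product as $\bigl\langle(0|0|n_0)^T(*|*|0),\widehat{\rm I}_{y_0}^{-1}\bigr\rangle$ and reading off zero from the sparsity pattern — an equivalent, slightly more computational step.
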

\begin{proof}
	In view of the definition \eqref{e71} we see that
	\begin{align}
	\mathcal{W}_{\mathrm{mp}}\big(  X+\alpha\,  n_0\otimes    n_0,\,  Y+\beta\,  n_0\otimes    n_0\big)\,=\,&  \mu\, \bigl\langle  \mathrm{sym} \,X\,+\alpha\,  n_0\otimes    n_0, \mathrm{sym}\,   \,Y\,+\beta\,  n_0\otimes    n_0\bigr\rangle 
	 \\
&	+  \mu_{\rm c} \bigl\langle  \,\mathrm{skew} \,X,\mathrm{skew}\, Y\bigr\rangle  +\,\dfrac{\lambda}{2}\,\big( \mathrm{tr} (  X)+\alpha\big)\,\big(\mathrm{tr}{(Y)}
\,+\beta\big).\notag
	\end{align}
According to \eqref{e5.60},  $  n_0\otimes    n_0 \,=\, (0|0|n_0)\cdot[	\nabla_x \Theta(0)]^{-1} $.
		Since for $\,   X\,=\,(*|*|0)\cdot[	\nabla_x \Theta(0)]^{-1}$  we have \begin{align}
	 \bigl\langle  \mathrm{sym} \,X ,     n_0\otimes    n_0\bigr\rangle \,&=\, 	 \bigl\langle   (*|*|0)\,[	\nabla_x \Theta(0)]^{-1},    (0|0|n_0)\,[	\nabla_x \Theta(0)]^{-1}\bigr\rangle 
	\notag \\
	&=\, 	 \bigl\langle   
	(0|0|n_0)^T
	\,(*|*|0) ,    [[	\nabla_x \Theta(0)]^T\,	\nabla_x \Theta(0)]^{-1}\bigr\rangle \,=	 \bigl\langle   \begin{footnotesize}
	\begin{pmatrix}
	0&0&0\\
	0&0&0\\
	*&*&0
	\end{pmatrix}
	\end{footnotesize},  \widehat{\rm I}_{y_0}^{-1}\bigr\rangle \,=\,0\,,
	\end{align} we get
	\begin{align}
	\mathcal{W}_{\mathrm{mp}}\big(  X+\alpha\,  n_0\otimes    n_0,\,  Y+\beta\,  n_0\otimes    n_0\big)\,=\,&  \mu\, \bigl\langle  \mathrm{sym} \,X,\mathrm{sym}\,   \,Y\bigr\rangle  + \mu\,\alpha\,\beta\,  \bigl\langle    n_0\otimes    n_0 ,    n_0\otimes    n_0\bigr\rangle  \notag
	 +
	\mu_{\rm c} \bigl\langle  \,\mathrm{skew} \,X,\mathrm{skew}\,   \,Y\bigr\rangle   \\
	&+\,\dfrac{\lambda}{2}\,\mathrm{tr}  (X)\,\mathrm{tr}  (Y) +\dfrac{\lambda}{2}\,\big( \alpha\,\mathrm{tr}  (Y)+\beta\, \mathrm{tr}  (X)\big)+\dfrac{\lambda}{2} \,\alpha\,\beta \notag
	 \\
	\,=\,& \mathcal{W}_{\mathrm{mp}}(  X,  Y)+ \dfrac{\lambda}{2}\,\big( \alpha\,\mathrm{tr}  (Y)+\beta\, \mathrm{tr}  (X)\big)+\dfrac{\lambda+2\mu}{2} \,\alpha\,\beta.
	\end{align}
	This means that the relation \eqref{e80}$_1$ holds true, for any  $\,\alpha,\beta\in\mathbb{R}$.
	
	If we write \eqref{e80}$_1$ with $\,\alpha\,= \,\dfrac{-\lambda}{\lambda+2\mu}\,\big( \mathrm{tr}  (X)\big)$, then we obtain
	\begin{equation*} 
	\begin{array}{l}
	\mathcal{W}_{\mathrm{mp}}\Big(  X-\,\dfrac{\lambda}{\lambda+2\mu}\,\big( \mathrm{tr}  (X)\big)  n_0\otimes    n_0,\,  Y+\beta\,  n_0\otimes    n_0\Big)\,=\,   \\
	\qquad\qquad \,=\,
	\mathcal{W}_{\mathrm{mp}}(  X,  Y)
	+ \dfrac{\lambda}{2}\,\Big( \,\dfrac{-\lambda}{\lambda+2\mu}\, \mathrm{tr}  (X)\,\mathrm{tr}  (Y))+\beta\, \mathrm{tr}  (X)\Big) +\dfrac{\lambda+2\mu}{2}\cdot\dfrac{-\lambda}{\lambda+2\mu}\, \mathrm{tr}  (X)\,\beta
	 \\
	\qquad\qquad \,=\,\mathcal{W}_{\mathrm{mp}}(  X,  Y) - \,\dfrac{\lambda^2}{2\,(\lambda+2\,\mu)}\,\mathrm{tr}  (X)\,\mathrm{tr}  (Y)\,=\,
	\mathcal{W}_{\mathrm{shell}}(  X,  Y),
	\end{array}
	\end{equation*}
	where we have  used  the formula \eqref{e79}.
	Thus, the relation \eqref{e80}$_2$ is also proved.
\end{proof}
  By virtue of \eqref{e80}$_2$, \eqref{final_rho-Dumi}{$_1$}, \eqref{e70Dumi-N}, \eqref{e55} and Lemma \ref{LemaEsEe} we get
\begin{align}\label{e81}
\mathcal{W}_{\mathrm{mp}}&\Big(  \mathcal{E}_{m,s} + \big( \varrho_m^e-1\big)  n_0\otimes    n_0,\,  Y+\beta\,  n_0\otimes    n_0\Big)\,=\,  \mathcal{W}_{\mathrm{shell}}(  \mathcal{E}_{m,s} ,  Y), \notag \\
\mathcal{W}_{\mathrm{mp}}&\Big(\big(  \mathcal{E}_{m,s} \, {\rm B}_{y_0} +  {\rm C}_{y_0} \mathcal{K}_{e,s} \big) -   2\,{\rm H}\,   \mathcal{E}_{m,s} + A_1   n_0\otimes    n_0, Y+\beta\,  n_0\otimes    n_0\Big)= \mathcal{W}_{\mathrm{shell}}\big(  \mathcal{E}_{m,s} \, {\rm B}_{y_0} +  {\rm C}_{y_0} \mathcal{K}_{e,s} \! -\!   2\,{\rm H}\,   \mathcal{E}_{m,s} , Y\big), \notag \\
W_{\mathrm{mp}}&\Big(   {\rm C}_{y_0} \mathcal{K}_{e,s} {\rm B}_{y_0}   - 2\,{\rm H}\,   {\rm C}_{y_0} \mathcal{K}_{e,s} + A_2   n_0\otimes    n_0 \Big) \,=\, W_{\mathrm{shell}}\big(   {\rm C}_{y_0} \mathcal{K}_{e,s} {\rm B}_{y_0}   - 2\,{\rm H}\,  {\rm C}_{y_0} \mathcal{K}_{e,s}  \big)  \\&\qquad \qquad\qquad\qquad\qquad\qquad\qquad\qquad\qquad\qquad + \dfrac{\lambda^2}{2(\lambda+2\mu)}\Big[   \mathrm{tr}\big((  \mathcal{E}_{m,s} \, {\rm B}_{y_0} +  {\rm C}_{y_0} \mathcal{K}_{e,s} )   {\rm B}_{y_0} \big) \Big]^2{,}
\notag
\end{align}
{
where for the last identity we used \eqref{e80}$_1$ and the relation 
\[
\mathcal{E}_{m,s} \, {\rm B}_{y_0}^2=2\,{\rm H}\,   \mathcal{E}_{m,s} \, {\rm B}_{y_0} -{\rm K}\,   \mathcal{E}_{m,s} \, {\rm A}_{y_0}=2\,{\rm H}\,   \mathcal{E}_{m,s} \, {\rm B}_{y_0} -{\rm K}\,   \mathcal{E}_{m,s}.
\]
}
From \eqref{e75} and \eqref{e81} we get
\begin{align}\label{e82}
C_0 & \,=\,   W_{\mathrm{shell}}\big(    \mathcal{E}_{m,s} \big),  \notag \\
C_1 & \,=\,  2 \,  \mathcal{W}_{\mathrm{shell}}  \big(  \mathcal{E}_{m,s} ,\,  \mathcal{E}_{m,s} \, {\rm B}_{y_0} +  {\rm C}_{y_0} \mathcal{K}_{e,s}  -   2\,{\rm H}\,   \mathcal{E}_{m,s} \big) \,=\, -4\,{\rm H}\,W_{\mathrm{shell}}\big(    \mathcal{E}_{m,s} \big)+ 2 \,  \mathcal{W}_{\mathrm{shell}}  \big(  \mathcal{E}_{m,s} ,\,  \mathcal{E}_{m,s} \, {\rm B}_{y_0} +  {\rm C}_{y_0} \mathcal{K}_{e,s} \big), \notag \\
C_2 & \,=\,    W_{\mathrm{shell}}  \big(   \mathcal{E}_{m,s} \, {\rm B}_{y_0} +  {\rm C}_{y_0} \mathcal{K}_{e,s}  -   2\,{\rm H}\,   \mathcal{E}_{m,s} \big)+ 2 \,  \mathcal{W}_{\mathrm{shell}}  \big(  \mathcal{E}_{m,s} ,\, {\rm C}_{y_0} \mathcal{K}_{e,s} {\rm B}_{y_0}  -   2\,{\rm H}\,   {\rm C}_{y_0} \mathcal{K}_{e,s} \big),   \\
C_3 & \,=\,  2 \,  \mathcal{W}_{\mathrm{shell}}  \big(  \mathcal{E}_{m,s} \, {\rm B}_{y_0} +  {\rm C}_{y_0} \mathcal{K}_{e,s}  -   2\,{\rm H}\,   \mathcal{E}_{m,s} ,\, {\rm C}_{y_0} \mathcal{K}_{e,s} {\rm B}_{y_0}  -   2\,{\rm H}\,   {\rm C}_{y_0} \mathcal{K}_{e,s} \big),  \notag \\
C_4 & \,=\,   W_{\mathrm{shell}}  \big(  {\rm C}_{y_0} \mathcal{K}_{e,s} {\rm B}_{y_0}  -   2\,{\rm H}\,   {\rm C}_{y_0} \mathcal{K}_{e,s} \big)+\,\dfrac{\lambda^2}{2(\lambda+2\mu)}\,\Big[   \mathrm{tr}\big((  \mathcal{E}_{m,s} \, {\rm B}_{y_0} +  {\rm C}_{y_0} \mathcal{K}_{e,s} )   {\rm B}_{y_0} \big) \Big]^2.\notag
\end{align}
With the  relations \eqref{e82} we can replace the coefficients $C_0, C_1,C_2, C_3, C_4$  appearing in \eqref{e77} and we obtain 
\begin{align}\label{e83}
(4\,{\rm H}^2&-{\rm K})C_0+ 2\,{\rm H}\,C_1+C_2 \,=\,  -{\rm K}\, W_{\mathrm{shell}}\big(    \mathcal{E}_{m,s} \big) + W_{\mathrm{shell}}  \big(   \mathcal{E}_{m,s} \, {\rm B}_{y_0} +  {\rm C}_{y_0} \mathcal{K}_{e,s} \big)   \notag\\&\qquad \qquad \qquad\qquad \qquad\qquad+ 2 \,  \mathcal{W}_{\mathrm{shell}}  \big(  \mathcal{E}_{m,s} ,\, {\rm C}_{y_0} \mathcal{K}_{e,s} {\rm B}_{y_0}  -   2\,{\rm H}\,   {\rm C}_{y_0} \mathcal{K}_{e,s} \big), \vspace{8pt}\notag \\
({\rm K}^2- & 12 \,{\rm H}^2 \, {\rm K}+16\,{\rm H}^4)C_0
+ (8\,{\rm H}^3\,-4\,{\rm H\, K})\,C_1+  (4\,{\rm H}^2\,-{\rm K})\,C_2+ 2\,{\rm H}\,C_3+C_4     \notag\\
&\qquad   \,=\,-{\rm K}\,  W_{\mathrm{shell}}  \big(   \mathcal{E}_{m,s} \, {\rm B}_{y_0} +   {\rm C}_{y_0} \mathcal{K}_{e,s} \big) +  W_{\mathrm{shell}} \big((  \mathcal{E}_{m,s} \, {\rm B}_{y_0} +  {\rm C}_{y_0} \mathcal{K}_{e,s} )   {\rm B}_{y_0}\, \big)   \\& \hspace{1.5cm}+\,\dfrac{\lambda^2}{2(\lambda+2\mu)}\,\Big[   \mathrm{tr}\big((  \mathcal{E}_{m,s} \, {\rm B}_{y_0} +  {\rm C}_{y_0} \mathcal{K}_{e,s} )   {\rm B}_{y_0} \big) \Big]^2   \notag \\
&\qquad \,=\,
-{\rm K}\,  W_{\mathrm{shell}}  \big(   \mathcal{E}_{m,s} \, {\rm B}_{y_0} +   {\rm C}_{y_0} \mathcal{K}_{e,s} \big) +  W_{\mathrm{mp}} \big((  \mathcal{E}_{m,s} \, {\rm B}_{y_0} +  {\rm C}_{y_0} \mathcal{K}_{e,s} )   {\rm B}_{y_0} \,\big)  .\notag
\end{align}
Inserting \eqref{e83} into \eqref{e77} (and neglecting the terms of order $O(h^7)$) we obtain the following result of the integration
\begin{align}\label{e84}
\dd \int_{\Omega_h}   \,W_{\mathrm{mp}}( \widetilde{\mathcal{E}}_{s})  \,\mathrm{det} \big[  \nabla_x   \Theta(  x) \big]\, \mathrm d  V \,=\,   
\dd\int_{\omega}   \, \Big[ & \Big(h-{\rm K}\,\dfrac{h^3}{12}\Big)\,
W_{\mathrm{shell}}\big(    \mathcal{E}_{m,s} \big)    +  \Big(\dfrac{h^3}{12}\,-{\rm K}\,\dfrac{h^5}{80}\Big)\,
W_{\mathrm{shell}}  \big(   \mathcal{E}_{m,s} \, {\rm B}_{y_0} +   {\rm C}_{y_0} \mathcal{K}_{e,s} \big)  \notag \\
& +
\dfrac{h^3}{12} \,2\,  \mathcal{W}_{\mathrm{shell}}  \big(  \mathcal{E}_{m,s} ,\, {\rm C}_{y_0} \mathcal{K}_{e,s} {\rm B}_{y_0}  -   2\,{\rm H}\,   {\rm C}_{y_0} \mathcal{K}_{e,s} \big)\\\
& + \,\dfrac{h^5}{80} \,
W_{\mathrm{mp}} \big((  \mathcal{E}_{m,s} \, {\rm B}_{y_0} +  {\rm C}_{y_0} \mathcal{K}_{e,s} )   {\rm B}_{y_0} \,\big)
\Big] \,{\rm det}(\nabla y_0|n_0)       \,\mathrm da  .\notag
\end{align}

Remark that using Lemma \ref{LemaEsEe}  and Proposition \ref{propAB} we deduce
\begin{align}
{\rm C}_{y_0}\mathcal{K}_{e,s}{\rm B}_{y_0}-2\,
{\mathrm H}
\, {\rm C}_{y_0}\, \mathcal{K}_{e,s}&\,=\, {\rm C}_{y_0}\, \mathcal{K}_{e,s}({\rm B}_{y_0}-2\, \mathrm{ H}\, {\rm A}_{y_0})\notag\\&\,=\,
( \mathcal{E}_{m,s}{\rm B}_{y_0}+{\rm C}_{y_0}\, \mathcal{K}_{e,s})({\rm B}_{y_0}-2\, \mathrm{ H}\, {\rm A}_{y_0})-\mathcal{E}_{m,s}{\rm B}_{y_0}({\rm B}_{y_0}-2\, 
{\mathrm H}
\, {\rm A}_{y_0})\\&\,=\,
( \mathcal{E}_{m,s}{\rm B}_{y_0}+{\rm C}_{y_0}\, \mathcal{K}_{e,s})({\rm B}_{y_0}-2\,\mathrm{ H}\, {\rm A}_{y_0})+
{\mathrm K}
\,\mathcal{E}_{m,s}\notag. 
\end{align}
Therefore, using again Lemma \ref{LemaEsEe}  and Proposition \ref{propAB}, the energy density is rewritten in the following form
\begin{align}\label{eq:ausgerechnete_koeff}
\Big(h-{\rm K}\,&\dfrac{h^3}{12}\Big)\,
W_{\mathrm{shell}}\big(    \mathcal{E}_{m,s} \big)    +  \Big(\dfrac{h^3}{12}\,-{\rm K}\,\dfrac{h^5}{80}\Big)\,
W_{\mathrm{shell}}  \big(   \mathcal{E}_{m,s} \, {\rm B}_{y_0} +   {\rm C}_{y_0} \mathcal{K}_{e,s} \big)   \notag\\
&\qquad\quad  +
\dfrac{h^3}{12} \,2\,  \mathcal{W}_{\mathrm{shell}}  \big(  \mathcal{E}_{m,s} ,\, {\rm C}_{y_0} \mathcal{K}_{e,s} {\rm B}_{y_0}  -   2\,{\rm H}\,   {\rm C}_{y_0} \mathcal{K}_{e,s} \big)+ \,\dfrac{h^5}{80} \,
W_{\mathrm{mp}} \big((  \mathcal{E}_{m,s} \, {\rm B}_{y_0} +  {\rm C}_{y_0} \mathcal{K}_{e,s} )   {\rm B}_{y_0} \,\big)\notag\\
\,=\,&
\Big(h+{\rm K}\,\dfrac{h^3}{12}\Big)\,
W_{\mathrm{shell}}\big(    \mathcal{E}_{m,s} \big)+  \Big(\dfrac{h^3}{12}\,-{\rm K}\,\dfrac{h^5}{80}\Big)\,
W_{\mathrm{shell}}  \big(   \mathcal{E}_{m,s} \, {\rm B}_{y_0} +   {\rm C}_{y_0} \mathcal{K}_{e,s} \big) \notag \\&+
\dfrac{h^3}{12} \,2\,  \mathcal{W}_{\mathrm{shell}}  \big(  \mathcal{E}_{m,s} ,
( \mathcal{E}_{m,s}{\rm B}_{y_0}+{\rm C}_{y_0}\, \mathcal{K}_{e,s})({\rm B}_{y_0}-2\, \mathrm{ H}\, {\rm A}_{y_0}) \big) \notag\\&+ \,\dfrac{h^5}{80} \,
W_{\mathrm{mp}} \big((  \mathcal{E}_{m,s} \, {\rm B}_{y_0} +  {\rm C}_{y_0} \mathcal{K}_{e,s} )   {\rm B}_{y_0} \,\big)\notag\\
\,=\,&
\Big(h+{\rm K}\,\dfrac{h^3}{12}\Big)\,
W_{\mathrm{shell}}\big(    \mathcal{E}_{m,s} \big)+  \Big(\dfrac{h^3}{12}\,-{\rm K}\,\dfrac{h^5}{80}\Big)\,
W_{\mathrm{shell}}  \big(   \mathcal{E}_{m,s} \, {\rm B}_{y_0} +   {\rm C}_{y_0} \mathcal{K}_{e,s} \big)  \\&
-\dfrac{h^3}{3} \mathrm{ H}\,\mathcal{W}_{\mathrm{shell}}  \big(  \mathcal{E}_{m,s} ,
\mathcal{E}_{m,s}{\rm B}_{y_0}+{\rm C}_{y_0}\, \mathcal{K}_{e,s} \big)+
\dfrac{h^3}{6}\, \mathcal{W}_{\mathrm{shell}}  \big(  \mathcal{E}_{m,s} ,
( \mathcal{E}_{m,s}{\rm B}_{y_0}+{\rm C}_{y_0}\, \mathcal{K}_{e,s}){\rm B}_{y_0} \big) \notag\\&+ \,\dfrac{h^5}{80} \,
W_{\mathrm{mp}} \big((  \mathcal{E}_{m,s} \, {\rm B}_{y_0} +  {\rm C}_{y_0} \mathcal{K}_{e,s} )   {\rm B}_{y_0} \,\big).\notag
\end{align}

Analogously, using \eqref{gammaDumi},  we integrate the curvature part of the strain energy density
\begin{equation}\label{e85}
\begin{array}{l}
\dd \int_{\Omega_h}   \,W_{\mathrm{curv}}(  {\Gamma}_s)  \,\mathrm{det} \big[  \nabla_x   \Theta(  x) \big]\, \mathrm d  V\,=\,  \int_{\Omega_h}   \,W_{\mathrm{curv}}\big(  \mathcal{K}_{e,s} + x_3 \,(    \mathcal{K}_{e,s} {\rm B}_{y_0}   - 2\,{\rm H}\,    \mathcal{K}_{e,s} )\,\big)
\,\dfrac{{\rm det}(\nabla y_0|n_0)  }{b(x_3)} \,\mathrm d  V   \\
\qquad \,=\,
\dd\int_{\Omega_h}   \,\Big( D_0+ D_1\,x_3 +D_2\,x_3^2\,     \Big)\,\Big[ 1+ 2\,{\rm H}\,x_3+ (4\,{\rm H}^2\,-{\rm K})\,x_3^2
+    (8\,{\rm H}^3\,-4\,{\rm H\, K})x_3^3   \\
\qquad\qquad \qquad \qquad\qquad \qquad \qquad \qquad +({\rm K}^2-12 \,{\rm H}^2 \, {\rm K}+16\,{\rm H}^4)\,x_3^4+O(x_3^5)
\Big] \,{\rm det}(\nabla y_0|n_0)  \,\mathrm dV   \\
\qquad     \,=\, \dd\int_\omega \,\Big\{   h\,D_0+\,\dfrac{h^3}{12}\,\Big[ (4\,{\rm H}^2\,-{\rm K})D_0+ 2\,{\rm H}\,D_1+D_2 \Big]+ \,\dfrac{h^5}{80}\,\Big[ ({\rm K}^2-12 \,{\rm H}^2 \, {\rm K}+16\,{\rm H}^4)\,D_0      \\
\qquad\qquad\qquad \qquad\qquad \quad  + (8\,{\rm H}^3\,-4\,{\rm H\, K})\,D_1+  (4\,{\rm H}^2\,-{\rm K})\,D_2  \Big]
\Big\} \, {\rm det}(\nabla y_0|n_0)   \,\mathrm da +O(h^7),
\end{array}
\end{equation}
where we have denoted by $D_k$ the coefficients of $x_3^k$ ($k\,=\,0,1,2$) in the expression
\begin{align}\label{e86}
&W_{\mathrm{curv}}\big(  \mathcal{K}_{e,s} + x_3 \,(    \mathcal{K}_{e,s} {\rm B}_{y_0}   - 2\,{\rm H}\,    \mathcal{K}_{e,s} )\,\big)\,=\,  D_0(x_1,x_2)+ D_1(x_1,x_2)\,x_3 +D_2(x_1,x_2)\,x_3^2,\qquad\text{with}  \\
&D_0 \,=\, W_{\mathrm{curv}}\big(    \mathcal{K}_{e,s} \big), \quad
D_1 \,=\, 2 \, \mathcal{W}_{\mathrm{curv}} \big(  \mathcal{K}_{e,s} , \,   \mathcal{K}_{e,s} {\rm B}_{y_0}   - 2\,{\rm H}\,    \mathcal{K}_{e,s} \big),\quad
D_2 \,=\,   W_{\mathrm{curv}} \big(    \mathcal{K}_{e,s} {\rm B}_{y_0}   - 2\,{\rm H}\,    \mathcal{K}_{e,s} \big).\notag
\end{align}
We write the coefficients of $\,\dfrac{h^3}{12}\,$ and $\,\dfrac{h^5}{80}\,$ in \eqref{e85} with the help of \eqref{e86}$_{2,3,4}$ as follows
\begin{align}\label{e87}
&(4\,{\rm H}^2\,-{\rm K})D_0+ 2\,{\rm H}\,D_1+D_2 \,=\, -{\rm K}\, W_{\mathrm{curv}}\big(  \mathcal{K}_{e,s} \big) + W_{\mathrm{curv}}\big(  \mathcal{K}_{e,s}   {\rm B}_{y_0} \,  \big),\notag
 \\
&({\rm K}^2-12 \,{\rm H}^2 \, {\rm K}+16\,{\rm H}^4)\,D_0   + (8\,{\rm H}^3\,-4\,{\rm H\, K})\,D_1+  (4\,{\rm H}^2\,-{\rm K})\,D_2 \,=\,
 \\
&\qquad\qquad \,=\, {\rm K}^2\, W_{\mathrm{curv}}\big(  \mathcal{K}_{e,s} \big)
-4\,{\rm H\, K} \,  \mathcal{W}_{\mathrm{curv}}\big(  \mathcal{K}_{e,s} ,\,  \mathcal{K}_{e,s}   {\rm B}_{y_0} \,  \big)+ (4\,{\rm H}^2\,-{\rm K})\, W_{\mathrm{curv}}\big(  \mathcal{K}_{e,s}   {\rm B}_{y_0} \,  \big)\notag
 \\
&\qquad\qquad
\,=\, -{\rm K}\, W_{\mathrm{curv}}\big(  \mathcal{K}_{e,s} {\rm B}_{y_0} \,   \big) + W_{\mathrm{curv}}\big(  \mathcal{K}_{e,s}   {\rm B}_{y_0}^2  \big).\notag
\end{align}
Inserting \eqref{e87} into \eqref{e85} (and again neglecting the terms of order $O(h^7)$) we arrive at the following result of this integration
\begin{align}\label{e88}
\dd \int_{\Omega_h}   &\,W_{\mathrm{curv}}(  {\Gamma}_s)  \,\mathrm{det} \big[  \nabla_x   \Theta(  x) \big]\, \mathrm d  V \,=\,  \\
& \,=\,
\dd\int_{\omega}    \, \Big[  \Big(h-{\rm K}\,\dfrac{h^3}{12}\Big)\,
W_{\mathrm{curv}}\big(  \mathcal{K}_{e,s} \big)    +  \Big(\dfrac{h^3}{12}\,-{\rm K}\,\dfrac{h^5}{80}\Big)\,
W_{\mathrm{curv}}\big(  \mathcal{K}_{e,s}   {\rm B}_{y_0} \,  \big)  + \,\dfrac{h^5}{80} \,
W_{\mathrm{curv}}\big(  \mathcal{K}_{e,s}    {\rm B}_{y_0}^2  \big)
\Big] \,{\rm det}(\nabla y_0|n_0)        \,\mathrm d a .\notag
\end{align}

In order to write the external loads potential in the shell model, we perform next the integration over the thickness of the relations \eqref{loadpot4}. Thus, from \eqref{defTheta} and \eqref{ansatz} we find
\begin{align}
\tilde{v}(x_i)  & \,=\, \varphi(x_i) - \Theta(x_i)\,=\, \Big( m + x_3\varrho_m\overline{Q}_{e,s}\, n_0 +\dd\frac{x_3^2}{2}\varrho_b \overline{Q}_{e,s}\, n_0 \Big) - (y_0+x_3n_0) 
\notag \\
& \,=\, (m-y_0) + x_3 (\varrho_m\overline{Q}_{e,s}\, n_0 -n_0) + \dd\frac{x_3^2}{2}\varrho_b \overline{Q}_{e,s}\, n_0\,.
\notag
\end{align}
We insert this into \eqref{loadpot4}$ _1 $ and use the approximation $\varrho_m\cong \varrho_m ^0=1  $ , $ \varrho_b\cong \varrho_b^0=0 $ as in \eqref{final_rho} to obtain the simplified form
\begin{equation}\label{e2o}
\begin{array}{l}
\dd\int_{\Omega_h} \bigl\langle  \tilde{f}, \tilde{v} \bigr\rangle   \, dV\, \,=\, 
\int_{\omega} \left( \bigl\langle \int_{-h/2}^{h/2} \tilde{f}\,dx_3 ,   m-y_0 \bigr\rangle    + 
 \bigl\langle \int_{-h/2}^{h/2} x_3 \tilde{f}\,dx_3 ,  (\overline{Q}_{e,s}-\id_3)\, n_0 \bigr\rangle   
\right) da\,.
\end{array}
\end{equation}
Denoting with $ \tilde{t}^{\pm}(x_1,x_2):\,=\, \tilde{t}(x_1,x_2, \pm \frac{h}{2} ) $ and taking into account that $ \Gamma_t \,=\,  \Big(\omega\times \Big\{\frac{h}{2}\Big\} \Big) \cup \Big(\omega\times \Big\{-\frac{h}{2}\Big\} \Big) \cup \Big(\gamma_t\times (-\frac{h}{2} , \frac{h}{2})\Big) $, we obtain similarly
\begin{align} 
\dd\int_{\Gamma_t} \bigl\langle  \tilde{t}, \tilde{v} \bigr\rangle   \, dS  \,=\, &\int_{\omega}  \bigl\langle  \tilde{t}^{\pm},  (m-y_0)  \pm
\frac{h}{2}  (\varrho_m\overline{Q}_{e,s}\, n_0-n_0) + \frac{h^2}{8}  \varrho_b\overline{Q}_{e,s}\, n_0 \bigr\rangle   
\,da
\notag \\
& + \int_{\gamma_t} \int_{-h/2}^{h/2}  \bigl\langle  \,\tilde{t}, (m-y_0)  + 
x_3  (\varrho_m\overline{Q}_{e,s}\, n_0-n_0) + \frac{x_3^2}{2}  \varrho_b\overline{Q}_{e,s}\, n_0 \bigr\rangle   \,
dx_3\,ds
\notag\,.
\end{align}
Using the same approximation as before ($\varrho_m\cong \varrho_m ^0=1  $ , $ \varrho_b\cong \varrho_b^0=0 $, see \eqref{final_rho}) we find
\begin{align} \label{e3o}
\dd\int_{\Gamma_t} \bigl\langle  \tilde{t},  \tilde{v} \bigr\rangle   \, dS  \,=\, & \int_{\omega}  \bigl\langle  \tilde{t}^{+} + \tilde{t}^{-}  ,   m-y_0  \bigr\rangle    \,da 
+ 
\int_{\omega}  \bigl\langle  
\frac{h}{2}  (\tilde{t}^{+} - \tilde{t}^{-}),  (\overline{Q}_{e,s}-\id_3)\, n_0   \bigr\rangle   \,
da  \\
& + \int_{\gamma_t}   \bigl\langle  \int_{-h/2}^{h/2} \tilde{t}\,dx_3 ,  m-y_0  \bigr\rangle   \,ds + 
\int_{\gamma_t}   \bigl\langle  \int_{-h/2}^{h/2} x_3\tilde{t}\,dx_3 ,   (\overline{Q}_{e,s}-\id_3)\, n_0  \bigr\rangle   \,ds 
\notag\,,
\end{align}
where $ ds $ is the arclength element along the curve $ \gamma_t $ and $ da\,=\, dx_1dx_2\, $.
With \eqref{e2o} and \eqref{e3o}, the potential of external applied loads $ \overline{\Pi}(m,\overline{Q}_{e,s})\,=\, \widetilde{\Pi}(\varphi, \overline{R}) $ in \eqref{loadpot4} can be written in the form
\begin{align}\label{e4o}
\overline{\Pi}(m,\overline{Q}_{e,s})\,=\,  \Pi_\omega(m,\overline{Q}_{e,s}) + \Pi_{\gamma_t}(m,\overline{Q}_{e,s})\,,
\end{align}
with
\begin{align}
\Pi_\omega(m,\overline{Q}_{e,s}) \,=\, \dd\int_{\omega} \bigl\langle  \bar{f},  \bar u \bigr\rangle   \, da + \Lambda_\omega(\overline{Q}_{e,s}),\qquad 
\Pi_{\gamma_t}(m,\overline{Q}_{e,s})\,=\, \dd\int_{\gamma_t} \bigl\langle  \bar{t},  \bar u \bigr\rangle   \, ds + \Lambda_{\gamma_t}(\overline{Q}_{e,s})\,,
\end{align}
where $ \bar u(x_1,x_2) \,=\, m(x_1,x_2)-y_0(x_1,x_2) $ is the displacement vector of the midsurface and
\begin{align} \label{e5o}
\bar f & \,=\, \dd \int_{-h/2}^{h/2} \tilde{f}\,dx_3 + (\tilde{t}^{+} + \tilde{t}^{-}), \qquad 
\bar t \,=\, \int_{-h/2}^{h/2} \tilde{t}\,dx_3 \,,  \\
\Lambda_\omega(\overline{Q}_{e,s}) & \,=\, \int_{\omega}  \bigl\langle  \dd \int_{-h/2}^{h/2} x_3\,\tilde{f}\,dx_3 +
\frac{h}{2}  (\tilde{t}^{+} - \tilde{t}^{-}),  (\overline{Q}_{e,s}-\id_3)\, n_0   \bigr\rangle   \,
da + \overline{\Pi}_\omega(\overline{Q}_{e,s}),
\notag \\
\Lambda_{\gamma_t}(\overline{Q}_{e,s}) & \,=\, \int_{\gamma_t}  \bigl\langle  \dd \int_{-h/2}^{h/2} x_3\,\tilde{t}\,dx_3 ,  (\overline{Q}_{e,s}-\id_3)\, n_0   \bigr\rangle   \,
ds + \overline{\Pi}_{\gamma_t}(\overline{Q}_{e,s})
\notag
\end{align}
and $\overline{\Pi}_\omega(\overline{Q}_{e,s})\,=\, \widetilde{\Pi}_{\Omega_h}(\overline{R})  $ , $ \overline{\Pi}_{\gamma_t}(\overline{Q}_{e,s})\,=\, \widetilde{\Pi}_{\Gamma_t}(\overline{R}) $, since $ \overline{R} $ is independent of $ x_3\,$.\\ The functions $\overline{\Pi}_\omega\,, \overline{\Pi}_{\gamma_t} : L^2 (\omega, \textrm{SO}(3))\rightarrow\mathbb{R} $ are assumed to be continuous and bounded operators.

\section{The new   geometrically nonlinear   Cosserat shell model}\setcounter{equation}{0}

\subsection{Formulation of the minimization problem}

Gathering our results, see  \eqref{minprmod}, \eqref{e84}{, \eqref{eq:ausgerechnete_koeff}} and \eqref{e88}, we have obtained the following two-dimensional minimization problem   for the
deformation of the midsurface $m:\omega
\,{\to}\,
\mathbb{R}^3$ and the microrotation of the shell
$\overline{Q}_{e,s}:\omega
\,{\to}\,
\textrm{SO}(3)$ solving on $\omega
\,\subset\mathbb{R}^2
$: minimize with respect to $ (m,\overline{Q}_{e,s}) $ the  functional
\begin{equation}\label{e89}
I\,=\, \int_{\omega}    \, \Big[  \,
W_{\mathrm{memb}}\big(  \mathcal{E}_{m,s}  \big) +W_{\mathrm{memb,bend}}\big(  \mathcal{E}_{m,s} ,\,  \mathcal{K}_{e,s} \big)   +
W_{\mathrm{bend,curv}}\big(  \mathcal{K}_{e,s}    \big)
\Big] \,{\rm det}(\nabla y_0|n_0)       \,\mathrm d a - \overline{\Pi}(m,\overline{Q}_{e,s})\,,
\end{equation}
where the  membrane part $\,W_{\mathrm{memb}}\big(  \mathcal{E}_{m,s} \big) \,$, the membrane--bending part $\,W_{\mathrm{memb,bend}}\big(  \mathcal{E}_{m,s} ,\,  \mathcal{K}_{e,s} \big) \,$ and the bending--curvature part $\,W_{\mathrm{bend,curv}}\big(  \mathcal{K}_{e,s}    \big)\,$ of the shell energy density are given by
\begin{align}\label{e90}
W_{\mathrm{memb}}\big(  \mathcal{E}_{m,s} \big)\,=\,&  \Big(h+{\rm K}\,\dfrac{h^3}{12}\Big)\,
W_{\mathrm{shell}}\big(    \mathcal{E}_{m,s} \big), \notag\\    
W_{\mathrm{memb,bend}}\big(  \mathcal{E}_{m,s} ,\,  \mathcal{K}_{e,s} \big)\,=\,&    \Big(\dfrac{h^3}{12}\,-{\rm K}\,\dfrac{h^5}{80}\Big)\,
W_{\mathrm{shell}}  \big(   \mathcal{E}_{m,s} \, {\rm B}_{y_0} +   {\rm C}_{y_0} \mathcal{K}_{e,s} \big)  \\&
-\dfrac{h^3}{3} \mathrm{ H}\,\mathcal{W}_{\mathrm{shell}}  \big(  \mathcal{E}_{m,s} ,
\mathcal{E}_{m,s}{\rm B}_{y_0}+{\rm C}_{y_0}\, \mathcal{K}_{e,s} \big)+
\dfrac{h^3}{6}\, \mathcal{W}_{\mathrm{shell}}  \big(  \mathcal{E}_{m,s} ,
( \mathcal{E}_{m,s}{\rm B}_{y_0}+{\rm C}_{y_0}\, \mathcal{K}_{e,s}){\rm B}_{y_0} \big) \notag\\&+ \,\dfrac{h^5}{80} \,
W_{\mathrm{mp}} \big((  \mathcal{E}_{m,s} \, {\rm B}_{y_0} +  {\rm C}_{y_0} \mathcal{K}_{e,s} )   {\rm B}_{y_0} \,\big),  \notag \\
W_{\mathrm{bend,curv}}\big(  \mathcal{K}_{e,s}    \big) \,=\,&  \Big(h-{\rm K}\,\dfrac{h^3}{12}\Big)\,
W_{\mathrm{curv}}\big(  \mathcal{K}_{e,s} \big)    +  \Big(\dfrac{h^3}{12}\,-{\rm K}\,\dfrac{h^5}{80}\Big)\,
W_{\mathrm{curv}}\big(  \mathcal{K}_{e,s}   {\rm B}_{y_0} \,  \big)  + \,\dfrac{h^5}{80} \,
W_{\mathrm{curv}}\big(  \mathcal{K}_{e,s}   {\rm B}_{y_0}^2  \big)\notag
\end{align}
and
\begin{align}
\mathcal{E}_{m,s} & \,=\,\overline{Q}_{e,s}^T \widetilde{F}_{m}-\id_3,\qquad \widetilde{F}_{m}\,=\, (\nabla  m|\overline{Q}_{e,s}\nabla_x\Theta(0)\, e_3)[\nabla_x \Theta(0)]^{-1} , \notag\vspace{3.5mm}\\
\mathcal{K}_{e,s} & \,=\,  (\mathrm{axl}(\overline{Q}_{e,s}^T\,\partial_{x_1} \overline{Q}_{e,s})\,|\, \mathrm{axl}(\overline{Q}_{e,s}^T\,\partial_{x_2} \overline{Q}_{e,s})\,|0)[\nabla_x \Theta(0)]^{-1},\notag \\
\Theta(x_1,x_2,x_3) &\,=\,y_0(x_1,x_2)+x_3 \, n_0(x_1,x_2),  \qquad \nabla_x \Theta(0)\,=\,(\nabla y_0|n_0), \qquad n_0\,=\,\left(\nabla_x \Theta\big(0\big)\right)\, e_3 \notag \\
{\rm B}_{y_0}&\,=\,-(\nabla n_0|0) \,[\nabla_x \Theta(0)]^{-1},\qquad
{\rm C}_{y_0}\,=\,\det	(\nabla_x \Theta(0))\,[\nabla_x \Theta(0)]^{-T}\begin{footnotesize}\begin{pmatrix}
0&1&0 \\
-1&0&0 \\
0&0&0
\end{pmatrix}\end{footnotesize}\,  [	\nabla_x \Theta(0)]^{-1}\in \mathfrak{so}(3), \notag\\
{\rm K} &\,=\,{\rm det}{({\rm L}_{y_0})}\, ,\qquad 
2\,{\rm H}\, \,=\,{\rm tr}({{\rm L}_{y_0}}),\qquad {\rm L}_{y_0} \,=\,-([{\nabla  y_0}]^T\,{\nabla  y_0})^{-1} ([{\nabla  n_0}]^T\,{\nabla  y_0}),
 \notag\\
W_{\mathrm{shell}}(  X) & \,=\,   \mu\,\lVert  \mathrm{sym} \,X\rVert^2 +  \mu_{\rm c}\,\lVert \mathrm{skew} \,X\rVert^2 +\,\dfrac{\lambda\,\mu}{\lambda+2\mu}\,\big[ \mathrm{tr}(X)\big]^2, \\
\mathcal{W}_{\mathrm{shell}}(  X,  Y)& \,=\,   \mu\, \bigl\langle  \mathrm{sym} \,X, \mathrm{sym}\,   \,Y\bigr\rangle  +  \mu_{\rm c} \bigl\langle  \,\mathrm{skew} \,X, \mathrm{skew}\,   \,Y\bigr\rangle  +\,\dfrac{\lambda\,\mu}{\lambda+2\mu}\,\mathrm{tr}  (X)\,\mathrm{tr}  (Y),  \notag 
\end{align}
\begin{align}\hspace*{-3.3cm}
W_{\mathrm{mp}}(  X)&\,=\, \mu\,\lVert \mathrm{sym} \,X\rVert^2+  \mu_{\rm c}\,\lVert \mathrm{skew} \,X\rVert^2 +\,\dfrac{\lambda}{2}\,\big[ \mathrm{tr}(X)\big]^2,\notag \\
\hspace*{-3.3cm}W_{\mathrm{curv}}(  X )&\,=\,\mu\, {L}_{\rm c}^2 \left( b_1\,\lVert  \dev\,\text{sym} \,X\rVert^2+b_2\,\lVert \text{skew}\,X\rVert^2+\,4\,b_3\,
\big[ \mathrm{tr}(X)\big]^2\right) .\notag
\end{align}
In this formulation,  all the constitutive coefficients  are deduced from the three-dimensional formulation, without using any a posteriori fitting of some two-dimensional constitutive coefficients.

The potential of applied external loads $ \overline{\Pi}(m,\overline{Q}_{e,s}) $ appearing in \eqref{e89} is expressed by the relations \eqref{e4o}, \eqref{e5o}.

We consider the following boundary conditions for the midsurface deformation $m$ and rotation field $\overline{R}_s$ on the Dirichlet part of the lateral boundary $\gamma_0\subset \partial \omega$:
\begin{align}
m\mid_{\gamma_0}&\,=\,m_0,\ \  \ \ \ \ \ \ \text{simply supported (fixed, welded)} \qquad \qquad 
 \overline{R}_s\mid_{\gamma_0}\,=\,\hat{R},\ \  \ \ \ \ \ \ (\text{clamped)}.\notag
\end{align}

It is possible to use 	
	  the referential fundamental forms $ {\rm I}_{y_0} $, $ {\rm II}_{y_0}$ and $ {\rm L}_{y_0} $  instead of  the matrices $ {\rm A}_{y_0}$, $ {\rm B}_{y_0}$ and  $ {\rm C}_{y_0}$, and to rewrite all the arguments of the energy terms as
	\begin{align}\label{eq5}
	\mathcal{E}_{m,s}=&\,  [\nabla_x \Theta(0)]^{-T}
	\begin{footnotesize}\left( \begin{array}{c|c}
	(\overline{Q}_{e,s} \nabla y_0)^{T} \nabla m- {\rm I}_{y_0} & 0 \vspace{4pt}\\
	(\overline{Q}_{e,s}  n_0)^{T} \nabla m & 0
	\end{array} \right)\end{footnotesize} [\nabla_x \Theta(0)]^{-1} \notag\\=&\,
	[\nabla_x \Theta(0)]^{-T}
	\begin{footnotesize}\left( \begin{array}{c|c}
	\mathcal{G} & 0 \vspace{4pt}\\
	\mathcal{T}  & 0
	\end{array} \right)\end{footnotesize} [\nabla_x \Theta(0)]^{-1}
	\vspace{6pt}\notag\\
	\mathrm{C}_{y_0} \mathcal{K}_{e,s} = &\, [\nabla_x \Theta(0)]^{-T}
	\begin{footnotesize}\left( \begin{array}{c|c}
	(\overline{Q}_{e,s} \nabla y_0)^{T} \nabla (\overline{Q}_{e,s} n_0)+ {\rm II}_{y_0} & 0 \vspace{4pt}\notag\\
	0 & 0
	\end{array} \right)\end{footnotesize} [\nabla_x \Theta(0)]^{-1}\notag\\=&\, [\nabla_x \Theta(0)]^{-T}
	\begin{footnotesize}\left( \begin{array}{c|c}
	-\mathcal{R} & 0 \vspace{4pt}\notag\\
	0 & 0
	\end{array} \right)\end{footnotesize} [\nabla_x \Theta(0)]^{-1},\notag\\
	\mathcal{E}_{m,s} {\rm B}_{y_0} = &\,  [\nabla_x \Theta(0)]^{-T}
	\begin{footnotesize}\left( \begin{array}{c|c}
	\mathcal{G} \,{\rm L}_{y_0} & 0 \vspace{4pt}\notag\\
	\mathcal{T}\,{\rm L}_{y_0} & 0
	\end{array} \right)\end{footnotesize} [\nabla_x \Theta(0)]^{-1} ,
	\vspace{10pt}\\
	\mathcal{E}_{m,s} {\rm B}^2_{y_0} = &\, [\nabla_x \Theta(0)]^{-T}
	\begin{footnotesize}\left( \begin{array}{c|c}
	\mathcal{G}\, {\rm L}^2_{y_0} & 0 \vspace{4pt}\\
	\mathcal{T} \,{\rm L}^2_{y_0} & 0
	\end{array} \right)\end{footnotesize} [\nabla_x \Theta(0)]^{-1} ,
	\vspace{10pt}\\
	\mathrm{C}_{y_0} \mathcal{K}_{e,s} {\rm B}_{y_0} = &\, \,- [\nabla_x \Theta(0)]^{-T}
	\begin{footnotesize}\left( 
	\mathcal{R}\, {\rm L}_{y_0}\right)^\flat\end{footnotesize} [\nabla_x \Theta(0)]^{-1},
	\vspace{10pt}\notag\\\notag
	\mathrm{C}_{y_0} \mathcal{K}_{e,s} {\rm B}^2_{y_0} = &\, - [\nabla_x \Theta(0)]^{-T}
	\begin{footnotesize}\left( 
	\mathcal{R}\, {\rm L}^2_{y_0}\right)^\flat \end{footnotesize}[\nabla_x \Theta(0)]^{-1},
	\vspace{10pt}\notag\\
	\mathcal{E}_{m,s} {\rm B}_{y_0}  + \mathrm{C}_{y_0} \mathcal{K}_{e,s} 
	= &\, [\nabla_x \Theta(0)]^{-T}
	\begin{footnotesize}\left( \begin{array}{c|c}
	\mathcal{G} \,{\rm L}_{y_0}- \mathcal{R} & 0 \vspace{4pt}\\
	\mathcal{T} \,{\rm L}_{y_0} & 0
	\end{array} \right)\end{footnotesize} [\nabla_x \Theta(0)]^{-1}\notag,\\
	\mathcal{E}_{m,s} {\rm B}_{y_0}^2  + \mathrm{C}_{y_0} \mathcal{K}_{e,s} {\rm B}_{y_0}
	= &\, [\nabla_x \Theta(0)]^{-T}
	\begin{footnotesize}\left( \begin{array}{c|c}
	\mathcal{G} \,{\rm L}_{y_0}^2- \mathcal{R}{\rm L}_{y_0} & 0 \vspace{4pt}\\
	\mathcal{T} \,{\rm L}_{y_0}^2 & 0
	\end{array} \right)\end{footnotesize} [\nabla_x \Theta(0)]^{-1}\notag
	,
	\end{align}
	where
	\begin{align}\label{eq4}
	\mathcal{G} :=&\, (\overline{Q}_{e,s} \nabla y_0)^{T} \nabla m- {\rm I}_{y_0}\not\in {\rm Sym}(2)\qquad\qquad\quad\ \ \ \,\textrm{\it the change of metric tensor (in-plane deformation)},\notag
	\\
	\mathcal{T}:=& \, (\overline{Q}_{e,s}  n_0)^T (\nabla m) \qquad \qquad \qquad \quad \quad\qquad \ \qquad \ \  \textrm{\it the transverse shear deformation (row) vector\footnotemark},
	\\
	\mathcal{R} :=& \, -(\overline{Q}_{e,s} \nabla y_0)^{T} \nabla (\overline{Q}_{e,s} n_0)- {\rm II}_{y_0}\not\in {\rm Sym}(2)
	\quad  \ \ \, \textrm{\it the bending  strain tensor}.
	\notag\end{align}
\footnotetext{The vector $d_3=\overline{Q}_{e,s}\,n_0$ represents the classical director, which does not have to be orthogonal to the deformed midsurface. If $\mathcal{T}=(0,0)$, then $\overline{Q}_{e,s}\,n_0$ is orthogonal to the deformed midsurface. An alternative form  of the transverse shear deformation row vector is $\mathcal{T}= \, \left(\bigl\langle\overline{Q}_{e,s}  n_0, \partial_{x_1} m\bigr\rangle,\bigl\langle\overline{Q}_{e,s}  n_0, \partial_{x_2} m\bigr\rangle\right)$.}
Regarding the arguments of the  bending-curvature energy density $ W_{\mathrm{bend,curv}} $, we can express the tensor $ \mathcal{K}_{e,s}$  in terms of the tensor $ \mathrm{C}_{y_0}\, \mathcal{K}_{e,s}$ and the vector $ \mathcal{K}_{e,s}^T n_0$, according to Proposition \ref{propAB} and  to the decomposition 
	\begin{align}  \label{descK}
	\mathcal{K}_{e,s} = {\rm A}_{y_0} \, \mathcal{K}_{e,s} +(0|0|n_0) \,(0|0|n_0)^T \, \mathcal{K}_{e,s}= \mathrm{C}_{y_0}( - \mathrm{C}_{y_0} \mathcal{K}_{e,s}) +(0|0|n_0) \,(0|0|\mathcal{K}_{e,s}^T\,n_0)^T\, \, .
	\end{align} 
We have already seen that   ${\rm C}_{y_0} \mathcal{K}_{e,s}$ from the above decomposition can be expressed in terms of the {\it bending strain} tensor $ \mathcal{R} $, see \eqref{eq5},
while
	the remaining vector $\mathcal{K}_{e,s}^T\,n_0$ from \eqref{descK} is completely characterized by the row vector
	\begin{equation}
	\label{e5d}
	\mathcal{N} :=  n_0^T \big(\mbox{axl}(\overline{Q}_{e,s}^T\partial_{x_1}\overline{Q}_{e,s})\,|\, \mbox{axl}(\overline{Q}_{e,s}^T\partial_{x_2}\overline{Q}_{e,s}) \big) ,
	\end{equation}
	which is   called the row vector of {\it drilling bendings}. 

\begin{remark}Summarizing, the present shell model is derived:
\begin{itemize}
	\item  under the assumption that
$ {h}\,|{\kappa_1}|<\frac{1}{2},\  {h}\,|{\kappa_2}|<\frac{1}{2}$
\item considering an approximation of the elastic rotation $\overline{Q}_{e}:\Omega_h\rightarrow \text{\rm{SO}}(3)$ 
\begin{equation}
\overline{Q}_{e}(x_1,x_2,x_3)\cong\overline{Q}_{e,s}(x_1,x_2)\,=\,\overline{R}_s(x_1,x_2)\,Q_0^T(x_1,x_2,0)\, ;
\end{equation}
\item  choosing an	 \textit{8-parameter quadratic ansatz} in the thickness direction for the reconstructed total deformation $\varphi_s:\Omega_h\subset \mathbb{R}^3\rightarrow \mathbb{R}^3$ of the shell-like structure
\begin{align}
\varphi_s(x_1,x_2,x_3)\,=\,&m(x_1,x_2)+\bigg(x_3\varrho_m(x_1,x_2)+\dd\frac{x_3^2}{2}\varrho_b(x_1,x_2)\bigg)\overline{Q}_{e,s}(x_1,x_2)\nabla_x\Theta(x_1,x_2,0)\, e_3;
\end{align}
\item  taking the exact form of $\varrho_m$ and considering a suitable approximation for  $\varrho_b$ (coming from a generalized plane stress condition)
\begin{align}
\varrho_m\,=\,&1-\frac{\lambda}{\lambda+2\mu}[ \bigl\langle  \overline{Q}_{e,s}^T(\nabla m|0)[\nabla_x\Theta(0)]^{-1},\id_3\bigr\rangle -2]\,=\,:\varrho_m^e, \notag\\
\varrho_b
\,=\,& \frac{\lambda}{\lambda+2\,\mu}\, \bigl\langle \overline{Q}_{e,s} ^T(\nabla  m|0)\,[\nabla_x\Theta(0)]^{-1}\,(\nabla n_0|0)\, [\nabla_x\Theta(0)]^{-1},\id_3\bigr\rangle  \\&-\frac{\lambda}{\lambda+2\,\mu}\, \bigl\langle  \overline{Q}_{e,s} ^T(\nabla (\,\overline{Q}_{e,s} \nabla_x\Theta(0)\, e_3)|0)\,[\nabla_x\Theta(0)]^{-1},\id_3\bigr\rangle  \notag\\
&+\frac{\lambda^2}{(\lambda+2\,\mu)^2}\,\Big[ \bigl\langle  \overline{Q}_{e,s} ^T(\nabla (\,\overline{Q}_{e,s} \nabla_x\Theta(0)\, e_3)|0)\,[\nabla_x\Theta(0)]^{-1},\id_3\bigr\rangle \Big]\Big[ \bigl\langle  \overline{Q}_{e,s}^T(\nabla m|0)[\nabla_x\Theta(0)]^{-1},\id_3\bigr\rangle -2\Big]\notag \\
\cong&\, \varrho_b^e:\,=\,\frac{\lambda}{\lambda+2\,\mu}\, \bigl\langle \overline{Q}_{e,s} ^T(\nabla  m|0)\,[\nabla_x\Theta(0)]^{-1}\,(\nabla n_0|0)\, [\nabla_x\Theta(0)]^{-1},\id_3\bigr\rangle  \notag\\&-\frac{\lambda}{\lambda+2\,\mu}\, \bigl\langle  \overline{Q}_{e,s} ^T(\nabla (\,\overline{Q}_{e,s} \nabla_x\Theta(0)\, e_3)|0)\,[\nabla_x\Theta(0)]^{-1},\id_3\bigr\rangle ;\notag
\end{align}
\item choosing a further approximation of the deformation gradient (by neglecting space derivatives of $\varrho_m^e$ and $\varrho_b^e$, respectively)
\begin{align}
F_s&\,=\,\nabla_x\varphi_s(x_1,x_2,x_3)\,=\,(\nabla  m|\, \varrho_m\,\overline{Q}_{e,s}(x_1,x_2)\nabla_x\Theta(x_1,x_2,0)\, e_3) \notag\\&\hspace{3.3cm}+x_3\, (\nabla \left[\varrho_m\,\overline{Q}_{e,s}(x_1,x_2)\nabla_x\Theta(x_1,x_2,0)\, e_3\right]|\varrho_b\,
\overline{Q}_{e,s}(x_1,x_2)\nabla_x\Theta(x_1,x_2,0)\, e_3)\nonumber \\
&\hspace{3.3cm}+\frac{x_3^2}{2}(\nabla \left[\varrho_b\,\overline{Q}_{e,s}(x_1,x_2)\nabla_x\Theta(x_1,x_2,0)\, e_3\right]|0) \\
&\cong\widetilde{F}_s:\,=\,(\nabla  m|\, \varrho_m^e\,\overline{Q}_{e,s}(x_1,x_2)\nabla_x\Theta(x_1,x_2,0)\, e_3)
 \notag\\\nonumber  &\qquad\qquad  +x_3 (\nabla \left[\,\overline{Q}_{e,s}(x_1,x_2)\nabla_x\Theta(x_1,x_2,0)\, e_3\right]|\varrho_b^e\,
\overline{Q}_{e,s}(x_1,x_2)\nabla_x\Theta(x_1,x_2,0)\, e_3),\notag
\end{align}
and therefore, the following approximation of the reconstructed gradient
\begin{align}
F_{s,\xi}&\,=\,\nabla_x\varphi_s(x_1,x_2,x_3)[\nabla_x \Theta(x_1,x_2,x_3)]^{-1}  \\\nonumber
&\cong \widetilde{F}_{e,s}:\,=\,(\nabla  m|\, \varrho_m^e\,\overline{Q}_{e,s}(x_1,x_2)\nabla_x\Theta(x_1,x_2,0)\, e_3)[\nabla_x \Theta(x_1,x_2,x_3)]^{-1}
 \\\nonumber  &\qquad \qquad \quad +x_3 (\nabla \left[\,\overline{Q}_{e,s}(x_1,x_2)\nabla_x\Theta(x_1,x_2,0)\, e_3\right]|\varrho_b^e\,
\overline{Q}_{e,s}(x_1,x_2)\nabla_x\Theta(x_1,x_2,0)\, e_3)[\nabla_x \Theta(x_1,x_2,x_3)]^{-1}
.
\end{align}
Moreover,  we have used the full expressions of $	[	\nabla_x \Theta(x_3)]^{-1}$ and $	{\rm det}(\nabla_x \Theta(x_3))$ 
\begin{align}
[	\nabla_x \Theta(x_3)]^{-1}&\,= \,\frac{1}{1-2\,{\rm H}\, x_3+{\rm K}\, x_3^2}\left[\id_3+x_3({\rm L}_{y_0}^\flat-2\,{\rm H}\, \id_3)+x_3^2\, {\rm K}\begin{footnotesize}\begin{pmatrix}
0&0&0 \\
0&0&0 \\
0&0&1
\end{pmatrix}\end{footnotesize}
\right] [	\nabla_x \Theta(0)]^{-1}, \notag\\
{\rm det}(\nabla_x \Theta(x_3))&\,=\,{\rm det}{(\nabla_x \Theta(0))}\,\Big[1-2\,x_3\,{\rm H}\,+x_3^2 \, {\rm K}\Big]\, .
\end{align}
\item  neglecting the terms of order $O(h^7)$ in the final form of the energy.
\end{itemize}
\end{remark}

After a shell model is proposed, there is a basic requirement: the 2D-shell model must be invariant w.r.t. a reparametrization of the midsurface coordinates. Here,  the total elastically stored
energy 
$$
W_{\mathrm{memb}}\big(  \mathcal{E}_{m,s}  \big)+W_{\mathrm{memb,bend}}\big(  \mathcal{E}_{m,s} ,\,  \mathcal{K}_{e,s} \big)    +
W_{\mathrm{bend,curv}}\big(  \mathcal{K}_{e,s}    \big)
$$
depends on the midsurface deformation gradient $\nabla m$ and microrotations $\overline{Q}_{e,s}$ together with
their space
derivatives only through the frame-indifferent tensors $\mathcal{E}_{m,s} , \mathcal{K}_{e,s} , {\rm B}_{y_0}$ and ${\rm C}_{y_0}$, which are invariant to the reparametrization.

\subsection{Consistency with the Cosserat plate model}

In the case of Cosserat plates we have $\Theta(x_1,x_2,x_3)\,=\,(x_1,x_2,x_3)$ and
\begin{align}
\nabla_x \Theta(x_3)&\,=\,\id_3, \qquad y_0(x_1,x_2)\,=\,(x_1,x_2)\,=\,:{\rm id}(x_1,x_2), \qquad Q_0\,=\,\id_3,\qquad n_0\,=\,e_3, \qquad d_i^0\,=\,e_i,\\
\qquad {\rm B}_{\rm id}&\,=\,0_3, \qquad {\rm C}_{\rm id}\,=\,\begin{footnotesize}\begin{footnotesize}\begin{pmatrix}
0&1&0 \\
-1&0&0 \\
0&0&0
\end{pmatrix}\end{footnotesize}\end{footnotesize}\in \mathfrak{so}(3), \qquad 
{\rm L}_{\rm id} \,=\,0_2,\qquad {\rm K} \,=\,0\, ,\qquad 
{\rm H}\, \,=\,0.\notag
\end{align}
Therefore, for the Cosserat plate model the minimization problem reads: find the
deformation of the midsurface $m:\omega
\,{\to}\,
\mathbb{R}^3$ and the microrotation of the shell
$\overline{Q}_{e,s}:\omega
\,{\to}\,
\textrm{SO}(3)$ solving on $\omega
{\,
\subset\mathbb{R}^2}
$:
\begin{equation}
I\,=\, \int_{\omega}    \, \Big[  \,
W_{\mathrm{memb}}\big(  \mathcal{E}_{m,s}  \big)+W_{\mathrm{memb,bend}}\big(  \mathcal{E}_{m,s} ,\,  \mathcal{K}_{e,s} \big)    +
W_{\mathrm{bend,curv}}\big(  \mathcal{K}_{e,s}    \big)
\Big]        \,\mathrm d a\  
{\to}
 \min.\;\text{w.r.t. $(m,\overline{Q}_{e,s})$}
\end{equation}
where the  membrane part $\,W_{\mathrm{memb}}\big(  \mathcal{E}_{m,s} \big)$, the membrane-bending part $W_{\mathrm{memb,bend}}\big(  \mathcal{E}_{m,s} ,\,  \mathcal{K}_{e,s} \big) $ and the bending--curvature part $\,W_{\mathrm{bend,curv}}\big(  \mathcal{K}_{e,s}    \big)\,$ of the shell energy density are given by
\begin{align} 
W_{\mathrm{memb}}\big(  \mathcal{E}_{m,s} \big)\,=\,& \, h\,
W_{\mathrm{shell}}\big(    \mathcal{E}_{m,s} \big),\qquad  W_{\mathrm{memb,bend}}\big(  \mathcal{E}_{m,s} ,\,  \mathcal{K}_{e,s} \big)\,=\,   \,   \dfrac{h^3}{12}\,
W_{\mathrm{shell}}  \big(      {\rm C}_{\rm id} \mathcal{K}_{e,s} \big), \vspace{1.5mm}\notag\\
W_{\mathrm{bend,curv}}\big(  \mathcal{K}_{e,s}    \big) \,=\,&\,  h\,
W_{\mathrm{curv}}\big(  \mathcal{K}_{e,s} \big),
\end{align}
and
\begin{align}
\mathcal{E}_{m,s} & \,=\,\overline{Q}_{e,s}^T \widetilde{F}_{m}-\id_3,\qquad \widetilde{F}_{m}\,=\, (\nabla  m|\overline{Q}_{e,s}\, e_3), \notag \\
\mathcal{K}_{e,s} & \,=\,  (\mathrm{axl}(\overline{Q}_{e,s}^T\,\partial_{x_1} \overline{Q}_{e,s})\,|\, \mathrm{axl}(\overline{Q}_{e,s}^T\,\partial_{x_2} \overline{Q}_{e,s})\,|0),\notag\quad	{\rm C}_{\rm id} \mathcal{K}_{e,s}\,=\,\overline{Q}_{e,s}^T\,(\nabla [\overline{Q}_{e,s}\, e_3]\,|\,0), \\
W_{\mathrm{shell}}(  X) & \,=\,   \mu\,\lVert \, \mathrm{sym} \,X\rVert^2 +  \mu_{\rm c}\,\lVert \mathrm{skew} \,X\rVert^2 +\,\dfrac{\lambda\,\mu}{\lambda+2\mu}\,\big[ \mathrm{tr}   \,X\,\big]^2, \\
W_{\mathrm{curv}}(  X )&\,=\,\mu\,{L}_{\rm c}^2\left( b_1\,\lVert  \dev\,\text{sym} \,X\rVert^2+b_2\,\lVert \text{skew}\,X\rVert^2+\,4\,b_3\,
[\tr(X)]^2\right) .\notag
\end{align}
In view of Lemma
		\ref{LemaEsEe} the following identity is satisfied 
		\begin{align} 
		\overline{Q}_{e,s}^T\,(\nabla [\overline{Q}_{e,s}\, e_3]\,|\,0) \,=\,{\rm C}_{\rm id}\, \mathcal{K}_{e,s}.
		\end{align}
		Hence, ${\rm C}_{\rm id} \mathcal{K}_{e,s}$ coincides with the second order {\it non-symmetric bending tensor} $$\mathfrak{K}_b:\,=\,	\overline{Q}_{e,s}^T\,(\nabla [\overline{Q}_{e,s}\, e_3]\,|\,0)+{\rm B}_{ y_0}\,=\,\overline{Q}_{e,s}^T\,(\nabla [\overline{Q}_{e,s}\, e_3]\,|\,0)$$ considered by Neff in \cite{Neff_plate04_cmt}. 
	 In consequence, the particular case considered in this subsection is the  geometrically nonlinear {C}osserat-shell model including size effects introduced in \cite{Neff_plate04_cmt} and then used in a numerical approach by Sander et al. \cite{Sander-Neff-Birsan-15}, but corresponds to another representation of the curvature  energy from the three-dimensional formulation.

\newpage
For purpose of comparison: 
\begin{remark}
This Cosserat plate model was derived:
\begin{itemize}
\item considering an approximation of the elastic rotation $\overline{Q}_{e}:\Omega_h\rightarrow \text{\rm{SO}}(3)$ 
\begin{equation}
\overline{Q}_{e}(x_1,x_2,x_3)\cong\overline{Q}_{e,s}(x_1,x_2)\, ;
\end{equation}
\item  choosing an	 \textit{8-parameter quadratic ansatz} in the thickness direction for the reconstructed total deformation $\varphi_s:\Omega_h\subset \mathbb{R}^3\rightarrow \mathbb{R}^3$ of the shell-like structure
\begin{align}
\varphi_s(x_1,x_2,x_3)\,=\,&m(x_1,x_2)+\bigg(x_3\varrho_m(x_1,x_2)+\dd\frac{x_3^2}{2}\varrho_b(x_1,x_2)\bigg)\overline{Q}_{e,s}(x_1,x_2)\, e_3\,;
\end{align}
\item  taking the exact form of $\varrho_m$ and considering a suitable approximation for  $\varrho_b$ (coming from a generalized plane stress condition)
\begin{align}
\varrho_m\,=\,&1-\frac{\lambda}{\lambda+2\mu}[ \bigl\langle  \overline{Q}_{e,s}^T(\nabla m|0),\id_3\bigr\rangle -2]\,=\,:\varrho_m^e, \notag\\
\varrho_b
\,=\,&-\frac{\lambda}{\lambda+2\,\mu}\, \bigl\langle  \overline{Q}_{e,s} ^T(\nabla (\,\overline{Q}_{e,s} \, e_3)|0),\id_3\bigr\rangle  \notag\\
&+\frac{\lambda^2}{(\lambda+2\,\mu)^2}\,\Big[ \bigl\langle  \overline{Q}_{e,s} ^T(\nabla (\,\overline{Q}_{e,s} \, e_3)|0),\id_3\bigr\rangle \Big]\Big[ \bigl\langle  \overline{Q}_{e,s}^T(\nabla m|0),\id_3\bigr\rangle -2\Big]\notag \\
\cong&\, \varrho_b^e:\,=\,-\frac{\lambda}{\lambda+2\,\mu}\, \bigl\langle  \overline{Q}_{e,s} ^T(\nabla (\,\overline{Q}_{e,s}\, e_3)|0),\id_3\bigr\rangle ;\notag
\end{align}
\item choosing a further approximation of the deformation gradient (by neglecting space derivatives of $\varrho_m^e$ and $\varrho_b^e$, respectively)
\begin{align}
F_s\,=\,&(\nabla  m|\, \varrho_m\,\overline{Q}_{e,s}(x_1,x_2)\, e_3)+x_3\, (\nabla \left[\varrho_m\,\overline{Q}_{e,s}(x_1,x_2)\, e_3\right]|\varrho_b\,
\overline{Q}_{e,s}(x_1,x_2)\, e_3)\nonumber \\
&\hspace{2cm}+\frac{x_3^2}{2}(\nabla \left[\varrho_b\,\overline{Q}_{e,s}(x_1,x_2)\, e_3\right]|0) \\
\cong&\,\widetilde{F}_s:\,=\,(\nabla  m|\, \varrho_m^e\,\overline{Q}_{e,s}(x_1,x_2)\, e_3)
 +x_3 (\nabla \left[\,\overline{Q}_{e,s}(x_1,x_2)\, e_3\right]|\varrho_b^e\,
\overline{Q}_{e,s}(x_1,x_2)\, e_3);\notag
\end{align}
\item  neglecting the terms of order $O(h^5)$ in the final form of the energy.
\end{itemize}
\end{remark}

\section{A comparison  with  the general 6-parameter shell model}\setcounter{equation}{0}\label{comparisonP}

In this section we give an overview of the quantities appearing in the general 6-parameter shell model presented in \cite{Eremeyev06}. Eremeyev and Pietraskiewicz  have considered the classical multiplicative decomposition  ${F}\,=\,F^eF^0\,$ of  the (reconstructed) total deformation gradient into elastic and initial (``plastic'') parts \cite{Neff_Cosserat_plasticity05,HutterSFB02}, i.e.,  $F^e$ represents the (reconstructed) elastic shell deformation gradient, while  $F^0\,=\,P$ is the initial deformation gradient.   In the general 6-parameter shell model, the following form for the elastic strain tensor ${E}^e$ is used (written in matrix notation)
\begin{equation}\label{15,1}
\begin{array}{l}
{E}^e \,=\, \overline{Q}_e^{T}(\,\partial_{x_1} m\,|\,  \partial_{x_2} m\,|\, \overline{Q}_e n_0\,) \,P^{-1} - \id_3,\quad\text{or}  \quad 
{E}^e \,=\, \overline{U} ^{\,e}-\id_3 \,=\,  \overline{Q}_e^{T}\,F^e - \id_3  \,=\, \overline{Q}_e^{T} {F}  \,P^{-1}-  \id_3,
\end{array}
\end{equation}
with
\begin{align}\label{15,2}
 {F}  \,=&\,F^e \,F^0,\qquad {F} \,\,=\, (\,\partial_{x_1} m\,|\,  \partial_{x_2} m\,|\, \overline{Q}_e n_0\,) \,=\, (\,\nabla m\,|\, \overline{Q}_e n_0\,) , \notag \\
 {F}^e\,=&\,\, (\,\partial_{x_1} m\,|\,  \partial_{x_2} m\,|\, \overline{Q}_e n_0\,) \,P^{-1}, \qquad 
{F}^0\,=\,P\,=\,\,(\,\partial_{x_1} y_{0}\,|\, \partial_{x_2} y_{0}\,|\, n_0\,)\,=\, (\,\nabla y_0\,|\,   n_0\,),\\
\overline U ^{\,e}\,=&\,\,\, \overline{Q}_e^{T}(\,\partial_{x_1} m\,|\,  \partial_{x_2} m\,|\,  \overline{Q}^{e} n_0\,) \,P^{-1} \,=\, \overline{Q}_e^{T}(\,\nabla m\,|\, \overline{Q}_e n_0\,) \,P^{-1}.\notag
\end{align}
Since, $\nabla_x \Theta (0)=(\,\nabla y_0\,|\,   n_0\,)=P$ and $n_0=Q_0\, e_3\,=\,[\nabla_x \Theta (0)]\, e_3$, we  remark that 
$
E^e=\mathcal{E}_{m,s}.
$
Hence, in the general 6-parameter shell model  the same elastic shell strain tensor $\mathcal{E}_{m,s}$ as in our shell model is used.

Regarding the bending curvature tensor, 
in the general 6-parameter shell model the tensor $\mathcal{K}$ is the total bending--curvature tensor, while $\mathcal{K}^0$ is the initial  bending-curvature (or structure curvature tensor of $\Omega_h$). The matrix $\mathcal{K}^e\,=\,\mathcal{K}-\mathcal{K}_0 $ is given by
\begin{equation}\label{17}
\mathcal{K}^e \,=\, ( \,\text{axl}(\overline{Q}_e^{T}\partial_{x_1}  \overline{Q}_e) \,|\, \text{axl}(\overline{Q}_e^{T}\partial_{x_2}  \overline{Q}_e) \,|\,0\,) \,P^{-1}\, ,
\end{equation}
or
\begin{align}\label{18}
\notag\mathcal{K}^e \,&=\,Q_0\,L\,P^{-1}\,=\,\mathcal{K}-\mathcal{K}^0\qquad\text{with}   \qquad R=\overline{Q}_{e}Q_0, \\ L&=\,  ( \,\text{axl}(R^T\partial_{x_1} R)-\text{axl}(Q_0^T\partial_{x_1} Q_{0})  \, | \, \text{axl}(R^T\partial_{x_2} R) -\text{axl}(Q_{0}^T\partial_{x_2} Q_{0}) \, |\, \,0 \, )_{3\times 3}\, ,  \\
\mathcal{K}\,&=\, Q_0 ( \,\text{axl}(R^T\partial_{x_1} R)   \, | \, \text{axl}(R^T\partial_{x_2} R)  \, |\, \,0 \, )P^{-1},   
\qquad \mathcal{K}^0\,=\, Q_0 ( \, \text{axl}(Q_{0}^T\partial_{x_1} Q_{0})  \, | \,  \text{axl}(Q_{0}^T\partial_{x_2} Q_{0}) \, |\, \,0 \, )P^{-1}.\notag
\end{align}

Using again that  $P=\nabla_x \Theta (0)$, we have that the total bending-curvature tensor from Pietraskiewicz and his collaborators coincides with the elastic shell-bending-curvature tensor from our model, i.e.,
$
\mathcal{K}^e=\mathcal{K}_{e,s}.
$
Therefore, we conclude that:

\begin{remark}
	\begin{itemize}
		\item[]
		\item[1)] A direct comparison with our model shows  that the strain tensors  $E^e$ and $\mathcal{K}^e$  from the general 6-parameter shell model  corresponds to the tensor $\mathcal{E}_{m,s} $ and $\mathcal{K}_{e,s} $, respectively,  from our model;
		\item[2)] While the general 6-parameter shell model  is not deduced from a three dimensional energy, the strain  tensors  $E^e$ and $\mathcal{K}^e$ are directly introduced in the model as work-conjugate strain measures \cite{Eremeyev1}, without any explanation about how a reconstructed (three-dimensional) ansatz,  which minimizes (approximatively) a three-dimensional variational problem, leads to the form of the constitutive tensors  $E^e$ and $\mathcal{K}^e$. {However, discussions of these two-dimensional strain measures and
			their three-dimensional counterparts one can find in the book by Libai and Simmonds \cite{Libai98} as well as in the book by Chroscielewski, Makowski and Pietraszkiewicz \cite{Pietraszkiewicz-book04}. }
		\item[3)] Contrary to the general 6-parameter shell model, in our description, the roles and the deduction of the strain tensors $\mathcal{E}_{m,s} $ and $\mathcal{K}_{e,s}$ is  explained by the dimensional reduction method:
		\begin{itemize}
			\item the elastic shell bending-curvature tensor $\mathcal{K}_{e,s}$ is appearing in the model from the form of the three-dimensional curvature energy (see \eqref{minprob}), after using Nye's formula \eqref{Nye1}, the chain rule (see \eqref{qiese}) and using the ansatz \eqref{ansatzR} (see \eqref{gammaDumi}).
			\item the elastic shell strain tensor $\mathcal{E}_{m,s} $ is appearing  in the modelling process  after the ansatz for the (reconstructed) deformation gradient  \eqref{red2} is proposed and it is suggested by the expressions of $\varrho_m^e$ and $\varrho_b^e$ (see \eqref{final_rho-Dumi}), in order to satisfy (approximatively) the Neumann plane-stress boundary conditions in the reference configuration.
			\item while $\varrho_m^e$ depends only on $\mathcal{E}_{m,s} $,  $\varrho_b^e$ depends on both tensors $\mathcal{E}_{m,s} $ and $\mathcal{K}_{e,s}$. This means that the symmetric thickness stretch about the midsurface is influenced only by the elastic shell strain tensor $\mathcal{E}_{m,s} $, while  the asymmetric thickness stretch  about the midsurface is influenced by both the elastic shell strain tensor $\mathcal{E}_{m,s} $ and the elastic shell bending-curvature tensor $\mathcal{K}_{e,s}$.
		\end{itemize}
		\item[4)] The complete description and role of the involved tensors in the dimensional deduction process is the effect of three factors:
		\begin{itemize}
			\item in the deduction of our model we start with a three-dimensional variational problem for an (three-dimensional) elastic body. A shell is actually a three-dimensional body.
			\item we use the matrix formulation in the entire modelling process.
			\item we propose a specific isotropic form for the three-dimensional curvature energy in the parent three-dimensional variational problem.
		\end{itemize}
	\end{itemize}
\end{remark}

In the resultant 6-parameter theory of shells, the strain energy density for isotropic shells has been presented in various forms. The simplest expression $W_{\rm P}(\mathcal{E}_{m,s} ,\mathcal{K}_{e,s})$ has been proposed in the papers \cite{Pietraszkiewicz-book04,Pietraszkiewicz10} in the form
\begin{align}\label{56}
2\,W_{\rm P}(\mathcal{E}_{m,s} ,\mathcal{K}_{e,s})=& \,C\big[\,\nu \,(\mathrm{tr}\, \mathcal{E}_{m,s}^{\parallel})^2 +(1-\nu)\, \mathrm{tr}((\mathcal{E}_{m,s}^{\parallel} )^T \mathcal{E}_{m,s}^{\parallel} )\big]  + \alpha_{s\,}C(1-\nu) \, \lVert  \mathcal{E}_{m,s}^T  {n}_0\rVert^2\notag \\
 &+\,D \big[\,\nu\,(\mathrm{tr}\, \mathcal{K}_{e,s}^{\parallel})^2 + (1-\nu)\, \mathrm{tr}((\mathcal{K}_{e,s}^{\parallel} )^T \mathcal{K}_{e,s}^{\parallel} )\big]  + \alpha_{t\,}D(1-\nu) \,     \lVert \mathcal{K}_{e,s}^T  {n}_0\rVert^2,
\end{align}
where the decompositions  of $\mathcal{E}_{m,s}$ and $\mathcal{K}_{e,s}$ into two orthogonal directions (in the tangential plane and in the normal direction)\footnote{Here we have used that $ \bigl\langle  {\rm A}_{y_0}, \id_3- {\rm A}_{y_0}\bigr\rangle = \bigl\langle  {\rm A}_{y_0}, \id_3\bigr\rangle  -  \bigl\langle  {\rm A}_{y_0}{\rm A}_{y_0}^T,\id_3\bigr\rangle = \bigl\langle  {\rm A}_{y_0}, \id_3\bigr\rangle  -  \bigl\langle  {\rm A}_{y_0}^2,\id_3\bigr\rangle = \bigl\langle  {\rm A}_{y_0}, \id_3\bigr\rangle  -  \bigl\langle  {\rm A}_{y_0},\id_3\bigr\rangle =0$. } are considered
\begin{align}\label{54,1}
\mathcal{E}_{m,s}^{\parallel}&={\rm A}_{y_0}\mathcal{E}_{m,s}=(\id_3-n_0\otimes n_0)\mathcal{E}_{m,s},\qquad\qquad\quad\ \mathcal{K}_{e,s}^{\parallel}={\rm A}_{y_0}\,\mathcal{K}_{e,s}=(\id_3-n_0\otimes n_0) \,\mathcal{K}_{e,s},\\
\mathcal{E}_{m,s}^{\perp}&=(\id_3-{\rm A}_{y_0})\,\mathcal{E}_{m,s}=n_0\otimes n_0\,\mathcal{E}_{m,s},\qquad\qquad\quad \mathcal{K}_{e,s}^{\perp}=(\id_3-{\rm A}_{y_0})\,\mathcal{K}_{e,s}=n_0\otimes n_0\,\mathcal{K}_{e,s}.\notag
\end{align}
Here, we have used that, since 
${\rm A}_{y_0}\,=\,\id_3-(0|0|n_0)\,(0|0|n_0)^T=\,\id_3-n_0\otimes n_0$, for all $X\in \mathbb{R}^{3\times 3}$ the following equalities holds
\begin{align}\label{EKperp}
\lVert X^{\perp}\rVert^2&=\lVert (\id_3-{\rm A}_{y_0})\,X\rVert^2= \bigl\langle  \,X,(\id_3-{\rm A}_{y_0})^2\,X\bigr\rangle = \bigl\langle  \,X,(\id_3-{\rm A}_{y_0})\,X\bigr\rangle = \bigl\langle  \,X,(0|0|n_0)\,(0|0|n_0)^T\,X\bigr\rangle \notag\\&= \bigl\langle  \,(0|0|n_0)^T X,(0|0|n_0)^T\,X\bigr\rangle =\lVert X\,(0|0|n_0)^T\rVert^2
=\lVert X^T\,(0|0|n_0)\rVert^2=\lVert X^T\,n_0\rVert^2.
\end{align}
The constitutive coefficient $C=\frac{E\,h}{1-\nu^2}\,$ is the stretching (in-plane) stiffness of the shell, $D=\frac{E\,h^3}{12(1-\nu^2)}\,$ is the bending stiffness, and $\alpha_s\,$, $\alpha_t$ are two shear correction factors. Also,    $ E=\frac{\mu\,(3\,\lambda+2\,\mu)}{\lambda+\mu}$ and $ \nu=\frac{\lambda}{2\,(\lambda+\mu)}
$ denote the Young modulus and  Poisson ratio of the isotropic and homogeneous material. In the numerical treatment of non-linear shell problems, the values of the shear correction factors have been set to  $\alpha_s=5/6$, $\alpha_t=7/10$ in \cite{Pietraszkiewicz10}. The value $\alpha_s=5/6$ is a classical suggestion, which has been previously deduced analytically by Reissner in the case of plates \cite{Reissner45,Naghdi72}.
Also, the value $\,\alpha_t=7/10\,$ was proposed earlier in \cite[see p.78]{Pietraszkiewicz-book79} and has been suggested in the work \cite{Pietraszkiewicz79}.
However, the discussion concerning the possible values of  shear correction factors for shells is long and controversial in the literature \cite{Naghdi72,Naghdi-Rubin95}.

We write the strain energy density \eqref{56}   in the equivalent form
\begin{align}\label{57}
2\,W_{\rm P}(\mathcal{E}_{m,s} ,\mathcal{K}_{e,s})
=& \,C\,(1-\nu)\,\big[ \lVert \sym({\rm A}_{y_0} \mathcal{E}_{m,s})\rVert^2+ \lVert \skw({\rm A}_{y_0} \mathcal{E}_{m,s})\rVert^2\big]  \notag\\
&+\,C\,\nu \,[{\rm tr}({\rm A}_{y_0} \mathcal{E}_{m,s})]^2 + \alpha_{s\,}C(1-\nu) \, \lVert  \mathcal{E}_{m,s}^T  {n}_0\rVert^2\notag \\
&+\,D\,(1-\nu)\,\big[ \lVert \sym({\rm A}_{y_0} \mathcal{K}_{e,s})\rVert^2+ \lVert \skw({\rm A}_{y_0} \mathcal{K}_{e,s})\rVert^2\big]  \\
&+\,D\,\nu \,[{\rm tr}({\rm A}_{y_0} \mathcal{K}_{e,s})]^2  + \alpha_{t\,}D(1-\nu) \,     \lVert \mathcal{K}_{e,s}^T  {n}_0\rVert^2\,.\notag
\end{align}
The coefficients in \eqref{57} are expressed in terms of the Lam\'e constants of the material $\lambda$ and $\mu$ now  by the relations
\begin{equation*}
C \,\nu\,=\frac{4\,\mu\,(\lambda+\mu)}{\lambda+2\,\mu}\,h\,,\qquad\! C(1\!-\!\nu)=2\,\mu\, h,\qquad\!
D \,\nu\,=\frac{4\,\mu\,(\lambda+\mu)}{\lambda+2\,\mu}\,\frac{h^3}{12}\,,\qquad\!
D(1\!-\!\nu)=\mu\,\dfrac{ h^3}{6}\,.
\end{equation*}

In \cite{Eremeyev06}, Eremeyev and Pietraszkiewicz have proposed a more general form of the strain energy density, namely
\begin{align}\label{58}
2\, W_{\rm EP}(\mathcal{E}_{m,s} ,\mathcal{K}_{e,s})=& \, \alpha_1\,\big(\mathrm{tr} \, \mathcal{E}_{m,s}^{\parallel}\big)^2 +\alpha_2 \, \mathrm{tr} \big(\mathcal{E}_{m,s}^{\parallel}\big)^2    + \alpha_3 \,\mathrm{tr}\big((\mathcal{E}_{m,s}^{\parallel})^T \mathcal{E}_{m,s}^{\parallel} \big)  + \alpha_4  \,   \lVert  \mathcal{E}_{m,s}^T  {n}_0\rVert^2 \notag\\
& + \beta_1\,\big(\mathrm{tr}\,  \mathcal{K}_{e,s}^{\parallel}\big)^2 +\beta_2\,  \mathrm{tr} \big(\mathcal{K}_{e,s}^{\parallel}\big)^2    + \beta_3\, \mathrm{tr}\big((\mathcal{K}_{e,s}^{\parallel})^T \mathcal{K}_{e,s}^{\parallel} \big)  + \beta_4 \,    \lVert  \mathcal{K}_{e,s}^T  {n}_0\rVert^2.
\end{align}
Already, note the absence of coupling terms involving $\mathcal{K}_{e,s}^{\parallel}$ and $\mathcal{E}_{m,s}^{\parallel}$. The eight coefficients $\alpha_k\,$, $\beta_k$ ($k=1,2,3,4$) can depend in general on the structure curvature tensor  $\mathcal{K}^0$ of the reference configuration. 
We can decompose the strain energy density  \eqref{58}  in the  in-plane part  $W_{\rm \text{plane}-EP}(\mathcal{E}_{m,s})$ and the curvature part $W_{\rm \text{curv}-EP}(\mathcal{K}_{e,s})$ and write their expressions  in  the form
\begin{align}\label{59}
W_{\rm EP}(\mathcal{E}_{m,s} ,\mathcal{K}_{e,s})=& \,W_{\rm \text{plane}-EP}(\mathcal{E}_{m,s})+ W_{\rm \text{curv}-EP}(\mathcal{K}_{e,s})\,,
\\
2\,W_{\rm \text{plane}-EP}(\mathcal{E}_{m,s})=&\, (\alpha_2\!+\!\alpha_3) \,\lVert  \text{sym}\, \mathcal{E}_{m,s}^{\parallel}\rVert^2\!+ (\alpha_3\!-\!\alpha_2)\,\lVert \text{skew}\, \mathcal{E}_{m,s}^{\parallel}\rVert^2\!+ \alpha_1\,\big(\mathrm{tr} ( \mathcal{E}_{m,s}^{\parallel})\big)^2 +  \alpha_4\,\lVert \mathcal{E}_{m,s}^Tn^0  \rVert^2, \vspace{4pt}\notag\\
2\, W_{\rm \text{curv}-EP}(\mathcal{K}_{e,s})=&\, (\beta_2\!+\!\beta_3) \,\lVert  \text{sym}\, \mathcal{K}_{e,s}^{\parallel}\rVert^2\!+ (\beta_3\!-\!\beta_2)\,\lVert \text{skew}\, \mathcal{K}_{e,s}^{\parallel}\rVert^2\!+ \beta_1\,\big(\mathrm{tr} ( \mathcal{K}_{e,s}^{\parallel})\big)^2
+ \beta_4\,\lVert \mathcal{K}_{e,s}^Tn^0  \rVert^2.\notag
\end{align}

Since in all the energies presented until now in this section, there exists no coupling terms in $\mathcal{E}_{m,s}$ and $\mathcal{K}_{e,s}$, in the rest of this section, we compare them with a particular form of the energy proposed in our new model, i.e.,
\begin{align}\label{epart}
W_{\rm our}\big(  \mathcal{E}_{m,s},\mathcal{K}_{e,s}  \big)=\,&  \Big(h+{\rm K}\,\dfrac{h^3}{12}\Big)\,
W_{\mathrm{shell}}\big(    \mathcal{E}_{m,s} \big) + \Big(h-{\rm K}\,\dfrac{h^3}{12}\Big)\,
W_{\mathrm{curv}}\big(  \mathcal{K}_{e,s} \big),
\end{align}
where
\begin{align}
W_{\mathrm{shell}}(  X) & = \mu\,\lVert  \mathrm{sym} \,X\rVert^2 +  \mu_{\rm c}\,\lVert \mathrm{skew} \,X\rVert^2 +\,\dfrac{\lambda\,\mu}{\lambda+2\mu}\,\big[ \mathrm{tr}   \,X\,\big]^2, \notag\\
W_{\mathrm{curv}}(  X )&=\mu\, {L}_{\rm c}^2 \left( b_1\,\lVert  \dev \,\text{sym} \,X\rVert^2+b_2\,\lVert \text{skew}\,X\rVert^2+\,4\,b_3\,
[\tr(X)]^2\right) .\notag
\end{align}
To this aim, we consider the decompositions \eqref{54,1} and an arbitrary matrix $X\,=\,(*|*|0)\,[	\nabla_x \Theta(0)]^{-1}$. Since 
${\rm A}_{y_0}^2={\rm A}_{y_0}\in\mathrm{Sym}(3)$ and $X{\rm A}_{y_0}=X$ we have
\begin{align*}
 \bigl\langle  (\id_3-{\rm A}_{y_0}) \,X ,  {\rm A}_{y_0} \,X\bigr\rangle = \bigl\langle  ({\rm A}_{y_0}-{\rm A}_{y_0}^2) \,X ,  \,X\bigr\rangle =0,
\end{align*}
but also
\begin{align}
 (\id_3-{\rm A}_{y_0}) \,X^T=\big(X(\id_3-{\rm A}_{y_0})\big)^T=\big(X-X{\rm A}_{y_0}\big)^T=0,
\end{align}
and consequently
\begin{align*}
 \bigl\langle  X^T (\id_3-{\rm A}_{y_0}) , {\rm A}_{y_0} \,X\bigr\rangle  = 0 \qquad \text{as well as} \qquad
 \bigl\langle  X^T (\id_3-{\rm A}_{y_0}) ,(\id_3-{\rm A}_{y_0}) \,X\bigr\rangle  = 0.
\end{align*}

 Hence, we deduce that for all $X\,=\,(*|*|0)\cdot[	\nabla_x \Theta(0)]^{-1}$ we have the following split in the expression of the considered quadratic forms
 \begin{align}
 W_{\mathrm{shell}}(  X) & =\,   \mu\,\lVert  \mathrm{sym}   \,X^\parallel\rVert^2 +  \mu_{\rm c}\,\lVert \mathrm{skew}  \,X^\parallel\rVert^2+\,   \frac{\mu+\mu_{\rm c}}{2}\,\lVert X^\perp\rVert^2 +\,\dfrac{\lambda\,\mu}{\lambda+2\mu}\,\big[ \mathrm{tr}   (X)\big]^2, \notag\\
 W_{\mathrm{curv}}(  X )&=\,\mu\, {L}_{\rm c}^2 \left( b_1\,\lVert\text{sym} \,X^\parallel\rVert^2+b_2\,\lVert \text{skew}\,X^\parallel\rVert^2+\frac{b_1+b_2}{2}\,\lVert X^\perp\rVert^2+\,\frac{12\,b_3-b_1}{3}\,
 [\tr(X)]^2\right)
 {,}\notag
 \end{align}
where we have set $X^\parallel\coloneqq{\rm A}_{y_0} \,X$ and $X^\perp\coloneqq(\id_3-{\rm A}_{y_0}) \,X$.
Moreover, using that for all $X\,=\,(*|*|0)\,[	\nabla_x \Theta(0)]^{-1}$  it holds true that
\begin{align}\label{symA1-A}
\tr(X^\perp)= \tr\big( (\id_3-{\rm A}_{y_0}) X \big) = \tr(X)-\tr({\rm A}_{y_0} X ) = \tr(X)-\tr(X\,{\rm A}_{y_0} )
=0,
\end{align}

 we obtain for our model
\begin{align}\label{expW}
W_{\mathrm{shell}}\big(    \mathcal{E}_{m,s} \big)=&\, \mu\,\lVert  \mathrm{sym}\,   \,\mathcal{E}_{m,s}^{\parallel}\rVert^2 +  \mu_{\rm c}\,\lVert\mathrm{skew}\,   \,\mathcal{E}_{m,s}^{\parallel}\rVert^2 +\,\dfrac{\lambda\,\mu}{\lambda+2\mu}\,\big[ \mathrm{tr}     (\mathcal{E}_{m,s}^{\parallel})\big]^2+\frac{\mu +  \mu_{\rm c}}{2}\,\lVert \mathcal{E}_{m,s}^{\perp}\rVert^2\\
=&\, \mu\,\lVert  \mathrm{sym}\,   \,\mathcal{E}_{m,s}^{\parallel}\rVert^2 +  \mu_{\rm c}\,\lVert \mathrm{skew}\,   \,\mathcal{E}_{m,s}^{\parallel}\rVert^2 +\,\dfrac{\lambda\,\mu}{\lambda+2\mu}\,\big[ \mathrm{tr}    (\mathcal{E}_{m,s}^{\parallel})\big]^2+\frac{\mu +  \mu_{\rm c}}{2}\,\lVert \mathcal{E}_{m,s}^T\,n_0\rVert^2,\notag
\end{align}
and
\begin{align}
W_{\mathrm{curv}}( \mathcal{K}_{e,s} )=&\,\mu\, {L}_{\rm c}^2 \left( b_1\,\lVert  \text{sym} \,\mathcal{K}_{e,s}^{\parallel}\rVert^2+b_2\,\lVert \text{skew}\,\mathcal{K}_{e,s}^{\parallel}\rVert^2\
+\frac{12\,b_3-b_1}{3}\,
[\tr(\mathcal{K}_{e,s}^\parallel)]^2+ \frac{ b_1+b_2}{2}\,\lVert \mathcal{K}_{e,s}^\perp\rVert^2\right).
\end{align}

For the final comparison between the models, we rewrite  our particular energy  in the form
\begin{align}\label{epartfinal}
W_{\rm our}\big(  \mathcal{E}_{m,s},\mathcal{K}_{e,s}  \big)=\,&  \Big(h+{\rm K}\,\dfrac{h^3}{12}\Big)\!\Big[\mu\,\lVert  \mathrm{sym}\,   \,\mathcal{E}_{m,s}^{\parallel}\rVert^2 + \! \mu_{\rm c}\,\lVert \mathrm{skew}\,   \,\mathcal{E}_{m,s}^{\parallel}\rVert^2 \!+\!\dfrac{\lambda\,\mu}{\lambda+2\mu}\,\big[ \mathrm{tr}   \, (\mathcal{E}_{m,s}^{\parallel})\big]^2+\frac{\mu +  \mu_{\rm c}}{2}\,\lVert \mathcal{E}_{m,s}^T\,n_0\rVert^2\Big] \notag\\&+ \Big(h-{\rm K}\,\dfrac{h^3}{12}\Big)\,\mu\, {L}_{\rm c}^2 \Big[ b_1\,\lVert  \text{sym}\, \mathcal{K}_{e,s}^{\parallel}\rVert^2+b_2\,\lVert \text{skew}\,\mathcal{K}_{e,s}^{\parallel}\rVert^2\
+\frac{12\,b_3-b_1}{3}\,
[\tr(\mathcal{K}_{e,s})]^2 \\&\qquad\qquad \qquad\qquad\qquad+ \frac{ b_1+b_2}{2}\,\lVert \mathcal{K}_{e,s}^\perp\rVert^2\Big].\notag
\end{align}
\newpage
This allows us to conclude
\begin{remark}
\begin{itemize}\item[]
	\item[i)]  By comparing our $W_{\rm our}\big(  \mathcal{E}_{m,s},\mathcal{K}_{e,s}  \big)$ with $W_{\rm EP}\big(  \mathcal{E}_{m,s},\mathcal{K}_{e,s}  \big)$ we deduce the following identification of the constitutive coefficients $\alpha_1\,,...,\alpha_4,\beta_1\,,...,\beta_4$
	\begin{align}
	\alpha_1&=\Big(h+{\rm K}\,\dfrac{h^3}{12}\Big)\,\dfrac{2\mu\lambda}{2\mu+\lambda}\,,\qquad \qquad\alpha_2=\Big(h+{\rm K}\,\dfrac{h^3}{12}\Big)\,(\mu-\mu_{\rm c}),\notag\\ \alpha_3&=\Big(h+{\rm K}\,\dfrac{h^3}{12}\Big)\,(\mu+\mu_{\rm c}),\qquad \qquad\alpha_4=\Big(h+{\rm K}\,\dfrac{h^3}{12}\Big) (\mu+\mu_{\rm c}),\\
	\beta_1&=2\,\Big(h-{\rm K}\,\dfrac{h^3}{12}\Big)\,\mu\, {L}_{\rm c}^2  \frac{12\,b_3-b_1}{3},\quad 
	\beta_2=\Big(h-{\rm K}\,\dfrac{h^3}{12}\Big)\,\mu\, {L}_{\rm c}^2  (b_1-b_2),\notag\\ \beta_3&=\Big(h-{\rm K}\,\dfrac{h^3}{12}\Big)\,\mu\, {L}_{\rm c}^2  (b_1+b_2), \qquad \  \beta_4=\Big(h-{\rm K}\,\dfrac{h^3}{12}\Big)\,\mu\, {L}_{\rm c}^2  (b_1+b_2).\notag
	\end{align} 
	\item[ii)] We observe that 
	\begin{equation}\label{59,4}
	\mu_{\rm c}^{\mathrm{drill}}\,:=\,\alpha_3-\alpha_2\,=\,2\Big(h+{\rm K}\,\dfrac{h^3}{12}\Big)\,\mu_{\rm c}\,,
	\end{equation}
	which means that the in-plane rotational couple modulus $\,\mu_{\rm c}^{\mathrm{drill}}\,$ of the Cosserat shell model is determined by the Cosserat couple modulus $\,\mu_{\rm c}\,$ of the 3D Cosserat material. {An analogous conclusion is given in \cite{AltenbachEremeyev09} where linear deformations
		are considered.}

	\item[iii)] In our shell model, the constitutive coefficients are those from the three-dimensional formulation, while the influence of the curved initial shell configuration appears explicitly  in the expression of the coefficients  of the energies for the reduced two-dimensional variational problem.

	\item[iv)] The major difference between our model and the previously considered   general 6-parameter shell model is that we include terms up to order $O(h^5)$ and that, even in the case of a simplified theory of order $O(h^3)$, additional mixed terms like the membrane--bending part $\,W_{\mathrm{memb,bend}}\big(  \mathcal{E}_{m,s} ,\,  \mathcal{K}_{e,s} \big) \,$ and $W_{\mathrm{curv}}\big(  \mathcal{K}_{e,s}   {\rm B}_{y_0} \,  \big) $ are included, which are otherwise difficult to guess.
	
	\item[v)] In this section we have considered only a particular form $W_{\rm our}\big(  \mathcal{E}_{m,s},\mathcal{K}_{e,s}  \big)$  of the density energy considered in our Cosserat-shell model \eqref{e90}. However, beside the fact that mixed membrane--bending terms are included, the constitutive coefficients in our shell model depend on both the  Gau\ss \ curvature ${\rm K}$ and the mean curvature ${\rm H}$, see item i) and compare to \eqref{e90}. Moreover, due to the bilinearity of the density energy, if the final form of the energy density is expressed as a quadratic form  in terms of $\big(  \mathcal{E}_{m,s} ,\,  \mathcal{K}_{e,s} \big) $, as in the $W_{\rm EP}\big(  \mathcal{E}_{m,s},\mathcal{K}_{e,s}  \big)$, then we remark that the dependence on the mean curvature is not only the effect of the presence of the mixed terms, due to the Cayley-Hamilton equation $
	{\rm B}_{y_0}^2=2\,{\rm H}\, {\rm B}_{y_0}-{\rm K}\, {\rm A}_{y_0}
	$. See for instance the energy term $W_{\mathrm{curv}}\big(  \mathcal{K}_{e,s}   {\rm B}_{y_0}^2 \,  \big) $ or even  $
	W_{\mathrm{mp}} \big((  \mathcal{E}_{m,s} \, {\rm B}_{y_0} +  {\rm C}_{y_0} \mathcal{K}_{e,s} )   {\rm B}_{y_0} \,\big)$ from \eqref{e90}.
\end{itemize}
\end{remark}

\section{Final comments}
In this article, using a step by step transparent method, we have  extended the modelling from flat Cosserat shells (plate) to the most general case of initially curved isotropic Cosserat shells. 
For flat shells, in a numerical approach Sander et al. \cite{Sander-Neff-Birsan-15} have shown that the new shell model offers a very good concordance with available experiments in the framework of nonlinear shell modelling.  Our ansatz allows for a consistent shell model up to order $O(h^5)$ in the shell thickness. Interestingly, all $O(h^5)$ terms in the shell energy depend on the initial curvature of the shell and vanish for a flat shell. The  $O(h^5)$ terms do not come up with a definite sign, such that the additional terms can be stabilizing as well as destabilizing, depending on the local shell geometry. However, all occurring material coefficients of the shell model are uniquely determined from the isotropic three-dimensional Cosserat model and the given initial geometry of the shell. Hence,  in contrast to other Cosserat shell models, we give an explicit form of the curvature energy, and therefore, we fill a certain gap in the general 6-parameter shell theories, which all leave the precise structure of the constitutive equations wide open. 

\bigskip

\begin{footnotesize}
\noindent{\bf Acknowledgements:}   This research has been funded by the Deutsche Forschungsgemeinschaft (DFG, German Research Foundation) -- Project no. 415894848: NE 902/8-1 (P. Neff and P. Lewintan) and
BI 1965/2-1 (M. B\^irsan)).  The  work of I.D. Ghiba  was supported by a grant of the Romanian Ministry of Research
and Innovation, CNCS--UEFISCDI, project number
PN-III-P1-1.1-TE-2019-0397, within PNCDI III.

\bibliographystyle{plain} 

\addcontentsline{toc}{section}{References}


\appendix\setcounter{equation}{0}

\appendix\section*{Appendix}\setcounter{section}{1}
\subsection{Notation}\label{Sectnotation}

We denote by $\mathbb{R}^{n\times n}$, $n\in\mathbb{N}$, the set of real $n\times n$ second order tensors, written with
capital letters. We adopt the usual abbreviations of Lie-group theory, i.e.,
${\rm GL}(n)=\{X\in\mathbb{R}^{n\times n}\;|\det({X})\neq 0\}$ the general linear group,
${\rm SL}(n)=\{X\in {\rm GL}(n)\;|\det({X})=1\},\;
\mathrm{O}(n)=\{X\in {\rm GL}(n)\;|\;X^TX=\id_n\},\;{\rm SO}(n)=\{X\in {\rm GL}(n)| X^TX=\id_n,\det({X})=1\}$ with
corresponding Lie-algebras $\mathfrak{so}(n)=\{X\in\mathbb{R}^{n\times n}\;|X^T=-X\}$ of skew symmetric tensors
and $\mathfrak{sl}(n)=\{X\in\mathbb{R}^{n\times n}\;| \tr({X})=0\}$ of traceless tensors. Here, 
for $a,b\in\mathbb{R}^n$ we let $ \bigl\langle {a},{b}\bigr\rangle _{\mathbb{R}^n}$  denote the scalar product on $\mathbb{R}^n$ with
associated
{(squared)}
vector norm $\lVert a\rVert_{\mathbb{R}^n}^2= \bigl\langle {a},{a}\bigr\rangle _{\mathbb{R}^n}$.
The standard Euclidean scalar product on $\mathbb{R}^{n\times n}$ is given by
$ \bigl\langle  {X},{Y}\bigr\rangle _{\mathbb{R}^{n\times n}}={\rm tr}(X Y^T)$, and thus the
{(squared)}
Frobenius tensor norm is
$\lVert {X}\rVert^2= \bigl\langle  {X},{X}\bigr\rangle _{\mathbb{R}^{n\times n}}$. In the following we omit the index
$\mathbb{R}^n,\mathbb{R}^{n\times n}$. The identity tensor on $\mathbb{R}^{n \times n}$ will be denoted by $\id_n$, so that
${\rm tr}({X})= \bigl\langle {X},{\id}_n\bigr\rangle $. We let ${\rm Sym}(n)$ and ${\rm Sym}^+(n)$ denote the symmetric and positive definite symmetric tensors, respectively.  For all $X\in\mathbb{R}^{3\times3}$ we set ${\rm sym}\, X=\frac{1}{2}(X^T+X)\in{\rm Sym}(3)$, $\skw\, X\,=\frac{1}{2}(X-X^T)\in \mathfrak{so}(3)$ and the deviatoric part $\dev  \, X=X-\frac{1}{n}\;\tr(X)\,\id_n\in \mathfrak{sl}(n)$  and we have
the \emph{orthogonal Cartan-decomposition  of the Lie-algebra} $\mathfrak{gl}(3)$
\begin{align}
\mathfrak{gl}(3)&=\{\mathfrak{sl}(3)\cap {\rm Sym}(3)\}\oplus\mathfrak{so}(3) \oplus\mathbb{R}\!\cdot\! \id_3,\qquad\qquad
X=\dev  \,\sym \,X+ \skw\,X+\frac{1}{3}\tr(X)\, \id_3\,.
\end{align}
{We make use of the operator $\mathrm{axl}: \mathfrak{so}(3)\to\mathbb{R}^3$ associating with a matrix $A\in \mathfrak{so}(3)$ the vector $\mathrm{axl}{\,A}\coloneqq(-A_{23},A_{13},-A_{12})^T$.}

For $X\in {\rm GL}(n)$,  ${\rm Adj}({X})$  denotes the tensor of
transposed cofactors, while  the $(i, j)$ entry of the cofactor is the $(i, j)$-minor times a sign factor.  
For vectors $\xi,\eta\in\mathbb{R}^n$, we have the tensor product
$(\xi\otimes\eta)_{ij}=\xi_i\,\eta_j$. A matrix having the  three  column vectors $A_1,A_2, A_3$ will be written as 
$
(A_1\,|\, A_2\,|\,A_3).
$ 
For a given matrix $M\in \mathbb{R}^{2\times 2}$ we define the lifted quantity $
M^\flat =\begin{footnotesize}\begin{pmatrix}
M_{11}& M_{12}&0 \\
M_{21}&M_{22}&0 \\
0&0&0
\end{pmatrix}\end{footnotesize}
\in \mathbb{R}^{3\times 3}.
$

Let $\Omega$ be an open domain of $\mathbb{R}^{3}$. The usual Lebesgue spaces of square integrable functions, vector or tensor fields on $\Omega$ with values in $\mathbb{R}$, $\mathbb{R}^3$ or $\mathbb{R}^{3\times 3}$, respectively will be denoted by ${\rm L}^2(\Omega)$. Moreover, we introduce the standard Sobolev spaces 
$
{\rm H}^1(\Omega)\,=\,\{u\in {\rm L}^2(\Omega)\, |\, \nabla\, u\in {\rm L}^2(\Omega)\}, $ 
${\rm H}({\rm curl};\Omega)\,=\,\{v\in {\rm L}^2(\Omega)\, |\, {\rm curl}\, v\in {\rm L}^2(\Omega)\}
$
of functions $u$ or vector fields $v$, respectively.  For vector fields $u=\left(    u_1, u_2,u_3\right)$ with  $u_i\in {\rm H}^{1}(\Omega)$, $i=1,2,3$,
we define
$$
\nabla \,u:=\left(
\nabla\,  u_1\,|\,
\nabla\, u_2\,|\,
\nabla\, u_3
\right)^T,
$$
while for tensor fields $P$ with rows in ${\rm H}({\rm curl}\,; \Omega)$, i.e.,
$
P=\begin{footnotesize}\begin{pmatrix}
P^T.e_1\,|\,
P^T.e_2\,|\,
P^T\, e_3
\end{pmatrix}\end{footnotesize}^T$ with $(P^T.e_i)^T\in {\rm H}({\rm curl}\,; \Omega)$, $i=1,2,3$,
we define
$$
{\rm Curl}\,P:=\begin{footnotesize}\begin{pmatrix}
{\rm curl}\, (P^T.e_1)^T\,|\,
{\rm curl}\, (P^T.e_2)^T\,|\,
{\rm curl}\, (P^T\, e_3)^T
\end{pmatrix}\end{footnotesize}^T
.
$$
The corresponding Sobolev-spaces will be denoted by
$
{\rm H}^1(\Omega)$ and   ${\rm H}^1(\Curl;\Omega)$, respectively. We will use the notations: $\nabla_\xi$, $\nabla_x$, $\Curl_\xi$, $\Curl_x$ etc. to indicate the variables for which these quantities are calculated.

\subsection{Prerequisites from classical differential geometry}\label{sectGeo}
\index{differential geometry}
{Let $\omega\subset\mathbb{R}^2$ be an open domain.} A given
{regular}
mapping $y_0:\omega
\,{\to}\,
\mathbb{R}^3$, describing a \textit{surface imbedded in the
	three-dimensional space} is called {\it regular} whenever ${\rm rank}(\nabla  y_0)\,=\,2$. The column vector
\begin{align}
n_0:\,=\,\frac{\partial_{x_1} y_0\times \partial_{x_2} y_0}{\norm{\partial_{x_1} y_0\times \partial_{x_2} y_0}}\,
\end{align}
is the {\it Gau{\ss}  unit normal field} on the surface.
We need to compute
\begin{align}
{\rm Adj}[(\nabla y_0|0)]\,=\,&
\begin{footnotesize}
\begin{pmatrix}
0& 0& 0\vspace{1mm}\\
0& 0& 0\vspace{1mm}\\
\left|\begin{array}{ll}
\partial_{x_1}  y_{02}&\partial_{x_2}  y_{02}\vspace{1mm}\\
\partial_{x_1} y_{03}&\partial_{x_2} y_{03}
\end{array}\right|
&-\left|\begin{array}{ll}
\partial_{x_1}  y_{01}&\partial_{x_2}  y_{01}\vspace{1mm}\\
\partial_{x_1} y_{03}&\partial_{x_2} y_{03}
\end{array}\right|
&\left|\begin{array}{ll}
\partial_{x_1}  y_{01}&\partial_{x_2}  y_{01}\vspace{1mm}\\
\partial_{x_1}  y_{02}&\partial_{x_2}  y_{02}
\end{array}\right|
\end{pmatrix}\end{footnotesize}\, .
\end{align}
Hence, it follows
\begin{align}
n_0:\,=\,\frac{{\rm Cof}(\nabla y_0|0)\, e_3}{\norm{{\rm Cof}(\nabla y_0|0)\, e_3}},\qquad {\rm Cof}(X)=[{\rm Adj}(X)]^T\quad  \forall \, X\in \mathbb{R}^{3\times 3}.
\end{align}
The map $n_0: \omega
\,{\to}\,
 S^2$ is called the {\it
	Gau{\ss} map} (where $S^2$ is the unit sphere in $\mathbb{R}^3$) and the moving 3-frame $(\partial_{x_1} y_0|\partial_{x_2} y_0|n_0)$ is called the {\it 	Gau{\ss} frame} \index{	Gau{\ss} frame}of the surface $y_0(\omega)$, which in general is not orthonormal. The matrix representation of the {\it first fundamental form (metric)} {on  $y_0(\omega)$} is given through
\begin{align}
\index{fundamental form ! first}\index{fundamental form ! first extended}
\label{first_fundamental_form}
{\rm I}_{y_0}&:\,=\,[{\nabla  y_0}]^T\,{\nabla  y_0}\,=\,\begin{footnotesize}\begin{pmatrix}
\norm{{\partial_{x_1} y_0}}^2 & \Mprod{{\partial_{x_1} y_0}}{{\partial_{x_2} y_0}} \\
\Mprod{{\partial_{x_1} y_0}}{{\partial_{x_2} y_0}} &\norm{{\partial_{x_2} y_0}}^2
\end{pmatrix}\end{footnotesize}\in\mathbb{R}^{2\times 2}\, .
\end{align}
Because ${\rm rank} (\nabla y_0)\,=\,2$, the tensor $[\nabla y_0]^T\nabla y_0$ is positive definite. 

The metric \index{metric} alone is not sufficient to describe the shape of a surface in the ambient
three-dimensional Euclidean space,
the curvature is also needed. However, in the case $({\nabla  y_0}|n_0)\in{\rm{SO}}(3)$, the metric is indeed enough. With the metric, the length and angles (and changes of length and angles) of a surface can be completely described.

The matrix representation of the {\it  second fundamental form} {on  $y_0(\omega)$} providing a measure for curvature of the
surface is given by
\begin{align}
\index{fundamental form ! second}\index{fundamental form ! second extended}
\label{second_fundamental_form}
&{\rm II}_{y_0}:\,=\,-[{\nabla  y_0}]^T\,{\nabla  n_0}\,=\,-({\partial_{x_1} y_0}|{\partial_{x_2} y_0})^T ({\partial_{x_1} n_0}|{\partial_{x_2} n_0})\,=\,-\begin{footnotesize}\begin{pmatrix}
\Mprod{{\partial_{x_1} y_0}}{\partial_{x_1} n_0} &  \Mprod{{\partial_{x_1} y_0}}{{\partial_{x_2} n_0}}  \\
\Mprod{{\partial_{x_2} y_0}}{{\partial_{x_1} n_0}} &  \Mprod{{\partial_{x_2} y_0}}{{\partial_{x_2} n_0}}  \\
\end{pmatrix}\end{footnotesize}\in \mathbb{R}^{2\times 2}\, .
\end{align}
Since $n_0$ is orthogonal to the tangent space $T_x y_0$ of the surface $y_0$, the relation
$0\,=\,\partial_{x_1} \Mprod{{\partial_{x_2} y_0}}{n_0}\,=\,\partial_{x_2} \Mprod{{\partial_{x_1} y_0}}{n_0 }$
shows easily that ${\rm II}_{y_0}\in {\rm Sym}(2)$.

The {\it third fundamental form} of the surface {$y_0(\omega)$} in matrix representation is defined as
\begin{align}
\label{third_fundamental_form}
{\rm III}_{y_0}:\,=\,{\nabla  n_0}^T\,{\nabla  n_0}\,=\, \begin{footnotesize}\begin{pmatrix}
\norm{{\partial_{x_1} n_0}}^2 & \Mprod{{\partial_{x_1} n_0}}{{\partial_{x_2} n_0}} \\
\Mprod{{\partial_{x_2} n_0}}{{\partial_{x_1} n_0}} & \norm{{\partial_{x_2} n_0}}^2
\end{pmatrix}\end{footnotesize}\in \mathbb{R}^{2\times 2}\, .
\end{align}

Since  $[\nabla y_0]^T\nabla y_0$ is positive definite, the first fundamental form induces a scalar product
$
g(\xi_1,\xi_2):\,=\, \bigl\langle  {\rm I}_{y_0}\xi_1,\xi_2\bigr\rangle _{\mathbb{R}^2}\, ,
$
while the second fundamental form induces a symmetric bilinear form
$
\widetilde{g}(\xi_1,\xi_2):\,=\, \bigl\langle  {\rm II}_{y_0}\xi_1,\xi_2\bigr\rangle _{\mathbb{R}^2}\, .
$

The first fundamental form and the second fundamental form are connected by {\it the Weingarten map} (or shape operator) {which we again identify with the associated matrix} ${\rm L}_{y_0}\in\mathbb{R}^{2\times 2}$, i.e., $\forall\,\,\xi_1,\xi_2\in\mathbb{R}^2$  we have
$
 \bigl\langle  {\rm I}_{y_0} \xi_1,{\rm L}_{y_0}\, \xi_2\bigr\rangle _{\mathbb{R}^2}\,=\, \bigl\langle  {\rm II}_{y_0} \xi_1, \xi_2\bigr\rangle _{\mathbb{R}^2}\, .
$
This is an implicit definition of the Weingarten map ${\rm L}_{y_0}$.  Using the definitions  of the first and the second fundamental form, we have
\begin{align}
& \bigl\langle  [{\nabla  y_0}]^T\,{\nabla  y_0}\  \xi_1,{\rm L}_{y_0} \xi_2\bigr\rangle _{\mathbb{R}^2}\,=\, \bigl\langle  -[{\nabla  y_0}]^T\,{\nabla  n_0} \ \xi_1, \xi_2\bigr\rangle _{\mathbb{R}^2}\\
\quad \Leftrightarrow\quad 
& \bigl\langle  ({\rm L}^T_{y_0}[{\nabla  y_0}]^T\,{\nabla  y_0}\ +[{\nabla  y_0}]^T\,{\nabla  n_0}) \ \xi_1, \xi_2\bigr\rangle _{\mathbb{R}^2}\,=\,0,  \ \ \  \forall\ \xi_1,\xi_2\in\mathbb{R}^2\, .\notag
\end{align}
Thus, we have
\begin{align}
{\rm L}^T_{y_0}[{\nabla  y_0}]^T\,{\nabla  y_0}\ +[{\nabla  y_0}]^T\,{\nabla  n_0}\,=\,0\, 
\quad \Leftrightarrow\quad 
&
{\rm L}^T_{y_0}[{\nabla  y_0}]^T\,{\nabla  y_0}\ \,=\,-[{\nabla  y_0}]^T\,{\nabla  n_0}.
\end{align}
Hence, we obtain the following alternative expression for the {Weingarten map} 
{via the so called {\it Weingarten equations}}
:
\begin{align}
{\rm L}_{y_0} \,=\,-([{\nabla  y_0}]^T\,{\nabla  y_0})^{-1} ({\nabla  n_0}^T\,{\nabla  y_0})\qquad \text{or} \qquad   {\rm L}_{y_0}\,=\, {\rm I}_{y_0}^{-1} {\rm II}_{y_0}\, .
\end{align}

Moreover, using the symmetry of the second fundamental form we see that the Weingarten map satisfies:
\begin{align}\label{relw12}
\nabla y_0\, {\rm L}_{y_0}\,=\,-\nabla n_0.
\end{align}
We have also
\begin{align}\label{relw13}
{\rm III}_{y_0} \,=\, {\rm I}_{y_0}{\rm L}_{y_0}^2 \,=\, {\rm II}_{y_0} {\rm I}_{y_0}^{-1} {\rm II}_{y_0}\,.
\end{align}

The {\it Gau{\ss} curvature} ${\rm K}$ of the surface {$y_0(\omega)$}  is determined by
\begin{align}
\index{Gauss curvature}
\label{gauss_curvatureA}
{\rm K} :\,=\,{{\rm det}({{\rm II}_{y_0}\,{\rm I}_{y_0}^{-1}}})\,=\,{\rm det}{({\rm L}_{y_0})}\, ,
\end{align}
and the {\it mean curvature} $\,{\rm H}\,$ through
\begin{align}
\index{mean curvature}
\label{mean_curvatureA}
2\,{\rm H}\, :\,=\,{\rm tr}({{\rm L}_{y_0}}) \, .
\end{align}

The following classification is standard. The surface $y_0$ is locally
\begin{align}
\left\{\begin{array}{l}
\text{elliptic} \\
\text{parabolic} \\
\text{hyperbolic}\\
\text{planar}\
\end{array}\right. \text{ \ \ \ at\ \ \  } (x_1,x_2) \in \omega \text{\ \ \ if \ \ \ }{\rm K} \text{ \ \ is\ \ } \left\{\begin{array}{l}
>0 \\
\,=\,0 \quad {\text{and } {\rm H}\neq0} \\
<0\\
\,=\,0 \quad {\text{and } {\rm H}=0} \
\end{array}\right.\, .
\end{align}

It is well known that  $\,{\rm H} \,=\,0$ is satisfied for all sufficiently regular stationary points of the minimal surface area functional.
The Caley-Hamilton theorem implies
\[ 
{\rm L}_{y_0}^2 -2\,{\rm H}\, {\rm L}_{y_0} +{\rm K}\,\id_{2}\,=\,0.
\]
Thus, the relation $${\rm III}_{y_0}-2\,{\rm H}\,{\rm II}_{y_0}+{\rm K}\,{\rm I}_{y_0}\,=\,0$$ (\cite[Prop. 3.5.6]{Klingenberg78}) is a consequence of the
Caley-Hamilton theorem and shows that ${\rm III}_{y_0}$ is symmetric and is not independent of ${\rm I}_{y_0},{\rm II}_{y_0}$.
The principal curvatures $\kappa_1,\kappa_2$ \index{principal curvatures} are the solutions of
the characteristic equation of
${\rm L}_{y_0}$, i.e., $$\kappa^2-{\rm tr}({{\rm L}_{y_0}})\,\kappa+{\rm det}{({\rm L}_{y_0})}\,=\,\kappa^2-2\,{\rm H}\,\kappa+{\rm K}\,=\,0.$$

We define the lifted quantity $\widehat{\rm I}_{y_0} \in \mathbb{R}^{3\times 3}$ by
\begin{align}
\label{first_fundamental_form_lift1}
\widehat{\rm I}_{y_0}\:\,=\,({\nabla  y_0}|n_0)^T({\nabla  y_0}|n_0)\,=\,\begin{footnotesize}\begin{pmatrix}
\lVert {\partial_{x_1} y_0}\rVert^2 &  \bigl\langle {\partial_{x_1} y_0},{\partial_{x_2} y_0}\bigr\rangle  &0 \\
 \bigl\langle  {\partial_{x_1} y_0},{\partial_{x_2} y_0}\bigr\rangle  &\lVert \partial_{x_2} y_0\rVert^2&0 \\
0 &0&1 \\
\end{pmatrix}\end{footnotesize}\,=\,\begin{footnotesize}\begin{pmatrix}
&  {\rm I}_{y_0}  & 0 \\
&   & 0  \\
0 & 0 & 1
\end{pmatrix}\end{footnotesize}= {\rm I}_{y_0}^\flat+\widehat{0}_3\, ,
\end{align}
where $ {\rm I}_{y_0}^\flat=\begin{footnotesize}\begin{pmatrix}
&  {\rm I}_{y_0}  & 0 \\
&   & 0  \\
0 & 0 & 0
\end{pmatrix}\end{footnotesize}$ and $\widehat{0}_3=\begin{footnotesize}\begin{pmatrix}
0& 0  & 0 \\
0& 0  & 0  \\
0 & 0 & 1
\end{pmatrix}\end{footnotesize}$.
Hence, $\widehat{\rm I}_{y_0}$ has the properties
\begin{align}
{\rm det}{({\rm I}_{y_0})}\,=\,{\rm det}{(\widehat{\rm I}_{y_0} )}\,=\,{\rm det}{({\nabla  y_0}|n_0)}^2 \notag,\ \  \ \    {\rm tr}({\rm I}_{y_0})+1\,=\,{\rm tr}(\widehat{\rm I}_{y_0} )\, .
\end{align}

Corresponding to the second fundamental form we define the lifted quantity $\widehat{\rm II}_{y_0}\in\mathbb{R}^{3\times3}$ by
\begin{align}
\widehat{\rm II}_{y_0}&\,=\,-(\nabla y_0|n_0)^T(\nabla n_0|n_0)\,=\, -\begin{footnotesize}\begin{pmatrix}
 \bigl\langle {{\partial_{x_1} y_0}},{{\partial_{x_1} n_0}}\bigr\rangle  &   \bigl\langle {{\partial_{x_1} y_0}},{{\partial_{x_2} n_0}} \bigr\rangle & 0 \\
 \bigl\langle {{\partial_{x_2} y_0}},{{\partial_{x_1} n_0}} \bigr\rangle &   \bigl\langle {{\partial_{x_2} y_0}},{{\partial_{x_2} n_0}} \bigr\rangle & 0  \\
0 & 0 & 1
\end{pmatrix}\end{footnotesize}\,=\,\begin{footnotesize}\begin{pmatrix}
&  {\rm II}_{y_0}  & 0 \\
&   & 0  \\
0 & 0 & -1
\end{pmatrix}\end{footnotesize}\,= \,{\rm II}_{y_0}^\flat-\widehat{0}_3\, ,
\end{align}
where $ {\rm II}_{y_0}^\flat=\begin{footnotesize}\begin{pmatrix}
&  {\rm II}_{y_0}  & 0 \\
&   & 0  \\
0 & 0 & 0
\end{pmatrix}\end{footnotesize}$.
It has the properties
\begin{equation}
\quad{\rm det}{({\rm II}_{y_0})}\,=\,-{\rm det}{(\widehat{\rm II}_{y_0})}, \ \ \
\quad{\rm tr}({\rm II}_{y_0})\,=\,{\rm tr}(\widehat{\rm II}_{y_0})+1\, .
\end{equation}

Let us consider as well the lifted Weingarten map $\widehat{\rm L}_{y_0}:\mathbb{R}^3\rightarrow \mathbb{R}^3$ defined by
\begin{align}
\widehat{\rm L}_{y_0}^T\,=\,\widehat{\rm II}_{y_0}\widehat{\rm I}_{y_0}^{-1}\, .
\end{align}

Thus, we have
\begin{align}
\widehat{\rm L}_{y_0}^T\,=\,& -({\nabla  y_0}|n_0)^T({\nabla  n_0}|n_0)[({\nabla  y_0}|n_0)^T({\nabla  y_0}|n_0)]^{-1}\nonumber \\\,=\,&\begin{footnotesize}\begin{pmatrix}
&  {\rm II}_{y_0}  & 0 \\
&   & 0  \\
0 & 0 & -1
\end{pmatrix}\end{footnotesize}\begin{footnotesize}\begin{pmatrix}
&  {\rm I}_{y_0}^{-1}  & 0 \\
&   & 0  \\
0 & 0 & 1
\end{pmatrix}\end{footnotesize}\,=\,\begin{footnotesize}\begin{pmatrix}
& {\rm II}_{y_0} {\rm I}_{y_0}^{-1}   & 0 \\
&  & 0  \\
0 & 0 &- 1
\end{pmatrix}\end{footnotesize}\,=\,\begin{footnotesize}\begin{pmatrix}
&  {\rm L}_{y_0}^T  & 0 \\
&  & 0  \\
0 & 0 &- 1
\end{pmatrix}\end{footnotesize}\, .
\end{align}
The lifted Weingarten map $\widehat{\rm L}$ has the following properties
\begin{align}
{\rm det}{(\widehat{\rm L}_{y_0})}\,=\,-{\rm det}{({\rm L}_{y_0})},\ \ \ \   {\rm tr}(\widehat{{\rm L}}_{y_0})&\,=\,{\rm tr}({\rm L}_{y_0})-1\, .
\end{align}

\subsection{Properties of the diffeomorphism $\Theta$  }\label{proofLemmaMircea}

\begin{lemma}\label{lemmaMircea}
	For all   $A\in\mathbb{R}^{2\times 2}$, there exists $a>0$, such that for all $x_3\in (-a,a)$, the formula 
	\begin{align}
	(\id_2-x_3\, A)^{-1}\,=\,\frac{1}{1- x_3\,{\rm tr} A+\,x_3^2\det{A}}[(1- x_3\, {\rm tr} A)\id_2+x_3 A]
	\end{align}
	holds true.
\end{lemma}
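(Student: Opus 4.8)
The plan is to verify the asserted identity by a direct computation, the only structural input being the Cayley--Hamilton theorem for $2\times 2$ matrices. First I would introduce the scalar polynomial $p(x_3):=1-x_3\,\tr A+x_3^2\det A$ and note that $p(0)=1\neq 0$, so by continuity there is $a>0$ with $p(x_3)\neq 0$ for all $x_3\in(-a,a)$. The same $a$ then ensures that $\id_2-x_3\,A\in\mathrm{GL}(2)$ on all of $(-a,a)$, because of the elementary $2\times 2$ identity $\det(\id_2-x_3\,A)=1-x_3\,\tr A+x_3^2\det A=p(x_3)$ (which follows by expanding $x_3^2\det(\tfrac1{x_3}\id_2-A)$ and is trivially valid at $x_3=0$ as well). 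Hence the left-hand side of the claimed formula is well defined precisely where the right-hand side is.

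Next I would set $M(x_3):=\dfrac{1}{p(x_3)}\big[(1-x_3\,\tr A)\,\id_2+x_3\,A\big]$ and multiply out $(\id_2-x_3\,A)\big[(1-x_3\,\tr A)\,\id_2+x_3\,A\big]$. The terms linear in $A$ cancel and one is left with $(1-x_3\,\tr A)\,\id_2+x_3^2\big[(\tr A)\,A-A^2\big]$. The Cayley--Hamilton theorem gives $A^2-(\tr A)\,A+(\det A)\,\id_2=0$, i.e.\ $(\tr A)\,A-A^2=(\det A)\,\id_2$, so the product collapses to $\big(1-x_3\,\tr A+x_3^2\det A\big)\,\id_2=p(x_3)\,\id_2$. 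Dividing by the nonzero scalar $p(x_3)$ yields $(\id_2-x_3\,A)\,M(x_3)=\id_2$, and therefore $M(x_3)=(\id_2-x_3\,A)^{-1}$, which is exactly the formula in the statement.

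I do not expect any genuine obstacle: the only points demanding (minimal) care are the choice of the interval $(-a,a)$, where it is worth recording that the ``bad set'' of $p$ coincides with the singular set of $\id_2-x_3\,A$, so a single estimate suffices, and the bookkeeping of the cancellation of the first-order terms in $A$. This lemma is then exactly the algebraic tool used to obtain, after identifying $A$ with $\pm{\rm L}_{y_0}$ (the matrix of the Weingarten map), the closed-form expression for $[\nabla_x\Theta(x_3)]^{-1}$ employed throughout the dimensional reduction.
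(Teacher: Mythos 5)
Your proof is correct and follows essentially the same route as the paper's: a direct verification by multiplication, with the Cayley--Hamilton identity $(\tr A)\,A - A^2 = (\det A)\,\id_2$ collapsing the quadratic term. The only cosmetic differences are that you multiply the factors in the opposite order (immaterial, since both are polynomials in $A$) and that you explicitly observe $\det(\id_2 - x_3 A) = p(x_3)$, a small but pleasant extra that shows the singular set of $\id_2 - x_3 A$ coincides exactly with the zero set of $p$; the paper simply invokes continuity of $p$ without making that identification.
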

\begin{proof} Consider an arbitrary matrix $A\in \mathbb{R}^{2\times 2}$. From the continuity of the mapping $x_3\mapsto 1- x_3\,\tr( A)+\,x_3^2\det({A})$, it exists an $a>0$ such that $1- x_3\,\tr (A)+\,x_3^2\det({A})>0$ for all $x_3\in (-a,a)$. Taking  $x_3\in (-a,a)$, 
	we compute
	\begin{align}
	(\id_2-x_3\, A)^{-1}\, (\id_2-x_3\, A)&\,=\,	\frac{1}{1- x_3\,\tr (A)+\,x_3^2\det({A})}[(1- x_3\, \tr (A))\id_2+x_3 A]\, [\id_2-x_3\, A] \\
	&\,=\,	\frac{1}{1- x_3\,\tr (A)+\,x_3^2\det({A})}[(1- x_3\, \tr (A))\id_2-\, x_3\,A+x_3^2\, \tr (A)\, \,A+ x_3 A-x_3^2\, A^2]\notag \\
	&\,=\,	\frac{1}{1-x_3\,\tr A+\,x_3^2\det{A}}[(1- x_3\, \tr (A))\id_2+x_3^2\, (\underbrace{\tr (A)\, \,A- A^2}_{\det (A)\, \id_2})]\,=\,\id_2.\notag\qedhere
	\end{align}
\end{proof}
With the help of the above lemma, we prove the following proposition.
\begin{proposition}\label{propnablatheta}
	The diffeomorphism $\Theta$ has the following properties for all $x_3$:
	\begin{itemize}
		\item[i)] ${\rm det}(\nabla_x \Theta(x_3))\,=\,{\rm det}(\nabla y_0|n_0)\Big[1-2\,x_3\,{\rm H}\,+x_3^2 \, {\rm K}\Big]$;
		\item[ii)]	$\nabla_x\Theta(x_3)$  belongs to ${\rm GKC }:=\{X\in \mathrm{GL}^+(3)\ |\ X^T \, X\, e_3\,=\,\varrho^2 e_3,\ \varrho\in \mathbb{R}^+\}$;
		\item[iii)] if $ {h}\,|{\kappa_1}|<\frac{1}{2}, {h}\,|{\kappa_2}|<\frac{1}{2}$, then for all $x_3\in \left(-\frac{h}{2},\frac{h}{2}\right)$: \\
		$
		[	\nabla_x \Theta(x_3)]^{-1}\,=\,\dd \frac{1}{1-2\,{\rm H}\, x_3+{\rm K}\, x_3^2}\left[\id_3+x_3({\rm L}_{y_0}^\flat-2\,{\rm H}\, \id_3)+x_3^2\, {\rm K}\begin{footnotesize}\begin{pmatrix}
		0&0&0 \\
		0&0&0 \\
		0&0&1
		\end{pmatrix}\end{footnotesize}
		\right] [	\nabla_x \Theta(0)]^{-1};
		$
		
	\end{itemize}
\end{proposition}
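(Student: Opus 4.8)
\textbf{Proof plan for Proposition \ref{propnablatheta}.}
The plan is to compute the three quantities directly from the explicit form $\nabla_x \Theta(x_3) = (\nabla y_0 | n_0) + x_3 (\nabla n_0 | 0)$, which I will abbreviate as $\nabla_x \Theta(x_3) = \nabla_x\Theta(0) + x_3\,(\nabla n_0|0)$ with $\nabla_x\Theta(0)=(\nabla y_0|n_0)$. The basic algebraic observation is that $(\nabla n_0|0) = -(\nabla y_0|0)\,{\rm L}_{y_0}^\flat$ holds because of the Weingarten equation $\nabla n_0 = -\nabla y_0\,{\rm L}_{y_0}$ from Appendix \ref{sectGeo}, so that $\nabla_x\Theta(x_3) = \nabla_x\Theta(0)\big[\id_3 - x_3\,\widehat{M}\big]$ where $\widehat{M}$ is (essentially) the lifted Weingarten map restricted to the tangent block; this factorisation is the workhorse for all three items.

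For \textbf{i)}, I would write $\det(\nabla_x\Theta(x_3)) = \det(\nabla_x\Theta(0))\cdot\det(\id_3 - x_3\widehat{M})$ and evaluate the second determinant. Since the only nontrivial block is the $2\times 2$ tangential one governed by ${\rm L}_{y_0}$, the characteristic-polynomial identity $\det(\id_2 - x_3{\rm L}_{y_0}) = 1 - x_3\,\tr({\rm L}_{y_0}) + x_3^2\det({\rm L}_{y_0}) = 1 - 2\,{\rm H}\,x_3 + {\rm K}\,x_3^2$ gives the claim immediately; the third row/column of $\id_3 - x_3\widehat{M}$ contributes a factor $1$. For \textbf{ii)}, I would use that $(\nabla n_0|0)^T\,e_3 = 0$ and $(\nabla n_0|0)\,e_3 = 0$ (the last column of $\nabla n_0$-padded matrix is zero, and $n_0$ is a unit vector orthogonal to $\partial_{x_\alpha}n_0$... actually one uses $\langle \partial_{x_\alpha} y_0, n_0\rangle = 0$ and $\|n_0\|=1$), hence $[\nabla_x\Theta(x_3)]^T\nabla_x\Theta(x_3)\,e_3 = [\nabla_x\Theta(0)]^T\nabla_x\Theta(0)\,e_3 = \widehat{\rm I}_{y_0}\,e_3 = e_3$, showing $\nabla_x\Theta(x_3)\in{\rm GKC}$ with $\varrho=1$; one must also check $\det(\nabla_x\Theta(x_3))>0$, which follows from i) together with the hypothesis $h|\kappa_i|<\tfrac12$ (so that $1-2\,{\rm H}\,x_3+{\rm K}\,x_3^2 = (1-\kappa_1 x_3)(1-\kappa_2 x_3) > 0$ on the thickness interval). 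For \textbf{iii)}, I would invert the factorisation: $[\nabla_x\Theta(x_3)]^{-1} = (\id_3 - x_3\widehat{M})^{-1}[\nabla_x\Theta(0)]^{-1}$, and compute $(\id_3-x_3\widehat{M})^{-1}$ blockwise. On the tangential block Lemma \ref{lemmaMircea} applies with $A = {\rm L}_{y_0}$ (taking $a = \tfrac{1}{\max(|\kappa_1|,|\kappa_2|)} > h$, which is where the curvature bound enters), yielding $(\id_2 - x_3{\rm L}_{y_0})^{-1} = \tfrac{1}{1-2\,{\rm H}\,x_3+{\rm K}\,x_3^2}[(1-2\,{\rm H}\,x_3)\id_2 + x_3{\rm L}_{y_0}]$; on the normal direction the inverse is just $1$, and one reorganises $(1-2\,{\rm H}\,x_3)\id_2 + x_3{\rm L}_{y_0}$ plus the normal entry into the stated form $\id_3 + x_3({\rm L}_{y_0}^\flat - 2\,{\rm H}\,\id_3) + x_3^2\,{\rm K}\,\mathrm{diag}(0,0,1)$. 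The bookkeeping is: $\tfrac{1}{b(x_3)}$ times $[(1-2{\rm H}x_3)\id_2 + x_3{\rm L}_{y_0}]^\flat$ on the tangent block equals $\tfrac{1}{b(x_3)}[\id_2^\flat + x_3({\rm L}_{y_0}^\flat - 2{\rm H}\id_2^\flat)]$, while the $(3,3)$ entry must be $1 = \tfrac{1}{b(x_3)}(1 - 2{\rm H}x_3 + {\rm K}x_3^2) = \tfrac{1}{b(x_3)}(1 - 2{\rm H}x_3 + {\rm K}x_3^2)$, which matches the $e_3\otimes e_3$ part of $\id_3 + x_3({\rm L}_{y_0}^\flat - 2{\rm H}\id_3) + x_3^2{\rm K}\,\mathrm{diag}(0,0,1)$ since there ${\rm L}_{y_0}^\flat$ has no $(3,3)$ contribution.

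The main obstacle is not conceptual but purely organisational: keeping the $2\times2$-versus-$3\times3$ lifting consistent (the $\cdot^\flat$ operation and the separate treatment of the $e_3$-component), and correctly identifying which $\id$'s are $\id_2$ and which are $\id_3$ so that the coefficients of $x_3^0, x_3^1, x_3^2$ in $[\nabla_x\Theta(x_3)]^{-1}[\nabla_x\Theta(0)]$ reassemble into exactly the claimed expression. A secondary point requiring care is the reduction of $(\nabla n_0|0)$ to $-(\nabla y_0|0)\,{\rm L}_{y_0}^\flat$ and then the verification that $(\nabla y_0|0)[\nabla_x\Theta(0)]^{-1} = {\rm A}_{y_0}$ interacts correctly — but since item iii) is stated in terms of ${\rm L}_{y_0}^\flat$ directly rather than ${\rm B}_{y_0}$, one can sidestep this by working entirely with the factorisation $\nabla_x\Theta(x_3) = \nabla_x\Theta(0)(\id_3 - x_3\widetilde{\rm L})$ where $\widetilde{\rm L} := [\nabla_x\Theta(0)]^{-1}(\nabla y_0|0){\rm L}_{y_0}^\flot$ has the same tangential spectrum as ${\rm L}_{y_0}$; I would actually prefer the cleaner route of verifying iii) by \emph{direct multiplication} — multiply the claimed right-hand side by $\nabla_x\Theta(x_3)$ and check the product is $\id_3$, using $\nabla n_0 = -\nabla y_0\,{\rm L}_{y_0}$, the Cayley–Hamilton identity ${\rm L}_{y_0}^2 = 2\,{\rm H}\,{\rm L}_{y_0} - {\rm K}\,\id_2$, and $(\nabla y_0|n_0)^{-1}(\nabla y_0|0) = \id_3 - n_0\otimes n_0$; this avoids invoking Lemma \ref{lemmaMircea} altogether and turns the proof into a finite verification of polynomial identities in $x_3$ of degree at most $3$, where all coefficients of $x_3^k$ for $k\ge 1$ must cancel.
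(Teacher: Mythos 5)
Your plan follows essentially the same route as the paper's own proof: both hinge on the block factorisation $\nabla_x\Theta(x_3)=\nabla_x\Theta(0)\begin{footnotesize}\begin{pmatrix}\id_2-x_3\,{\rm L}_{y_0}&0\\0&1\end{pmatrix}\end{footnotesize}$ (the paper's relation \eqref{repreznL}, obtained from the Weingarten identity $\nabla n_0=-\nabla y_0\,{\rm L}_{y_0}$), after which i) is the $2\times2$ characteristic polynomial, ii) follows because the block factor fixes $e_3$, and iii) is Lemma \ref{lemmaMircea} applied to the tangential block together with the lower bound $1-2{\rm H}x_3+{\rm K}x_3^2>0$ derived from $h|\kappa_i|<\tfrac12$. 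Your direct computation of $X^TXe_3$ for ii) and the proposed direct-multiplication check for iii) are valid cosmetic variants rather than a genuinely different argument.
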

\begin{proof}
	Since $\lVert n_0\rVert^2\,=\,1$ we have $ \bigl\langle  n_0, \partial_{x_\alpha} n_0\bigr\rangle =0$,  $ \bigl\langle  n_0, \partial_{x_\alpha} y_0\bigr\rangle =0$, $\alpha=1,2$.
	Using the  Weingarten map (or shape operator) ${\rm L}_{y_0}\in \mathbb{R}^{2\times 2}$ defined in Appendix \ref{sectGeo} by relation (\ref{relw12}) we deduce the following form of $\nabla_x \Theta$:
	\begin{align}
	\nabla_x \Theta(x_3)\,=\,(\nabla y_0|\,n_0)+x_3(\nabla n_0|0)\,=\,(\nabla y_0|\,n_0)-x_3(\nabla y_0 \, {\rm L}_{y_0}|0),
	\end{align}
	which implies
	\begin{align}\label{repreznL}
	\nabla_x \Theta(x_3)\,=\,\nabla_x \Theta(0)\begin{footnotesize}\begin{pmatrix}
	& \id_2-x_3\, {\rm L}_{y_0}&0 \\
	& & 0 \\
	0&0&1
	\end{pmatrix}\end{footnotesize}.
	\end{align}
	Then, we have
$
	{\rm det}{(\nabla_x \Theta (x_3))}\,=\,{\rm det}{(\nabla_x \Theta(0))}\,{\rm det}(\id_2-x_3{\rm L}_{y_0}).
$
	Using the two-dimensional expansion of the determinant 
	$
	{\rm det}(\id_2-x_3{\rm L}_{y_0})\,=\,1-x_3\,{\rm tr}({\rm L}_{y_0})+ x_3^2\, {\rm det}({\rm L}_{y_0}),
$
	we deduce
$
	{\rm det}(\nabla_x \Theta(x_3))\,=\,{\rm det}{(\nabla_x \Theta(0))}\,\Big[1-x_3\,{\rm tr}({\rm L}_{y_0})+x_3^2\, {\rm det}({\rm L}_{y_0})\Big]\, .
$
	In terms of the mean curvature $\,{\rm H}\,$ and the Gauss curvature ${\rm K}$, we have therefore the well known formula i).

	Regarding ii), 
	  {we have already seen in \eqref{first_fundamental_form_lift1} that $\nabla_x \Theta(0)=\,(\nabla y_0|\,n_0)\in\mathrm{GKC}$, so that the conclusion $\nabla_x \Theta (x_3)\in{\rm GKC}$ then follows from the decomposition \eqref{repreznL}.}

	In order to prove iii), we prove that the conditions $ {h}\,|{\kappa_1}|<\frac{1}{2},\, {h}\,|{\kappa_2}|<\frac{1}{2}$, ensure that 
	$$
	1-2\,{\rm H}\, x_3+{\rm K}\, x_3^2\neq 0
	\qquad \text{ for all} \qquad x_3\in \left(-\frac{h}{2},\frac{h}{2}\right).$$ 
		It follows that
$
h^2|K|=h^2\, |\kappa_1|\,|\kappa_2|<\frac{1}{4}, \  2\,h\, |H|=h\, |\kappa_1+\kappa_2|<1.
$
Hence,
$
	1-2\,{\rm H}\, x_3+{\rm K}\, x_3^2\geq 1-2\,|{\rm H}|\, |x_3|-|{\rm K}|\,| x_3^2|>0.
$
Thus,  we use the representation \eqref{repreznL}
	and we apply  Lemma \ref{lemmaMircea} and iii) is proven.
	\end{proof}

\subsection{Properties of the tensors ${\rm A}_{y_0}, {\rm B}_{y_0}$  and ${\rm C}_{y_0 }$  }\label{AppendixpropAB}

\begin{proposition}\label{propAB}
	The  tensors ${\rm A}_{y_0}, {\rm B}_{y_0}$  and ${\rm C}_{y_0 }$ have the following properties:
	\begin{itemize}
		\item [i)] $ {\rm A}_{y_0}\,=\,[\nabla_x \Theta(0)]^{-T} {\rm I}_{y_0}^\flat [\nabla_x \Theta(0)]^{-1}\in {\rm Sym}(3),$\quad 
		$\tr({\rm A}_{y_0})\,=\,2,$ \quad	${\det}({\rm A}_{y_0})\,=\,0$,

		${\rm A}_{y_0}\,=\,\id_3-(0|0|\nabla_x \Theta(0)\, e_3)\,[	\nabla_x \Theta(0)]^{-1}\,=\,\id_3-(0|0|n_0)\,(0|0|n_0)^T$;

		\item[ii)] ${\rm B}_{y_0}\,=\,[\nabla_x \Theta(0)]^{-T} {\rm II}_{y_0}^\flat [\nabla_x \Theta(0)]^{-1}\,=\,\nabla_x \Theta(0) {\rm L}_{y_0}^\flat [\nabla_x \Theta(0)]^{-1}\in {\rm Sym}(3)$,  
		
		$\tr({\rm B}_{y_0})\,=\,2\,{\rm H}\,$,\quad  ${\det}[{\rm B}_{y_0}]\,=\,0,$ \quad   $\tr( {\rm Cof}\,{\rm B}_{y_0})\,=\, {\rm K},$\quad 
		${\rm Cof}\,{\rm B}_{y_0}\,=\,\nabla_x \Theta(0)\,\begin{footnotesize}\begin{pmatrix}
		0&0&0 \\
		0&0&0 \\
		0&0& {\rm K}
		\end{pmatrix}\end{footnotesize}\, [\nabla_x \Theta(0)]^{-1}\,,$
		
		$ {\rm B}_{y_0}^2 \,=\, [\nabla_x \Theta(0)]^{-T} {\rm III}_{y_0}^\flat[\nabla_x \Theta(0)]^{-1} \,=\, [\nabla_x \Theta(0)] \big( {\rm L}_{y_0}^\flat\big)^2  [\nabla_x \Theta(0)]^{-1}\;; $ 
		
		\item[iii)] ${\rm B}_{y_0}$ satisfies the equation of Cayley-Hamilton type
		$
		{\rm B}_{y_0}^2-2\,{\rm H}\, {\rm B}_{y_0}+{\rm K}\, {\rm A}_{y_0}\,=\,0_3;
		$
		\item[iv)] ${\rm A}_{y_0}{\rm B}_{y_0}\,=\,{\rm B}_{y_0}{\rm A}_{y_0}\,=\,{\rm B}_{y_0}$, \qquad  ${\rm A}_{y_0}^2\,=\,{\rm A}_{y_0}$;
		\item [v)] $
		(u_1|u_2|0)\,[	\nabla_x \Theta(0)]^{-1}\, {\rm A}_{y_0}\,=\,(u_1|u_2|0) \,[	\nabla_x \Theta(0)]^{-1}$ for all $u_1,u_2\in \mathbb{R}^3$;
		\item[vi)] ${\rm C}_{y_0}\in \mathfrak{so}(3)$, ${\rm C}_{y_0}^2\,=\,-{\rm A}_{y_0}$ and it has the simplified form
		$
		{\rm C}_{y_0}:=Q_0(0)\begin{footnotesize}\begin{pmatrix}
		0&1&0 \\
		-1&0&0 \\
		0&0&0
		\end{pmatrix}\end{footnotesize}Q_0^T(0);
		$
		\item[vii)] ${\rm B}_{y_0}\,=\,-{\rm C}_{y_0}\,Q_0(0)\, \Big(\mathrm{axl}(Q_0^T(0)\,\partial_{x_1} Q_0(0))\,|\, \mathrm{axl}(Q_0^T(0)\,\partial_{x_2} Q_0(0))\,|\,0\,\Big) \,[\nabla_x \Theta(0)]^{-1} $.
	\end{itemize}
\end{proposition}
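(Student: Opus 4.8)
The plan is to derive all seven items by direct matrix computation, relying on three facts already available: the factorization $\nabla_x\Theta(0)=(\nabla y_0|n_0)=Q_0(0)\,U_0(0)$, where $U_0(0)$ has block form with a $2\times2$ symmetric positive-definite block $\bar U_0$ and $(3,3)$-entry $1$ (the {\rm GKC}-property, Proposition \ref{propnablatheta}), together with $\nabla_x\Theta(0)\,e_3=n_0=Q_0(0)\,e_3$ and $[\nabla_x\Theta(0)]^{-T}e_3=n_0$; the Weingarten relation $\nabla y_0\,{\rm L}_{y_0}=-\nabla n_0$ and its consequence ${\rm III}_{y_0}={\rm II}_{y_0}{\rm I}_{y_0}^{-1}{\rm II}_{y_0}$, see \eqref{relw13}; and the fact that $M\mapsto M^\flat$ is a unital ring homomorphism from $\mathbb{R}^{2\times2}$ into the block-diagonal $3\times3$ matrices, with $\widehat{\rm I}_{y_0}=[\nabla_x\Theta(0)]^T\nabla_x\Theta(0)={\rm I}_{y_0}^\flat+e_3\otimes e_3$. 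For (i), I would first write $(\nabla y_0|0)=\nabla_x\Theta(0)\,(\id_2)^\flat=\nabla_x\Theta(0)-(0|0|n_0)$; since $(0|0|n_0)[\nabla_x\Theta(0)]^{-1}=n_0\bigl(e_3^T[\nabla_x\Theta(0)]^{-1}\bigr)=n_0\otimes n_0$, this gives ${\rm A}_{y_0}=\id_3-n_0\otimes n_0$, from which symmetry, $\tr{\rm A}_{y_0}=3-\lVert n_0\rVert^2=2$, $\det{\rm A}_{y_0}=0$ (a rank-two orthogonal projector) and the equivalent form $\id_3-(0|0|n_0)(0|0|n_0)^T$ are all immediate; the identification ${\rm A}_{y_0}=[\nabla_x\Theta(0)]^{-T}{\rm I}_{y_0}^\flat[\nabla_x\Theta(0)]^{-1}$ I would check by verifying $[\nabla_x\Theta(0)]^T(\nabla y_0|0)=(\nabla y_0|n_0)^T(\nabla y_0|0)={\rm I}_{y_0}^\flat$ entrywise, the third row producing only $\langle n_0,\partial_{x_\alpha}y_0\rangle=0$.

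For (ii), the same entrywise check gives $-[\nabla_x\Theta(0)]^T(\nabla n_0|0)={\rm II}_{y_0}^\flat$ (using $\langle n_0,\partial_{x_\alpha}n_0\rangle=0$ on the third row), hence ${\rm B}_{y_0}=[\nabla_x\Theta(0)]^{-T}{\rm II}_{y_0}^\flat[\nabla_x\Theta(0)]^{-1}\in{\rm Sym}(3)$, while $\nabla y_0\,{\rm L}_{y_0}=-\nabla n_0$ yields $(\nabla n_0|0)=-\nabla_x\Theta(0)\,{\rm L}_{y_0}^\flat$, so ${\rm B}_{y_0}=\nabla_x\Theta(0)\,{\rm L}_{y_0}^\flat[\nabla_x\Theta(0)]^{-1}$ is similar to ${\rm L}_{y_0}^\flat$, whose eigenvalues are $\kappa_1,\kappa_2,0$. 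This similarity gives at once $\tr{\rm B}_{y_0}=\tr{\rm L}_{y_0}=2{\rm H}$, $\det{\rm B}_{y_0}=0$, $\tr({\rm Cof}\,{\rm B}_{y_0})=\kappa_1\kappa_2={\rm K}$ and ${\rm B}_{y_0}^2=\nabla_x\Theta(0)\,({\rm L}_{y_0}^\flat)^2[\nabla_x\Theta(0)]^{-1}$; combining the last relation with $[\nabla_x\Theta(0)]^{-1}[\nabla_x\Theta(0)]^{-T}=\widehat{\rm I}_{y_0}^{-1}=({\rm I}_{y_0}^{-1})^\flat+e_3\otimes e_3$ and ${\rm II}_{y_0}^\flat({\rm I}_{y_0}^{-1})^\flat{\rm II}_{y_0}^\flat=({\rm II}_{y_0}{\rm I}_{y_0}^{-1}{\rm II}_{y_0})^\flat={\rm III}_{y_0}^\flat$ (the cross terms with $e_3\otimes e_3$ being killed by the zero last row and column of ${\rm II}_{y_0}^\flat$) produces the ${\rm III}_{y_0}$-form of ${\rm B}_{y_0}^2$; the cofactor identity then follows from multiplicativity of ${\rm Cof}$ and the elementary ${\rm Cof}({\rm L}_{y_0}^\flat)={\rm K}\,(e_3\otimes e_3)$.

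For (iii), I would conjugate out $\nabla_x\Theta(0)$, using in addition ${\rm A}_{y_0}=\nabla_x\Theta(0)\,(\id_2)^\flat[\nabla_x\Theta(0)]^{-1}$; the claim reduces to $({\rm L}_{y_0}^\flat)^2-2{\rm H}\,{\rm L}_{y_0}^\flat+{\rm K}\,(\id_2)^\flat=0_3$, which is the $\flat$-lift of the Cayley--Hamilton identity ${\rm L}_{y_0}^2-2{\rm H}\,{\rm L}_{y_0}+{\rm K}\,\id_2=0_2$ recalled in Appendix \ref{sectGeo}. For (iv), $n_0^T{\rm B}_{y_0}=-\bigl(\langle n_0,\partial_{x_1}n_0\rangle,\langle n_0,\partial_{x_2}n_0\rangle,0\bigr)[\nabla_x\Theta(0)]^{-1}=0$ and, by symmetry of ${\rm B}_{y_0}$, also ${\rm B}_{y_0}n_0=0$, whence ${\rm A}_{y_0}{\rm B}_{y_0}={\rm B}_{y_0}-n_0(n_0^T{\rm B}_{y_0})={\rm B}_{y_0}$, ${\rm B}_{y_0}{\rm A}_{y_0}={\rm B}_{y_0}-({\rm B}_{y_0}n_0)n_0^T={\rm B}_{y_0}$, and ${\rm A}_{y_0}^2={\rm A}_{y_0}$ since $n_0\otimes n_0$ is idempotent. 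For (v), with $X=(u_1|u_2|0)[\nabla_x\Theta(0)]^{-1}$ one has $[\nabla_x\Theta(0)]^{-1}n_0=[\nabla_x\Theta(0)]^{-1}\nabla_x\Theta(0)\,e_3=e_3$, hence $Xn_0=(u_1|u_2|0)\,e_3=0$ and $X{\rm A}_{y_0}=X-(Xn_0)n_0^T=X$.

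The items (vi)--(vii) I expect to carry most of the bookkeeping. Setting $J:=e_1\otimes e_2-e_2\otimes e_1$, skewness of ${\rm C}_{y_0}=\det(\nabla_x\Theta(0))\,[\nabla_x\Theta(0)]^{-T}J[\nabla_x\Theta(0)]^{-1}$ is immediate from $J^T=-J$. For the reduced form I would substitute $\nabla_x\Theta(0)=Q_0(0)U_0(0)$ and invoke the elementary two-dimensional identity $M^T\bar J M=\det(M)\,\bar J$ (with $\bar J$ the canonical generator of $\mathfrak{so}(2)$) applied to $M=\bar U_0^{-1}$, together with $\det(\nabla_x\Theta(0))=\det\bar U_0$; this collapses $\det(\nabla_x\Theta(0))\,U_0(0)^{-T}J\,U_0(0)^{-1}$ to $J$, giving ${\rm C}_{y_0}=Q_0(0)\,J\,Q_0(0)^T$. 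Then, from $J^2=-(\id_2)^\flat$ and $Q_0(0)\,e_3=n_0$ (see \eqref{Q0n}), ${\rm C}_{y_0}^2=Q_0(0)\,J^2\,Q_0(0)^T=-\bigl(\id_3-(Q_0(0)e_3)\otimes(Q_0(0)e_3)\bigr)=-{\rm A}_{y_0}$. For (vii), I would use that $Jv=v\times e_3$ for all $v\in\mathbb{R}^3$ and that $\partial_{x_\alpha}n_0=(\partial_{x_\alpha}Q_0(0))\,e_3=Q_0(0)\,(w_\alpha\times e_3)$ with $w_\alpha:=\mathrm{axl}\bigl(Q_0(0)^T\partial_{x_\alpha}Q_0(0)\bigr)$; hence the $\alpha$-th column of ${\rm C}_{y_0}\,Q_0(0)\,(w_1|w_2|0)$ equals $Q_0(0)\,J\,w_\alpha=Q_0(0)\,(w_\alpha\times e_3)=\partial_{x_\alpha}n_0$ (its last column being $0$), so ${\rm C}_{y_0}\,Q_0(0)\,(w_1|w_2|0)=(\nabla n_0|0)$, and the assertion follows upon right multiplication by $[\nabla_x\Theta(0)]^{-1}$ and the definition of ${\rm B}_{y_0}$. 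The only genuine care required anywhere is consistent tracking of the $\flat$-lift and of the two- versus three-dimensional determinant and cofactor identities; I anticipate no real obstacle beyond this.
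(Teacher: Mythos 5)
Your proposal is correct and follows essentially the same route as the paper: the polar decomposition $\nabla_x\Theta(0)=Q_0(0)U_0(0)$ with the GKC block structure, the $\flat$-lift of the fundamental forms and of the Cayley--Hamilton identity for ${\rm L}_{y_0}$, the similarity ${\rm B}_{y_0}\sim{\rm L}_{y_0}^\flat$, and the explicit identification ${\rm C}_{y_0}=Q_0(0)JQ_0(0)^T$. The only differences are organizational shortcuts (e.g., establishing the projector form ${\rm A}_{y_0}=\id_3-n_0\otimes n_0$ first and deriving the other claims from it, arguing (iv) via $n_0^T{\rm B}_{y_0}=0$, and using $Jv=v\times e_3$ in (vii) instead of writing out the axl-product explicitly), which are harmless.
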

\begin{proof}
	\textit{i)} We deduce
	\begin{align}
	{\rm A}_{y_0}&\,=\,(\nabla y_0|0) \,[\nabla_x \Theta(0)]^{-1}\,=\,(\nabla y_0|n_0)\,\id_2^{\flat }\,[\nabla_x \Theta(0)]^{-1}\,=\,
	[\nabla_x \Theta(0)]^{-T}\,[\nabla_x \Theta(0)]^T\,[\nabla_x \Theta(0)]\,\id_2^{\flat }\,[\nabla_x \Theta(0)]^{-1}
	\vspace{1.5mm}\notag\\
	&\,=\,
	[\nabla_x \Theta(0)]^{-T}\,\widehat{\rm I}_{y_0}\,\id_2^{\flat }\,[\nabla_x \Theta(0)]^{-1}\,=\,
	[\nabla_x \Theta(0)]^{-T}\,{\rm I}_{y_0}^{\flat }\,[\nabla_x \Theta(0)]^{-1}.
	\end{align}
	Therefore, the first identity of i) is proven and it also follows that 
	\begin{align}
	\tr({\rm A}_{y_0})&\,=\, \bigl\langle  [\nabla_x \Theta(0)]^{-T}\,{\rm I}_{y_0}^{\flat }\,[\nabla_x \Theta(0)]^{-1},\id_3\bigr\rangle 
	\,=\, \bigl\langle  {\rm I}_{y_0}^{\flat },[\nabla_x \Theta(0)]^{-1} \,[\nabla_x \Theta(0)]^{-T}\bigr\rangle \,=\, \bigl\langle  {\rm I}_{y_0}^{\flat },({\rm I}_{y_0}^{-1})^{\flat }\bigr\rangle \,=\,2.
	\end{align}
We use that $\nabla_x\Theta(x_3)\in{\rm  GKC}$ and calculate
	\begin{align}
	(\nabla y_0|0)(\nabla y_0|n_0)^{-1}&\,=\,(\nabla y_0|n_0)(\nabla y_0|n_0)^{-1}- (0|0|n_0)(\nabla y_0|n_0)^{-1}\notag \\&\,=\,\id_3- (0|0|n_0)(\nabla y_0|n_0)^{-1}\,=\,\id_3- (0|0|n_0)U_0^{-1}(0)\,Q_0^T(0) \\
	&\,=\,\id_3- (0|0|n_0)\,\begin{footnotesize}\begin{pmatrix}
	* &* &0  \\
	* &* &0 \\
	0 &0 &1
	\end{pmatrix}\end{footnotesize}\,Q_0^T(0)\,=\,\id_3- (0|0|n_0)\,Q_0^T(0)\notag \\
	&\,=\,\id_3- (0|0|n_0)\,(d_1^0(0)|d_2^0(0)|n_0)^T\,=\,\id_3- (0|0|n_0)\,(0|0|n_0)^T\, .\notag
	\end{align}
	The last identity of i) follows directly from \eqref{AB}.
	
	\textit{ii)} In terms of the second fundamental form the tensor ${\rm B}_{y_0}$ has the form
	\begin{align}
	{\rm B}_{y_0}\,=\,-[\nabla_x \Theta(0)]^{-T} (\nabla y_0|n_0)^{T}(\nabla n_0|0)[\nabla_x \Theta(0)]^{-1}\,=\,[\nabla_x \Theta(0)]^{-T} {\rm II}_{y_0}^\flat [\nabla_x \Theta(0)]^{-1}\, ,
	\end{align}
	from which symmetry follows, since $\, {\rm II}_{y_0}^\flat\,$ is symmetric.
	Moreover, we have
	\begin{align}
	[\nabla_x \Theta(0)]^{-1}{\rm B}_{y_0}\,=\,&[\nabla_x \Theta(0)]^{-1}[\nabla_x \Theta(0)]^{-T} {\rm II}_{y_0}^\flat [\nabla_x \Theta(0)]^{-1}  \,=\,\left(
	\begin{array}{ccc}
	& {\rm I}_{y_0}^{-1}& 0 \\
	& &0 \\
	0&0&1
	\end{array}
	\right)\left(
	\begin{array}{ccc}
	& {\rm II}_{y_0}& 0 \\
	& &0 \\
	0&0&0
	\end{array}
	\right)[\nabla_x \Theta(0)]^{-1}
	\,=\,{\rm L}_{y_0}^\flat[\nabla_x \Theta(0)]^{-1}\, .
	\end{align}
	
The identity ${\det}[{\rm B}_{y_0}]=0$ follows directly from \eqref{AB}. 	A direct consequence of the above relation is
	\begin{align}\label{BH}
	{\rm tr}({\rm B}_{y_0})\,=\,&{\rm tr}[(\nabla_x \Theta(0)){\rm L}_{y_0}^\flat[\nabla_x \Theta(0)]^{-1}]\,=\,{\rm tr}({\rm L}_{y_0}^\flat)\,=\,{\rm tr}({\rm L}_{y_0})\,=\,2\,{\rm H}\,.
	\end{align}
To compute $ {\rm Cof}\,{\rm B}_{y_0} $ we use that $ {\rm Cof}\,(XY)= {\rm Cof}\,(X) \,{\rm Cof}\,(Y)  $ for any $ X,Y\in\mathbb{R}^{3\times 3}  $. It follows
	\begin{align}
	{\rm Cof}\,{\rm B}_{y_0} &\,=\, {\rm Cof}\,[\nabla_x \Theta(0)] \;{\rm Cof}\,({\rm L}_{y_0}^\flat) \; {\rm Cof}\,[\nabla_x \Theta(0)]^{-1}
	\notag \\&\,=\,
	{\det}\,[\nabla_x \Theta(0)]\cdot
	[\nabla_x \Theta(0)]^{-T} \;{\rm Cof}\, \begin{footnotesize}\begin{pmatrix}
	&  {\rm L}_{y_0}&0  \\
	& &0 \\
	0 &0 &0
	\end{pmatrix}\end{footnotesize}\,[\nabla_x \Theta(0)]^{T} {\det}\,[\nabla_x \Theta(0)]^{-1}
	 \\
	&\,=\,
	[\nabla_x \Theta(0)]^{-T}  \,\begin{footnotesize}\begin{pmatrix}
	0&0&0 \\
	0&0&0 \\
	0&0& {\rm K}
	\end{pmatrix}\end{footnotesize} \,[\nabla_x \Theta(0)]^{T}
	\,=\,(0|0|n_0) \,\begin{footnotesize}\begin{pmatrix}
	0&0&0 \\
	0&0&0 \\
	0&0& {\rm K}
	\end{pmatrix}\end{footnotesize}\, (0|0|n_0)^T
	\;=\;
	[\nabla_x \Theta(0)]  \,\begin{footnotesize}\begin{pmatrix}
	0&0&0 \\
	0&0&0 \\
	0&0& {\rm K}
	\end{pmatrix}\end{footnotesize} \,[\nabla_x \Theta(0)]^{-1}
	\, .\notag
	\end{align}
	The relation for $ {\rm B}_{y_0}^2 $ can be proved similarly without difficulties.

	\textit{iii)} In the following we prove the Cayley-Hamilton type equation. Using ii), we deduce 
	\begin{align}
	{\rm B}_{y_0}^2-2\,{\rm H}\, {\rm B}_{y_0}+{\rm K}\, {\rm A}_{y_0}&\,=\,\nabla_x \Theta(0) ({\rm L}_{y_0}^\flat)^2 [\nabla_x \Theta(0)]^{-1}-2\,{\rm H}\, \nabla_x \Theta(0) {\rm L}_{y_0}^\flat [\nabla_x \Theta(0)]^{-1}+{\rm K}\,(\nabla y_0|n_0)\,\id_2^\flat \,[\nabla_x \Theta(0)]^{-1}\notag \\
	&\,=\,\nabla_x \Theta(0) ({\rm L}_{y_0}^2 -2\,{\rm H}\,  {\rm L}_{y_0} +{\rm K}\,\id_2)^\flat \,[\nabla_x \Theta(0)]^{-1}\,=\,0_3.
	\end{align}

	\textit{iv)} In order to prove iv), we deduce from i) and ii)
	\begin{align}
	{\rm A}_{y_0}{\rm B}_{y_0}& \,=\,[\nabla_x \Theta(0)]^{-T} {\rm I}_{y_0}^\flat [\nabla_x \Theta(0)]^{-1} \; \nabla_x \Theta(0) {\rm L}_{y_0}^\flat [\nabla_x \Theta(0)]^{-1}
	 \\\nonumber
	& \,=\,[\nabla_x \Theta(0)]^{-T} {\rm I}_{y_0}^\flat  {\rm L}_{y_0}^\flat [\nabla_x \Theta(0)]^{-1}
	\,=\,
	[\nabla_x \Theta(0)]^{-T} {\rm II}_{y_0}^\flat [\nabla_x \Theta(0)]^{-1}\,=\,{\rm B}_{y_0}\, .
	\end{align}
	Moreover, we have
	\begin{align}\label{ba=b}
	{\rm B}_{y_0}{\rm A}_{y_0}  & \,=\, {\rm B}_{y_0} ( \id_3-(0|0|n_0)\,(0|0|n_0)^T )
	\,=\, {\rm B}_{y_0} -    \nabla_x \Theta(0) {\rm L}_{y_0}^\flat [\nabla_x \Theta(0)]^{-1} (0|0|n_0)\,(0|0|n_0)^T \, .
	\end{align}
	We notice that 
	\begin{align}\label{ono2}
	 [\nabla_x \Theta(0)]^{-1} (0|0|n_0) =\, (0|0|e_3)
	 \Rightarrow \qquad & [\nabla_x \Theta(0)]^{-1} (0|0|n_0)\,(0|0|n_0)^T\,=\, (0|0|e_3)\,(0|0|n_0)^T \,=\, (0|0|n_0)^T 
	 \\
	\qquad \Rightarrow \qquad& {\rm L}_{y_0}^\flat [\nabla_x \Theta(0)]^{-1} (0|0|n_0)\,(0|0|n_0)^T \,=\, {\rm L}_{y_0}^\flat (0|0|n_0)^T \,=\, (0|0|0)\,=\,0_3\,.\notag
	\end{align}
	Using \eqref{ono2} in \eqref{ba=b} we obtain $ {\rm B}_{y_0}{\rm A}_{y_0} = {\rm B}_{y_0} $. Similarly, from \eqref{ono2}  we find
	\begin{align}\nonumber
	{\rm A}^2_{y_0}  & \,=\, {\rm A}_{y_0} ( \id_3-(0|0|n_0)\,(0|0|n_0)^T )
	\,=\, {\rm A}_{y_0} -    [\nabla_x \Theta(0)]^{-T} {\rm I}_{y_0}^\flat [\nabla_x \Theta(0)]^{-1} (0|0|n_0)\,(0|0|n_0)^T 
	 \\\nonumber
	&\,=\, {\rm A}_{y_0} -    [\nabla_x \Theta(0)]^{-T} {\rm I}_{y_0}^\flat  (0|0|n_0)^T  \,=\, {\rm A}_{y_0}\, .
	\end{align}
	
	\textit{v)} We consider two vectors $u_1,u_2\in \mathbb{R}^3$ and we compute
	\begin{align}
	(u_1|u_2|0)\,[	\nabla_x \Theta(0)]^{-1}\, {\rm A}_{y_0}&\,=\,(u_1|u_2|0) \,[	\nabla_x \Theta(0)]^{-1} -(u_1|u_2|0) \,[	\nabla_x \Theta(0)]^{-1}(0|0|\nabla_x \Theta(0)\, e_3)\,[	\nabla_x \Theta(0)]^{-1} \\
	&\,=\,(u_1|u_2|0) \,[	\nabla_x \Theta(0)]^{-1} -(u_1|u_2|0) \,(0|0|e_3)\,[	\nabla_x \Theta(0)]^{-1}\,=\,(u_1|u_2|0) \,[	\nabla_x \Theta(0)]^{-1}.\notag
	\end{align}
	
	\textit{vi)} Regarding  item vi),  remark that
	\begin{align}
	{\rm C}_{y_0}&\,=\,{\rm Cof}	(\nabla_x \Theta(0))\,\begin{footnotesize}\begin{pmatrix}
	0&1&0 \\
	-1&0&0 \\
	0&0&0
	\end{pmatrix}\end{footnotesize}\,  [	\nabla_x \Theta(0)]^{-1} \\&\,=\,Q_0(0)\,(\det U_0(0))\,  U_0^{-1}(0)\,\begin{footnotesize}\begin{pmatrix}
	0&1&0 \\
	-1&0&0 \\
	0&0&0
	\end{pmatrix}\end{footnotesize}\, U_0^{-1}(0)\, Q_0^T(0)\,=\,Q_0(0)\begin{footnotesize}\begin{pmatrix}
	0&1&0 \\
	-1&0&0 \\
	0&0&0
	\end{pmatrix}\end{footnotesize}Q_0(0)^T,\notag
	\end{align}
	since $\det Q_0\,=\,1$ and
	\begin{equation}
	(\det U_0(0))\, U_0^{-1}(0)\, \,\begin{footnotesize}\begin{pmatrix}
	0&1&0 \\
	-1&0&0 \\
	0&0&0
	\end{pmatrix}\end{footnotesize}\,  U_0^{-1}(0) \,=\,\begin{footnotesize}\begin{pmatrix}
	0&1&0 \\
	-1&0&0 \\
	0&0&0
	\end{pmatrix}\end{footnotesize}.
	\end{equation}
	The alternator tensor has the 
	representation Proposition \ref{propAB} v),
	which shows that ${\rm C}_{y_0}$ is antisymmetric.
	
	Moreover, we deduce 
	\begin{align}
	{\rm C}_{y_0}^2&\,=\,Q_0(0)\,\begin{footnotesize}\begin{pmatrix}
	0&1&0 \\
	-1&0&0 \\
	0&0&0
	\end{pmatrix}\end{footnotesize}\,\begin{footnotesize}\begin{pmatrix}
	0&1&0 \\
	-1&0&0 \\
	0&0&0
	\end{pmatrix}\end{footnotesize}\,  Q_0^T(0)\,=\,-Q_0(0)\,U_0(0)\begin{footnotesize}\begin{pmatrix}
	1&0&0 \\
	0&1&0 \\
	0&0&0
	\end{pmatrix}\end{footnotesize}\,  U_0^{-1}(0)Q_0^T(0) \\
	&\,=\,-(\nabla y_0|\,n_0)\begin{footnotesize}\begin{pmatrix}
	1&0&0 \\
	0&1&0 \\
	0&0&0
	\end{pmatrix}\end{footnotesize}\,[\nabla_x \Theta(0)]^{-1}\,=\, - (\nabla y_0|0)\,[\nabla_x \Theta(0)]^{-1}\,=\,-{\rm A}_{y_0}.\notag\end{align}
	
	\textit{vii)} Let us compute
	\begin{align}
	Q_0^T(0)\,\partial_{x_\alpha} Q_0(0)&\,=\,\Big(d_1^{0}(0)\,|\, d_2^0(0)\,|\,d_3^0(0)\Big)^T\Big(\partial_{x_\alpha}d_1^{0}(0)\,|\, \partial_{x_\alpha}d_2^0(0)\,|\,\partial_{x_\alpha}d_3^0(0)\Big)\notag \\
	&\,=\,\begin{footnotesize}\begin{pmatrix}
	0& \bigl\langle  d_1^{0}(0), \partial_{x_\alpha}d_{2}^{0}(0)\bigr\rangle  & \bigl\langle  d_1^{0}(0), \partial_{x_\alpha}d_{3}^{0}(0)\bigr\rangle   \\
	 \bigl\langle  d_2^{0}(0), \partial_{x_\alpha}d_{1}^{0}(0)\bigr\rangle &0& \bigl\langle  d_2(0), \partial_{x_\alpha}d_{3}(0)\bigr\rangle  \\
	 \bigl\langle  d_3^{0}(0), \partial_{x_\alpha}d_{1}^{0}(0)\bigr\rangle & \bigl\langle  d_3^{0}(0), \partial_{x_\alpha}d_{2}^{0}(0)\bigr\rangle & 0
	\end{pmatrix}\end{footnotesize},
	 \\
	{\rm axl}(Q_0^T(0)\,\partial_{x_\alpha} Q_0(0))&\,=\,\Big(- \bigl\langle  d_2^{0}(0), \partial_{x_\alpha} d_{3}^{0}(0)\bigr\rangle  \,| \, \bigl\langle  d_1^{0}(0), \partial_{x_\alpha}d_{3}^{0}(0)\bigr\rangle  \,| \,- \bigl\langle  d_1^0(0), \partial_{x_\alpha}d_{2}^{0}(0)\bigr\rangle \Big)^T,\qquad \qquad \alpha\,=\,1,2.\notag
	\end{align}
	Hence, we obtain 
	\begin{align}
	{\rm C}_{y_0}\,Q_0(0)\,& \Big(\mathrm{axl}(Q_0^T(0)\,\partial_{x_1} Q_0(0))\,|\, \mathrm{axl}(Q_0^T(0)\,\partial_{x_2} Q_0(0))\,|\,0\,\Big) \,[\nabla_x \Theta(0)]^{-1}\notag \\
	&\,=\,Q_0(0)\, \begin{footnotesize}\begin{pmatrix}
	0&1&0 \\
	-1&0&0 \\
	0&0&0
	\end{pmatrix}\end{footnotesize}\,
	\begin{footnotesize}\begin{pmatrix}
	- \bigl\langle  d_2^{0}(0), \partial_{x_1} d_{3}^{0}(0)\bigr\rangle  &- \bigl\langle  d_2^{0}(0), \partial_{x_2} d_{3}^{0}(0)\bigr\rangle  &0 \\
	 \bigl\langle  d_1^{0}(0), \partial_{x_1}d_{3}^{0}(0)\bigr\rangle & \bigl\langle  d_1^{0}(0), \partial_{x_2}d_{3}^{0}(0)\bigr\rangle &0 \\
	- \bigl\langle  d_1^0(0), \partial_{x_1}d_{2}^{0}(0)\bigr\rangle &- \bigl\langle  d_1^0(0), \partial_{x_2}d_{2}^{0}(0)\bigr\rangle &0
	\end{pmatrix}\end{footnotesize}\,[\nabla_x \Theta(0)]^{-1}
	\end{align}
	\begin{align}
	&\,=\,Q_0(0)\,
	\begin{footnotesize}\begin{pmatrix}
	 \bigl\langle  d_1^{0}(0), \partial_{x_1}d_{3}^{0}(0)\bigr\rangle & \bigl\langle  d_1^{0}(0), \partial_{x_2}d_{3}^{0}(0)\bigr\rangle &0 \\
	 \bigl\langle  d_2^{0}(0), \partial_{x_1} d_{3}^{0}(0)\bigr\rangle  & \bigl\langle  d_2^{0}(0), \partial_{x_2} d_{3}^{0}(0)\bigr\rangle  &0 \\
	0&0&0
	\end{pmatrix}\end{footnotesize}\,[\nabla_x \Theta(0)]^{-1}\notag
	 \\
	&\,=\,Q_0(0)\,Q_0^T(0)\, \Big(\partial_{x_1}d_{3}^{0}(0)\,|\, \partial_{x_2}d_{3}^{0}(0)\,|\,0\Big)\,[\nabla_x \Theta(0)]^{-1}\,=\,(\nabla n_0|0)\,[\nabla_x \Theta(0)]^{-1}\,=\,-{\rm B}_{y_0}.\qedhere\notag
	\end{align}

\end{proof}

\subsection{Neumann condition  on the transverse boundary: an alternative approach}\label{AppendixNeumann}

One can also assume that  on the transverse boundary (upper and lower face of the fictitious Cartesian configuration $\Omega_h$) the Neumann condition 
\begin{equation}\label{bc01}
S_1\bigg( \nabla_x \varphi \big(x_1,x_2,\pm \frac{h}{2}\big), \overline{R}_s\big(x_1,x_2,\pm \frac{h}{2}\big)\bigg)\,(\pm e_3)\,=\,0
\end{equation}
holds.
Using \eqref{pk0}, this implies 
\begin{equation}\label{bcn1}
\bigl\langle  {T}_{\rm Biot}\bigg( \overline{U}_{e,s} \big(x_1,x_2,\pm \frac{h}{2}\big)\bigg)\, n_0,n_0\bigr\rangle\,=\,0\, .
\end{equation}

Hence, for these boundary conditions, we obtain the following linear algebraic system of  equations for the two unknown functions $\varrho_m^e$ and $\varrho_b^e$
\begin{align}
\dd\varrho_m^e\bigg[&(\lambda+2\mu)\pm \lambda\frac{h}{2} \bigl\langle  \overline{Q}_{e} ^T(\nabla (\,\overline{Q}_{e} \nabla_x\Theta(0)\, e_3)|0),\left[\nabla_x\Theta(\pm \frac{h}{2})\right]^{-T}\bigr\rangle \bigg] \\\nonumber
&\dd+\varrho_b^e\bigg[\pm\frac{h}{2}(\lambda+2\mu)+\lambda \frac{h^2}{8} \bigl\langle  \overline{Q}_{e} ^T(\nabla (\,\overline{Q}_{e} \nabla_x\Theta(0)\, e_3)|0),\left[\nabla_x\Theta(\pm \frac{h}{2})\right]^{-T}\bigr\rangle \bigg] \\\nonumber
&\ \ \ \  \dd\,=\,\lambda+2\mu-\lambda[ \bigl\langle  \overline{Q}_{e}^T (\nabla m|0),\left[\nabla_x\Theta(\pm \frac{h}{2})\right]^{-T}\bigr\rangle -2]\, .
\end{align}

Further,  the dependence of  $\nabla_x\Theta(x_1,x_2, x_3)$ on $x_3$ will be denoted by $\nabla_x\Theta(x_3)$.
The above linear algebraic system is equivalent with the  system
\begin{align}
\dd\varrho_m^e\bigg[&2(\lambda+2\mu)+ \lambda\frac{h}{2} \bigl\langle  \overline{Q}_{e}^T(\nabla (\,\overline{Q}_{e}\nabla_x\Theta(0)\, e_3)|0),\left[\nabla_x\Theta(\frac{h}{2})\right]^{-T}-\left[\nabla_x\Theta(-\frac{h}{2})\right]^{-T}\bigr\rangle \bigg] \nonumber \\\nonumber
&\dd+\varrho_b^e\bigg[\lambda \frac{h^2}{8} \bigl\langle  \overline{Q}_{e}^T(\nabla (\,\overline{Q}_{e}\nabla_x\Theta(0)\, e_3)|0),\left[\nabla_x\Theta(\frac{h}{2})\right]^{-1}+\left[\nabla_x\Theta(-\frac{h}{2})\right]^{-T}\bigr\rangle \bigg] \nonumber \\
&\ \ \ \ \dd\,=\,2(\lambda+2\mu)-\lambda[ \bigl\langle  \overline{Q}_{e}^T(\nabla m|0),\left[\nabla_x\Theta(\frac{h}{2})\right]^{-T}+\left[\nabla_x\Theta(-\frac{h}{2})\right]^{-T}-4] , 
 \\
\dd\varrho_m^e & \lambda\frac{h}{2} \bigl\langle  \overline{Q}_{e}^T(\nabla (\,\overline{Q}_{e}\nabla_x\Theta(0)\, e_3)|0),\left[\nabla_x\Theta(\frac{h}{2})\right]^{-T}+\left[\nabla_x\Theta(-\frac{h}{2})\right]^{-T}\bigr\rangle \bigg] \nonumber \\\nonumber
&\dd+\varrho_b^e\bigg[{h}(\lambda+2\mu)+\lambda \frac{h^2}{8} \bigl\langle  \overline{Q}_{e}^T(\nabla (\,\overline{Q}_{e}\nabla_x\Theta(0)\, e_3)|0),\left[\nabla_x\Theta(\frac{h}{2})\right]^{-T}-\left[\nabla_x\Theta(-\frac{h}{2})\right]^{-T}\bigr\rangle \bigg] \nonumber \\
&\ \ \ \ \dd\,=\,-\lambda \bigl\langle  \overline{Q}_{e}^T(\nabla m|0),\left[\nabla_x\Theta(\frac{h}{2})\right]^{-T}-\left[\nabla_x\Theta(-\frac{h}{2})\right]^{-T}\bigr\rangle \, .\notag
\end{align}

The determinant of this system is
$
\dd\delta_1(h)\,=\,h\,\delta_2(h),
$ 
where
\begin{align}
\delta_2(h)\,=&\,
2\,(\lambda+2\mu)^2+\frac{3h}{4}\lambda(\lambda+2\mu) \bigl\langle  \overline{Q}_{e}^T(\nabla (\,\overline{Q}_{e}\nabla_x\Theta(0)\, e_3)|0),\left[\nabla_x\Theta(\frac{h}{2})\right]^{-T}-\left[\nabla_x\Theta(-\frac{h}{2})\right]^{-T}\bigr\rangle   \\\nonumber
&\dd-\frac{h^2\lambda^2}{4} \bigl\langle  \overline{Q}_{e}^T (\nabla (\,\overline{Q}_{e}\nabla_x\Theta(0)\, e_3)|0),\left[\nabla_x\Theta(\frac{h}{2})\right]^{-T}\bigr\rangle  \bigl\langle  \overline{Q}_{e}^T(\nabla (\,\overline{Q}_{e}\nabla_x\Theta(0)\, e_3)|0),\left[\nabla_x\Theta(-\frac{h}{2})\right]^{-T}\bigr\rangle \,>0\ \  \text{for}\ \ h\ll 1 .
\end{align}
We define the  quantities
\begin{align}
M^{h,+}&\,=\,\overline{Q}_{e}^T(\nabla (\,\overline{Q}_{e}\nabla_x\Theta(0)\, e_3)|0)\left\{\left[\nabla_x\Theta(\frac{h}{2})\right]^{-1}+\left[\nabla_x\Theta(-\frac{h}{2})\right]^{-1}\right\},\notag \\ M^{h,-}&\,=\,\overline{Q}_{e}^T(\nabla (\,\overline{Q}_{e}\nabla_x\Theta(0)\, e_3)|0)\left\{\left[\nabla_x\Theta(\frac{h}{2})\right]^{-1}-\left[\nabla_x\Theta(-\frac{h}{2})\right]^{-1}\right\}, \\
N^{h,+}&\,=\,\overline{Q}_{e}^T(\nabla m|0)\left\{\left[\nabla_x\Theta(\frac{h}{2})\right]^{-1}+\left[\nabla_x\Theta(-\frac{h}{2})\right]^{-1}\right\},\qquad  N^{h,-}\,=\,\overline{Q}_{e}^T(\nabla m|0)\left\{\left[\nabla_x\Theta(\frac{h}{2})\right]^{-1}-\left[\nabla_x\Theta(-\frac{h}{2})\right]^{-1}\right\}\,. \notag
\end{align}

Then, the exact solution is given by
\begin{equation}\hspace{-1cm}
\begin{array}{crl}
&&\dd\left(\!
\begin{array}{l}
\varrho_m^e\vspace{2mm}\\
\varrho_b^e
\end{array}\!\!\right)\,=\,\delta_1(h)^{-1}\left(\!\!\begin{array}{ll}
(\lambda+2\mu)+\lambda \frac{h}{8} \bigl\langle  M^{h,-},\id_3\bigr\rangle  
&
-\lambda \frac{h}{8} \bigl\langle  M^{h,+},\id_3\bigr\rangle \vspace{2mm}\nonumber \\
-\frac{\lambda}{2} \bigl\langle  M^{h,+},\id_3\bigr\rangle  
&  (\lambda+2\mu)\frac{2}{h}+ \frac{\lambda}{2} \bigl\langle  M^{h,-},\id_3\bigr\rangle 
\end{array}\!\!\!\right)\vspace{3mm}
\left(\begin{array}{l}
2(\lambda+2\mu)-\lambda[ \bigl\langle   N^{h,+},\id_3\bigr\rangle -4]\vspace{2mm}\nonumber \\
-\lambda \bigl\langle  N^{h,-},\id_3\bigr\rangle 
\end{array}\!\!\right)\, .
\end{array}
\end{equation}
Consider the quantities $
C_1 \,=\,(\nabla y_0|n_0), C_2 \,=\,(\nabla n_0|0)\, .
$ 
Then
$
\nabla_x \Theta(x_3)\,=\,C_1+C_2 \,x_3\, .
$
Moreover
\begin{equation}\label{x3det}
[\nabla_x \Theta(x_3)]^{-1}\,=\,[\id_3 +x_3C_1^{-1}C_2]^{-1}C_1^{-1}\,=\,[\id_3-x_3C_1^{-1}C_2+x_3^2(C_1^{-1}C_2)^2-x_3^3(C_1^{-1}C_2)^3+...]C_1^{-1}\, .
\end{equation}
Due to the fact that $C_1$ and $C_2$ do not depend on $x_3$, we have
\begin{align}\label{exdet}
\left[\nabla_x \Theta(\dd\frac{h}{2})\right]^{-1}-\left[\nabla_x \Theta(-\frac{h}{2})\right]^{-1}&\,=\,-2\,
\left[\dd\frac{h}{2}C_1^{-1}C_2+\bigg(\frac{h}{2}\bigg)^3(C_1^{-1}C_2)^3+...\right]C_1^{-1}, \\
\left[\nabla_x \Theta(\dd\frac{h}{2})\right]^{-1}+\left[\nabla_x \Theta(-\frac{h}{2})\right]^{-1}&\,=\,2\,
\left[\id_3+\dd\bigg(\frac{h}{2}\bigg)^2(C_1^{-1}C_2)^2+\bigg(\frac{h}{2}\bigg)^4(C_1^{-1}C_2)^4+...\right]C_1^{-1}.\notag
\end{align}
Hence, we deduce
\begin{align}
M^{h,+}&\,=\,2\,\overline{Q}_{e}^T(\nabla (\,\overline{Q}_{e}\nabla_x\Theta(0)\, e_3)|0)[\nabla_x\Theta(0)]^{-1}+o(h^2),\notag \\ 
N^{h,+}&\,=\,2\,\overline{Q}_{e}^T(\nabla m|0)[\nabla_x\Theta(0)]^{-1}+o(h^2), \\ M^{h,-}&\,=\,o(h),\quad N^{h,-}\,=\,-h\,\overline{Q}_{e}^T(\nabla m|0)  [\nabla_x\Theta(0)]^{-1} \,(\nabla n_0|0)\,[\nabla_x\Theta(0)]^{-1} +o(h^2),\notag \\
\delta_1&\,=\,
2(\lambda+2\mu)^2+o(h).\notag
\end{align}
and 
\begin{align}\label{exprho}
\varrho_m^e\,=\,\frac{1}{\delta_1(h)}\Big\{ &[(\lambda+2\mu)+\lambda \frac{h}{8} \bigl\langle  M^{h,-},\id_3\bigr\rangle  
][2(\lambda+2\mu)-\lambda[ \bigl\langle   N^{h,+},\id_3\bigr\rangle -4]+[\lambda \frac{h}{8} \bigl\langle  M^{h,+},\id_3\bigr\rangle ][\lambda \bigl\langle  N^{h,-},\id_3\bigr\rangle ]  \Big\}, \\
\varrho_b^e\,=\,\frac{1}{\delta_1(h)}\Big\{ &[-\frac{\lambda}{2} \bigl\langle  M^{h,+},\id_3\bigr\rangle  ][2(\lambda+2\mu)-\lambda[ \bigl\langle   N^{h,+},\id_3\bigr\rangle -4]-[ (\lambda+2\mu)\frac{2}{h}+ \frac{\lambda}{2} \bigl\langle  M^{h,-},\id_3\bigr\rangle ][\lambda \bigl\langle  N^{h,-},\id_3\bigr\rangle ]  \Big\}.\notag
\end{align}

If we  take $h\rightarrow 0$ (as appropriate for a very thin shell) in the  expressions (\ref{exprho}) of $\varrho_m^e$ and $\varrho_b^e$, then we obtain

\begin{align}\label{exprho1}
\varrho_m^e\,=\,&\frac{1}{
	2(\lambda+2\mu)^2}\Big\{ (\lambda+2\mu)
[2(\lambda+2\mu)-\lambda[2\, \bigl\langle   \overline{Q}_{e}^T(\nabla m|0)[\nabla_x\Theta(0)]^{-1},\id_3\bigr\rangle -4] \Big\},\notag \\
\,=\,& 1-\frac{\lambda}{\lambda+2\mu}[\, \bigl\langle   \overline{Q}_{e}^T(\nabla m|0)[\nabla_x\Theta(0)]^{-1},\id_3\bigr\rangle -2] ,\notag \\
\varrho_b^e\,=\,&\frac{1}{
	2(\lambda+2\mu)^2}\Big\{ -\frac{\lambda}{2} \bigl\langle  2\,\overline{Q}_{e}^T(\nabla (\,\overline{Q}_{e}\nabla_x\Theta(0)\, e_3)|0)[\nabla_x\Theta(0)]^{-1},\id_3\bigr\rangle  \notag \\&\qquad\qquad\qquad\qquad\times \Big[2(\lambda+2\mu)-\lambda[ \bigl\langle  2\,\overline{Q}_{e}^T(\nabla m|0)[\nabla_x\Theta(0)]^{-1},\id_3\bigr\rangle -4]\Big]\notag \\&\qquad\qquad\qquad+ 2\,(\lambda+2\mu) \lambda \bigl\langle \overline{Q}_{e}^T(\nabla m|0)  [\nabla_x\Theta(0)]^{-1} \,(\nabla n_0|0)\,[\nabla_x\Theta(0)]^{-1},\id_3\bigr\rangle ]  \Big\},\notag \\\,=\,&
-\frac{\lambda}{
	\lambda+2\mu} \bigl\langle  \overline{Q}_{e}^T(\nabla (\,\overline{Q}_{e}\nabla_x\Theta(0)\, e_3)|0)[\nabla_x\Theta(0)]^{-1},\id_3\bigr\rangle  
 \\&+ \frac{\lambda}{
	\lambda+2\mu}  \bigl\langle \overline{Q}_{e}^T(\nabla m|0)  [\nabla_x\Theta(0)]^{-1} \,(\nabla n_0|0)\,[\nabla_x\Theta(0)]^{-1},\id_3\bigr\rangle ] \notag \\
&+\frac{\lambda^2}{
	(\lambda+2\mu)^2} \bigl\langle  \overline{Q}_{e}^T(\nabla (\,\overline{Q}_{e}\nabla_x\Theta(0)\, e_3)|0)[\nabla_x\Theta(0)]^{-1},\id_3\bigr\rangle  [ \bigl\langle  \,\overline{Q}_{e}^T(\nabla m|0)[\nabla_x\Theta(0)]^{-1},\id_3\bigr\rangle -2]\Big].\notag
\end{align}
Ignoring the 
term $\frac{\lambda^2}{(\lambda+2\mu)^2} \bigl\langle  \overline{Q}_{e}^T (\nabla (\,\overline{Q}_{e}\nabla_x\Theta(0)\, e_3)|0)[\nabla_x\Theta(0)]^{-1},\id_3\bigr\rangle [ \bigl\langle  \overline{Q}_{e}^T(\nabla m|0)[\nabla_x\Theta(0)]^{-1},\id_3\bigr\rangle -2]$ we obtain the same expressions for  $\varrho_m^e$ and $\varrho_b^e$ as in \eqref{final_rho}, but here considering the Neumann condition \eqref{bc01} instead of the approximated boundary conditions \eqref{conditii}. Therefore, our result in \eqref{final_rho} is correct.

\end{footnotesize}
\end{document}